\numberwithin{equation}{section}
\definecolor{SmartBlue}{RGB}{51, 51, 255}
\newtheorem{theorem}{Theorem}[section]
\newtheorem{proposition}[theorem]{Proposition}
\newtheorem{lemma}[theorem]{Lemma}
\newtheorem{definition}[theorem]{Definition}
\newtheorem{corollary}[theorem]{Corollary}
\newtheorem{remark}[theorem]{Remark}
\newtheorem*{orient*}{Orientation Condition}
\newtheorem*{junction*}{Junction Conditions}
\def\defi{{\stackrel{\mbox{\tiny {\textbf{def}}}}{\,\, = \,\, }}}
\def\bsff{\textbf{\textup{K}}}
\def\btsff{\widetilde{\bsff}{}}
\def\nablao{{\stackrel{\circ}{\nabla}}}
\def\Riemo{{\stackrel{\circ}{R}}}
\def\sone{\hat{s}}
\def\bsone{\bs{\hat{s}}}
\def\metdata{\{\mathcal{N},\gamma,\ellc,\elltwo\}}
\def\normal{\mathfrak{q}}
\def\bnormal{\bs{\normal}}
\def\G{\mathfrak{G}}
\def\fv{\epsilon}
\def\n{\mathfrak{n}}
\def\D{\mathcal D}
\def\G{\mathcal G}
\def\Fcal{\mathcal F}
\def\uD{\underline{\D}}
\def\ugamma{\underline{\gamma}}
\def\uellc{\underline{\bs{\ell}}}
\def\uelltwo{\underline{\ell}^{(2)}}
\def\bY{\textup{\textbf{Y}}}
\def\Y{\textup{Y}}
\def\bU{\textup{\textbf{U}}}
\def\U{\textup{U}}
\def\bk{\textup{\textbf{K}}}
\def\bF{\textup{\textbf{F}}}
\def\F{\textup{F}}
\def\Yn{r}
\def\Q{Q}
\newcommand{\hatD}{\widehat{\D}}
\newcommand{\hatg}{\widehat{\gamma}}
\newcommand{\hatellc}{\widehat{\ellc}}
\newcommand{\hatelltwo}{\widehat{\ell}^{(2)}}
\newcommand{\hatbY}{\widehat{\bY}}
\newcommand{\hatbU}{\widehat{\bU}}
\newcommand{\hattau}{\widetilde{\tau}}
\newcommand{\hatV}{V'}
\newcommand{\hatn}{\widehat{n}}
\newcommand{\hatP}{\widehat{P}}
\def\MHD{\text{metric hypersurface data}}
\newcommand{\nn}{\nonumber}
\newcommand{\bm}[1]{\mbox{\boldmath $#1$}}
\newcommand\ovnabla{\ov{\nabla}}
\newcommand\N{\mathcal N}
\newcommand\M{\mathcal M}
\newcommand\elltwo{\ell^{(2)}}
\newcommand\ntwo{n^{(2)}}
\newcommand\hypdata{\{ \mathcal{N},\gamma,\ellc, \elltwo, \bY\}}
\newcommand\rig{\zeta}
\newcommand\A{\mathcal A}
\def\bmell{\bm{\ell}}
\def\ellc{\bmell}
\def\nablao{{\stackrel{\circ}{\nabla}}}
\def\defi{{\stackrel{\mbox{\tiny \textup{\textbf{def}}}}{\,\, = \,\, }}}
\def\nablao{{\stackrel{\circ}{\nabla}}}
\def\Riemo{{\stackrel{\circ}{R}}}
\def\nabh{\nabla^h}
\def\sone{s}
\def\bsone{\bs{s}}
\def\s{\mathfrak{r}}
\def\metdata{\{\mathcal{N},\gamma,\ellc,\elltwo\}}
\def\sgn{\textup{sign}}
\def\n{\mathfrak{n}}
\def\D{\mathcal D}
\def\G{\mathcal G}
\def\Fcal{\mathcal F}
\def\uD{\underline{\D}}
\def\ugamma{\underline{\gamma}}
\def\uellc{\underline{\bs{\ell}}}
\def\uelltwo{\underline{\ell}^{(2)}}
\def\bY{\textup{\textbf{Y}}}
\def\Y{\textup{Y}}
\def\bUp{\bU_{\parallel}}
\def\ellp{\ell}
\def\bU{\textup{\textbf{U}}}
\def\U{\textup{U}}
\def\bk{\textup{\textbf{K}}}
\def\bF{\textup{\textbf{F}}}
\def\F{\textup{F}}
\def\Yn{r}
\def\Q{\kappa_n}
\def\bmell{\bm{\ell}}
\def\ellc{\bmell}
\def\nablao{{\stackrel{\circ}{\nabla}}}
\newcommand{\bJ}{\textbf{\J}}
\newcommand{\J}{\textup{J}}
\newcommand{\ov}{\overline}
\newcommand{\nfi}{\varphi}
\newcommand{\cp}{\partial}
\newcommand{\bs}{\boldsymbol}
\newcommand{\lp}{\left(}
\newcommand{\rp}{\right)}
\newcommand{\cu}{\mathcal{U}}
\newcommand{\cv}{\mathcal{V}}
\newcommand{\Mp}{\mathcal{M^+}}
\newcommand{\Ml}{\mathcal{M^-}}
\newcommand{\lb}{\left\lbrace}
\newcommand{\rb}{\right\rbrace}
\newcommand{\lc}{\left[}
\newcommand{\rc}{\right]}
\newcommand{\ld}{\left.}
\newcommand{\rd}{\right.}
\newcommand{\rv}{\right\vert}
\newcommand{\la}{\langle}
\newcommand{\rag}{\rangle_{g}}
\newcommand{\Mpm}{\mathcal{M}^{\pm}}
\newcommand{\tdo}{d}
\newcommand{\lieo}{\mathsterling}
\newcommand{\uhat}{\underaccent{\check}}
\newcommand{\wt}{\widetilde}
\newcommand{\tcr}{}
\newcommand{\tcb}{}
\newcommand{\nullhyp}{\widetilde{\N}}
\newcommand{\spc}{\textup{ }}
\newcommand{\ke}{\widetilde{\kappa}}
\newsavebox\myboxA
\newsavebox\myboxB
\newlength\mylenA
\newcommand*\xoverline[2][0.75]{%
    \sbox{\myboxA}{$\m@th#2$}%
    \setbox\myboxB\null
    \ht\myboxB=\ht\myboxA%
    \dp\myboxB=\dp\myboxA%
    \wd\myboxB=#1\wd\myboxA
    \sbox\myboxB{$\m@th\overline{\copy\myboxB}$}
    \setlength\mylenA{\the\wd\myboxA}
    \addtolength\mylenA{-\the\wd\myboxB}%
    \ifdim\wd\myboxB<\wd\myboxA%
       \rlap{\hskip 0.5\mylenA\usebox\myboxB}{\usebox\myboxA}%
    \else
        \hskip -0.5\mylenA\rlap{\usebox\myboxA}{\hskip 0.5\mylenA\usebox\myboxB}%
    \fi}
 \newcounter{mnotecount}
 \newcommand{\mnote}[1]
 {\protect{\stepcounter{mnotecount}}$^{\mbox{\tiny
 $\,\bullet$\themnotecount}}$ \marginpar{
 \raggedright\tiny\em
 $\,\bullet$\themnotecount: #1} }
\title{Abstract Formulation of the Spacetime Matching Problem\\ and Null Thin Shells}
\author{Miguel Manzano\footnote{miguelmanzano06@usal.es}\hspace{0.17cm} and Marc Mars\footnote{marc@usal.es}\\ \\
Instituto de F\'{\i}sica Fundamental y Matem\'aticas, IUFFyM\\
Universidad de Salamanca\\
Plaza de la Merced s/n \\
37008 Salamanca, Spain\\
}
\begin{document}

\setlength{\abovedisplayskip}{0.15cm}
\setlength{\belowdisplayskip}{0.15cm}

\maketitle

\begin{abstract}
The formalism of hypersurface data is a framework to study hypersurfaces of any causal
character \textit{abstractly} (i.e.\ without the need of viewing them as embedded in an ambient space). 
In this paper we exploit this formalism to study the general problem of matching two spacetimes in a fully abstract manner, as this turns out to be advantageous over other approaches in several respects.
  We then concentrate on the case when the boundaries are null
  and prove that the whole matching is determined by a diffeomorphism $\nfi$ on the abstract data set. By exploiting the gauge structure  of the formalism
we find \textit{explicit} expressions for the gravitational/matter-energy content of \textit{any} null thin shell. The results hold for arbitrary topology. A particular  case of interest is when more than one matching is allowed. Assuming that one such  matchings has already been solved, we provide
explicit expressions for the gravitational/matter-energy content of any other shell in terms of the known one.  This situation covers, in particular,
all cut-and-paste constructions, where  one can simply take as known matching the trivial re-attachment of the two regions. 
 We include, as an example, the most general 
matching of
 two regions of the (anti-)de Sitter or Minkowski spacetime across a totally geodesic null hypersurface.
\end{abstract}

%
%

\section{Introduction}\label{sec:Intro:Paper}
The question of under which conditions two spacetimes can be matched across a hypersurface and give rise to a new spacetime is a fundamental problem in any metric theory of gravity. In particular, a matching theory is required 
in any physical situation where a substantial amount of gravitational/matter-energy content is located in a thin enough region of the spacetime (with respect to the dimensions of the problem). 
Then, the matter-content can be modelled as concentrated on a hypersurface. This thin 
shell
of gravitational/matter-energy possesses its own gravity and 
hence affects the spacetime geometry, and 
it is worth finding the
relationship between the 
shell's 
content 
and the properties of 
\tcb{the} spacetime. 

Many authors have contributed to the 
matching 
problem 
in General Relativity, 
see e.g.\ the works  
\cite{darmois1927memorial}, 
\cite{obrien1952jump}, 
\cite{le1955theories}, 
\cite{israel1966singular}, 
\cite{bonnor1981junction},  
\cite{clarke1987junction},  
\cite{barrabes1991thin},  
\cite{mars1993geometry},   
\cite{mars2007lorentzian},  
\cite{mars2013constraint}, 
\cite{senovilla2018equations}.     
The standard approach consists of considering two 
spacetimes 
$(\Mpm,g^{\pm})$ with 
boundaries $\nullhyp^{\pm}$.  
%
For the matching to be possible $\nullhyp^{\pm}$ must be diffeomorphic, i.e.\ 
there must exist a 
diffeomorphism $\Phi:\nullhyp^-\longrightarrow\nullhyp^+$, which we call \textit{matching map}.
One then defines 
the resulting 
spacetime $\M$ 
as the union of $\M^+$ and $\M^-$ 
with the corresponding identification 
of 
boundary points (ruled by $\Phi$). 
%
\tcb{
The necessary and sufficient conditions 
for a metric $g$ to exist on  $\M$ are the so-called (preliminary) 
\textit{matching conditions} 
(or \textit{junction conditions}) and require
$(i)$ that 
the first fundamental forms $\gamma^{\pm}$
from both boundaries coincide, 
$(ii)$ that there exists two riggings $\rig^{\pm}$
(i.e.\ vector fields along $\nullhyp^{\pm}$, everywhere transversal to them)
with the same square norm and such that 
the one-forms $g^{\pm}(\rig^{\pm},\cdot)$ coincide and $(iii)$ that $\rig^{\pm}$ are such that one points inwards and the other outwards. 
When these conditions are fulfilled, the matched spacetime exists. In general, this spacetime will contain a thin shell, which is ruled by the jump in the extrinsic geometry of the matching hypersurfaces.}

\tcb{In addition to this standard approach (also called
\textit{\`a la Darmois})},  
one can also construct 
\textit{null} thin shells with
the so-called \textit{cut-and-paste method} (see e.g.\ \cite{penrose1965remarkable}, 
\cite{dewitt1968battelle}, 
\cite{Penrose:1972xrn},  \cite{podolsky1999nonexpanding}, \cite{podolsky1999expanding}, 
\cite{podolsky2002exact}, 
\cite{griffiths2009exact}, 
\cite{podolsky2014gyratonic}, 
\cite{podolsky2017penrose}, 
\cite{podolsky2019cut}, 
\cite{podolsky2022penrose}),  
%
%
%
where 
the shell 
is described 
via  
a metric with a Dirac delta distribution with support on the matching hypersurface. 
The shell is built by taking a 
spacetime $\lp\mathcal{M},g\rp$  
with
a null hypersurface $\nullhyp\subset\M$, then \textit{cutting} $\M$ along $\nullhyp$,     
which leaves two 
spacetimes $\lp\Mpm,g^{\pm}\rp$,    
and finally reattaching (or \textit{pasting}) $\lp\Mpm,g^{\pm}\rp$ by identifying 
the boundary points so that there exists a jump on  the null direction 
on the matching hypersurface. 

Be that as it may, null shells have been widely studied in the literature  (\tcb{for a sample}, see
\cite{barrabes2003singular}, 
\cite{carballo2021inner}, 
\cite{bhattacharjee2020memory},
\cite{nikitin2018stability},
\cite{chapman2018holographic},
\cite{binetruy2018closed},
\cite{kokubu2018energy},
\cite{fairoos2017massless}), usually by imposing additional symmetries (such as spherical symmetry). In particular, the problem of matching two completely general spacetimes $(\Mpm,g^{\pm})$ with null boundaries $\nullhyp^{\pm}$ has been recently addressed in \cite{manzano2021null} under the \tcb{only} assumption that $\nullhyp^{\pm}$ admit a foliation by diffeomorphic spacelike cross-sections. \tcb{One of the main results in \cite{manzano2021null}  is that all the
  information about the matching is codified in}
a diffeomorphism $\Psi$ between the set of null generators of $\nullhyp^{\pm}$ and function $H$, called \textit{step function}, \tcb{which} corresponds to a shift along the null generators.
\tcb{Another result of interest is that,} 
although generically two given spacetimes can be matched  \textit{at most} in one manner, sometimes \tcb{multiple matchings are possible. A relevant case
of the later, studied in detail in 
\cite{manzano2022general}, is the matching across so-called Killing horizons of order zero.}

\tcb{The matching problem is studied in 
  \cite{manzano2021null}, \cite{manzano2022general} }
by means of the so-called \textit{formalism of hypersurface data} \cite{mars2013constraint}, \cite{mars2020hypersurface} (see also
\cite{mars1993geometry},
\cite{manzano2023constraint},  \cite{Mars2023first}, \cite{mars2023covariant}, \cite{manzano2023field}, \cite{manzano2023master}), \tcb{which}
allows one to codify \textit{abstractly} (i.e.\ in a detached way from an ambient manifold) the intrinsic and extrinsic geometric information of a hypersurface in terms of a \textit{data set} $\D\defi\hypdata$. \tcb{The part
  $\metdata$ is called \textit{metric hypersurface data} and 
  codifies at the abstract level
  the would-be components of the full ambient metric $g$ at the hypersurface.
  The tensor $Y$ codifies extrisinc information. The formalism 
  is equipped with a group of gauge transformations that accounts for the fact that, at the embedded level, the choice of a rigging 
is non-unique. \textit{Two data sets are equivalent if they are related by a gauge transformation.}} Each gauge group element $\G_{(z,V)}\subset\G$ is determined by a nowhere-zero function $z$ and a vector field $V$ in $\N$.

In the language of the \tcb{hypersurface data formalism}, the matching can be performed if and only if \cite{mars2013constraint} one can embed a single metric hypersurface data set 
in both spacetimes (and the corresponding matching riggings satisfy the orientation condition $(iii)$ above).  
In that case, the gravitational/matter-energy content of the shell is fully codified by the jump of the tensors $\bY^{\pm}$ of each side, namely $[\bY]\defi \bY^+-\bY^-$ \cite{mars2013constraint}. 
\tcb{The approach in \cite{manzano2021null}, \cite{manzano2022general}, whilst based on this formalism, still analyzes the matching in terms of the embeddings $\phi^{\pm}$ of  the abstract manifold $\N$ in $\M^{\pm}$ and not directly at a detached level. } 
Two questions arise naturally. The first one is whether there is a way of formulating the matching problem in a fully abstract manner (that is, exclusively in terms of objects defined in the abstract manifold $\N$) so that one does not need to make any reference to the actual spacetimes to be matched. The second is whether one can generalize the results in \cite{manzano2021null}, \cite{manzano2022general} to boundaries with arbitrary topology.
\tcb{The aim of this paper is to answer both questions}.

The first question is solved in Theorem \ref{thm:Best:THM:JC}, where we provide a completely abstract version of the (spacetime) matching conditions. The theorem establishes that two given data sets $\D=\{\N,\gamma,\ellc,\elltwo,\bY^-\}$, $\hatD=\{\N,\hatg,\hatellc,\hatelltwo,\hatbY^+\}$ (each of them should be thought of as an abstraction of one of the boundaries)
can be matched provided that there exists a diffeomorphism $\nfi$ of $\N$ onto itself such that the metric hypersurface data sets $\{\N,\gamma,\ellc,\elltwo\}$, $\{\N,\nfi^{\star}\hatg,\nfi^{\star}\hatellc,\nfi^{\star}\hatelltwo\}$ are 
related by a gauge transformation $\G_{(z,V)}$. 
This can be interpreted as follows.
The map $\nfi$ can be understood as an abstract version of the (spacetime) matching map $\Phi$ mentioned before. 
Since the matching requires that one single metric hypersurface data set is embedded in both spacetimes, 
$\D$ and $\hatD$ 
cannot be arbitrarily different. Instead, there must exist a gauge transformation that compensates for the change induced by $\nfi$ so that, even after applying the pull-back $\nfi^{\star}$, \tcb{the metric part of the data} are still equivalent. Theorem \ref{thm:Best:THM:JC} also imposes a restriction on 
the sign of 
$z$. As we shall see, at the embedded level 
this restriction ensures that the orientation of the 
riggings to be identified in the matching process verifies condition $(iii)$ above. 

When the data sets $\D$, $\hatD$ are embedded in two spacetimes, 
Theorem \ref{thm:Best:THM:JC} is equivalent to the standard matching conditions $(i)$-$(iii)$  
above. \tcb{This result is relevant for several reasons.}
First, it applies \tcb{to} (abstract) hypersurfaces of any causality and any topology.  
Secondly,  
the gauges of the data sets $\D$, $\hatD$ are unfixed so that
at the embedded level 
there is full freedom in the \tcb{a priori choice of the riggings on each side. This gives a lot of flexibility to the framework}. 
Finally, having formulated the matching problem abstractly allows one \tcb{analyze in an independent manner} thin shells with specific 
gravitational/matter-energy content 
and, 
\tcb{on a second stage, study} whether they 
can be embedded in a spacetime.  
This is useful e.g.\ for constructing examples of spacetimes containing certain types of shells.


\tcb{We then concentrate on the null case. We intend to
 generalize the works \cite{manzano2021null}, \cite{manzano2022general} so we impose no topological conditions on the boundaries}
We prove that a (null) metric hypersurface data set $\metdata$ is entirely codified by $\gamma$ and that the remaining metric data is pure gauge. In these circumstances, the feasibility of the matching relies on the tensors $\{\gamma,\hatg\}$ satisfying $\nfi^{\star}\hatg=\gamma$ (and $z$ having suitable sign). \tcb{One of our main results  in the paper is  that we  find} explicit expressions for the riggings to be identified in the matching process, \tcb{as well as of}
the gravitational/matter-energy content of the resulting shell (Theorem \ref{thm:Best:THM:matching}). Specifically, we compute \tcb{\textit{explicitly}} the jump $[\bY]$ and the energy-momentum tensor of the shell in terms of $\D$, $\hatD$ and $\nfi$. \tcb{In particular we provide fully geometric definitions of the energy density $\rho$, energy flux $j$ and pressure $p$ of the shell (Remark  \ref{rem:def:eg:flux:pressure}) and find explicit expression for them. We also codify  the purely gravitational content of a null shell in a tensor $\bY^{\mathfrak{G}}$, which we also compute explicitly. We emphazise that all \tcb{these} result hold for   \textit{any} possible null thin shell.}
  The pressure $p$ of the null shell is worth studying in further detail. It turns out that it can be expressed as a difference of the surface gravities (i.e.\ the ``accelerations") of two null generators of $\N$ related by the push-forward map $\nfi_{\star}$. This \tcb{generalizes previous results in} \cite{manzano2021null}, \cite{manzano2022general}, where \tcb{in specific examples} we noticed that $p$ accounts for an effect of compression/stretching of points when crossing the matching hypersurface.

With the abstract matching formalism, 
one also recovers the property from \cite{manzano2021null}, \cite{manzano2022general} that when the data sets $\D$, $\hatD$ define abstract totally geodesic null hypersurfaces, then an infinite number of matchings are feasible. This situation 
is addressed in Section \ref{sec:mult:mathcings} where, 
assuming that all the information about one of the matchings is  \tcb{known}, we prove that the gravitational/matter energy content of \textit{all} the remaining 
matchings \tcb{can be determined easily and explicity in terms of the known one and the map $\varphi$, with no need} of performing additional calculations.  
The \tcb{specific} case when one of the matchings gives rise to no shell \tcb{is of particular interest} because 
it \tcb{includes} all cut-and-paste constructions\footnote{Note that two regions of the same spacetime can always be matched so that the resulting manifold contains no shell.}. In this context, 
we find 
explicit expressions for the gravitational/matter-energy content of \textit{any} null thin shell constructed with the cut-and-paste method. These results are applied in Section \ref{sec:cut-and-paste:abs}, where we study the matching of two regions of a constant-curvature spacetime across a totally geodesic null hypersurface. 

For the sake of consistency, \tcb{we devote Section
  \ref{sec:matching:fol:abs} to showing how the results in \cite{manzano2021null} are recovered as a particular case of the general framework presented here in the specific case when}
$\N$ can be foliated by spacelike diffeomorphic cross-sections.

The structure of the paper is as follows. In Section \ref{sec:Prelim:matching:abs} we \tcb{review the} results on the geometry of embedded null hypersurfaces, formalism of hypersurface data and matching of spacetimes \tcb{that are needed}  later. In Section \ref{sec:matchingABSTRACT} we provide an abstract formulation of the matching problem. The rest of the paper concentrates on the null case. In particular, Section \ref{sec:null:boundaries:abstract} is devoted to studying \tcb{the} properties of completely general null thin shells and finding explicit expressions for their gravitational/matter-energy content, while  
Section \ref{sec:mult:mathcings} addresses the case when multiple matchings are feasible. In Section \ref{sec:matching:fol:abs}, we establish \tcb{the} connection between the results in \cite{manzano2021null} and the abstract matching formalism \tcb{developed here}. The paper concludes with an example \tcb{where we study all possible matchings involving two regions separated by a totally geodesic null hypersurface in the
 (anti-)de Sitter or Minkowski spacetimes} (Section \ref{sec:cut-and-paste:abs}).

\subsection{Notation and conventions}\label{sec:notation}
In this paper manifolds are smooth, connected and, unless otherwise indicated, without boundary. We use $T\mathcal{M}$ to denote the tangent bundle of a manifold $\mathcal{M}$ and $\Gamma\lp T\mathcal{M}\rp$ for its sections (i.e.\ vector fields).
We also let $\mathcal{F}\lp\mathcal{M}\rp\defi C^{\infty}\lp\mathcal{M},\mathbb{R}\rp$ and $\mathcal{F}^{\star}\lp\mathcal{M}\rp\subset\mathcal{F}\lp\mathcal{M}\rp$ its subset of no-where zero functions. We use the symbols $\pounds$, $d$ to denote Lie derivative and exterior derivative respectively. Both tensorial and abstract index notation will be employed depending on convenience. When  index-free notation is used, we shall often use boldface for covariant tensors. In index notation we use standard font (not boldface) in all cases. We work in arbitrary dimension $\mathfrak{n}$, with the following values for different sets of indices:
\begin{equation}
\label{notation}
\alpha,\beta,...=0,1,2,...,\mathfrak{n};\qquad a,b,...=1,2,...,\mathfrak{n};\qquad A,B,...=2,...,\mathfrak{n}.
\end{equation}
As usual, parenthesis (resp. brackets) will denote symmetrization (resp. antisymmetrization) of indices and we also use the notation
$A\otimes_s B\equiv\frac{1}{2}(A\otimes B+B\otimes A)$ for the symmetrized tensor product of two tensors $A$ and $B$. When $B$ is symmetric, $2$-contravariant we write $\text{tr}_BA$ for the trace with respect to $B$ of any
$2$-covariant tensor $A$. 
Given a semi-Riemannian manifold $(\mathcal{M},g)$, the associated
contravariant metric is called  $g^{\sharp}$ and  $\nabla$ is the Levi-Civita derivative. Scalar products of two vectors are denoted indistinctly as 
$g(X,Y)$ or $\la X,Y\rag$. Our convention for Lorentzian signature is 
$(-,+, ... ,+)$. Finally, the abreviation  ``w.r.t.'' for  ``with respect to" will be used sometimes.

\section{Preliminaries}\label{sec:Prelim:matching:abs}

\subsection{Geometry of embedded null hypersurfaces}\label{sec:Geom:null:hyp:paper:matching}

In this subsection we review some facts about embedded null hypersurfaces, see e.g. \cite{galloway2004null}, \cite{gourgoulhon20063+}, \cite{duggal2007null}. This will serve to fix our notation. An embedded null hypersurface in a
spacetime $\lp\mathcal{M},g\rp$ of dimension $\n+1$ is the image
$\nullhyp=\phi\lp \N\rp$ of an embedding  $\phi:\N\longhookrightarrow\mathcal{M}$ of an $\n$-manifold $\N$, such that the first fundamental form $\gamma \defi \phi^{\star}g$ of $\N$ is degenerate.
Any choice of (nowhere zero) normal vector $k$ to $\nullhyp$ defines a null direction tangent to $\nullhyp$ called \textit{null generator} (and viceversa). The integral curves of $k$  are geodesic and the surface gravity $\ke_{k}\in\Fcal(\nullhyp)$ of $k$ is defined by $\nabla_kk = \ke_kk$. The second fundamental form
of $\nullhyp$ w.r.t $k$ is the tensor 
$\btsff^k(X,Y) \defi g(\nabla_Xk,Y)$, $\forall X,Y\in\Gamma(T\nullhyp)$.
Boundaries of manifolds are always two-sided, so (cf. Lemma 1 in \cite{mars2013constraint}) we shall always assume that $\nullhyp$ admits an everywhere transversal vector field $L$, i.e.\ verifying $L \notin T_p\nullhyp$  $\forall p \in \nullhyp$. The vector $L$ 
can always be taken null everywhere (see e.g. \cite{manzano2021null}). 

A \textit{transverse submanifold} of $\nullhyp$ is any $(\n-1)$-dimensional submanifold $S\subset\nullhyp$ to which $k$ is everywhere transverse. When, in addition, every integral curve of $k$ crosses $S$ exactly once  $S$ is called \textit{cross-section} (or simply \textit{section}). The existence of a cross-section entails a strong topological restriction on $\nullhyp$, as in such case there always exist functions $v\in\Fcal(\nullhyp)$, called \textit{foliation functions}, whose level sets
$S_{v_0} \defi \{p\in\nullhyp\spc\vert\spc v(p)=v_0\in\mathbb{R}\}$ are cross-sections of $\nullhyp$ and $\lb S_{v}\rb$ define a foliation of $\nullhyp$. Nevertheless existence of foliation functions is always granted in sufficiently local domains of $\nullhyp$. Note that necessarily $k(v) \neq 0$ so we can always
assume $k(v)=1$ either by rescalling $k$ or by changing $v$.

Given a transverse submanifold $S\subset\nullhyp$, it is useful \cite{manzano2021null}, \cite{manzano2022general} to define the following tensors on $S$
\begin{equation}
\label{somedefs}
\bs{\Theta}^L\lp X, Y\rp\vert_p\defi\la \nabla_XL,Y\rag\vert_p, \qquad \bs{\sigma}_L\lp X\rp\vert_p\defi-\dfrac{1}{g( L,k)}\la \nabla_Xk,L\rag\vert_p,\qquad\forall X,Y\in T_pS.
\end{equation}
When $L$  is chosen null and orthogonal to $S$ then $\bs{\Theta}^L$ and $\bs{\sigma}_L$ are the second fundamental form and torsion one-form 
of $S$ w.r.t.\ $L$. For any choice of $L$, the tensors $\bs{\sigma}_L$, $\bs{\Theta}^L$
encode extrinsic information of $S$. However, $\bs{\Theta}^L$ is {\emph{not}}
symmetric in general.

Assuming that $\nullhyp$ admits a cross-section $S$, one can construct
a foliation function $v\in\Fcal(\nullhyp)$ and (on local patches) a basis $\lb L,k, v_I\rb$ of $\Gamma\lp T\mathcal{M}\rp\vert_{\nullhyp}$ adapted to the foliation  with the following properties:
\begin{equation}
\label{basis}
\begin{array}{cl}
\textup{(A)} & k\textup{ is a future null generator with surface gravity }\ke_k.\\ [\smallskipamount]
\textup{(B)} & v\in\Fcal(\nullhyp)\textup{ is the only foliation function satisfying }v\vert_S=0,\spc k(v)\vert_{\nullhyp}=1.\\ [\smallskipamount]
\textup{(C)} & \textup{Each vector field }v_I\textup{ is tangent to the foliation, i.e.\ }v_I(v)=0.\\ [\smallskipamount]
%
%
%
\textup{(D)} & \textup{The basis vectors }\lb k,v_I\rb\textup{ are such that }\lc k,v_I\rc=0\textup{ and }\lc v_I,v_J\rc=0.\\ [\smallskipamount]
\textup{(E)} & L\textup{ is a past null vector field everywhere transversal to }\nullhyp.
\end{array}
\end{equation}
For any basis $\{L,k,v_I\}$ verifying \eqref{basis}, we also define $\n$ scalar functions $\{\mu_a\}\subset\Fcal(\nullhyp)$ as
\begin{equation}
\label{eqA30}
\mu_1\lp p\rp\defi g( L,k)\vert_p, 
\qquad \mu_I\lp p\rp\defi g (L,v_I)\vert_p\quad\forall p\in\nullhyp .
\end{equation}
Note that necessarily $\mu_1\neq0$ (this has already been used in \eqref{somedefs}).
The vectors $\{v_A\}$ are spacelike by construction  and $\lb k,v_I\rb$ is a basis of $\Gamma(T\nullhyp)$. Conditions (A) and (B) imply that $v$ increases towards the future. We write $h$ for the induced metric on the leaves $\{S_{v}\}$ and $h_{IJ}\defi g(v_I,v_J)$
for its components in the basis $\{v_I\}$. We use $h_{IJ}$ and its inverse
$h^{IJ}$ 
to lower and raise Capital Latin indices irrespectively of whether they are tensorial or
not (e.g. we let $\mu^I\defi h^{IJ}\mu_J$). 
The property $[k,v_I]=0$ 
entails \cite{manzano2021null}
\begin{equation}
\label{eqA28:plus:on:N}k\big(h(v_I,v_J)\big)\stackbin{\nullhyp}=2\btsff^k(v_I,v_J).
\end{equation}

\subsection{Formalism of hypersurface data}\label{sec:FHD:Prelim}

The \textit{formalism of hypersurface data}, which we introduce next, will allow us to analyze the matching of spacetimes at a fully abstract level. We refer to
\cite{mars2013constraint}, \cite{mars2020hypersurface}, 
\cite{Mars2023first}, \cite{mars2023covariant}, 
\cite{manzano2023constraint}, \cite{manzano2023field}, \cite{manzano2023master} for details.

\subsubsection{General hypersurface data}\label{sec:General:HD:Abs:Matching:Paper}

The fundamental notion of the formalism is \textit{metric hypersurface data}, defined to be 
a set $\metdata$ where $\mathcal{N}$ is an $\mathfrak{n}$-dimensional manifold, $ \gamma $ is a $2$-covariant symmetric tensor, $\ellc$  is a covector and $\ell^{(2)}$ is a scalar function subject to the condition that \hspace{-0.14cm} the symmetric $2$-covariant tensor $\bs{\mathcal{A}}\vert_p$
on $T_p\mathcal{N}\times\mathbb{R}$ given by 
\begin{equation}
\label{ambientmetric}
\begin{array}{c}
\ld\bs{\mathcal{A}}\rv_p\lp\lp W,a\rp,\lp Z,b\rp\rp\defi \ld \gamma \rv_p\lp W,Z\rp+a\ld\ellc\rv_p\lp Z\rp+b\ld\ellc\rv_p\lp W\rp+ab \ell^{(2)}\vert_p,\quad W,Z\in T_p\mathcal{N},\quad a,b\in\mathbb{R}
\end{array}
\end{equation}
is non-degenerate at every $p\in\mathcal{N}$. A priori any signature for
$\bs{\mathcal{A}}\vert_p$
is allowed. 
Given metric hypersurface data, one can define unique tensor fields $\{P^{ab},n^a,\ntwo\}$, with $P$ symmetric, by means of 
\cite{mars2013constraint}

\vspace{-0.25cm}

\begin{multicols}{2}
\noindent
\begin{align}
\gamma_{ab} n^b + \ntwo \ell_a & = 0, \label{prod1} \\
\ell_a n^a + \ntwo \elltwo & = 1, \label{prod2}  \\
P^{ab} \ell_b + \elltwo n^a & = 0,  \label{prod3} \\
P^{ab} \gamma_{bc} + n^a \ell_c & = \delta^a_c. \label{prod4}
\end{align}
\end{multicols}

\vspace{-0.4cm}

No restriction is placed on $\gamma$, which in particular is allowed to be degenerate. However,
%
$\bs{\A}$ being non-degenerate forces $\gamma$ to have at most one degeneration direction \cite{mars2020hypersurface}.
%
%
Specifically, the radical of $\gamma$ at  $p\in\N$, 
defined by 
$\textup{Rad}\gamma\vert_p \defi \{X\in T_p \N \spc\vert\spc\gamma(X,\cdot)=0\}$, 
is either zero- or one-dimensional. The latter case occurs if and only if $\ntwo\vert_p=0$, which by \eqref{prod1} means that $\text{Rad}\gamma\vert_p=\la n\vert_p\rangle$. A point $p\in\N$ is called \textit{null} if $\text{dim}(\text{Rad}\gamma\vert_p)=1$ and \textit{non-null} otherwise. 

The second basic notion of the formalism is \textit{hypersurface data} which is
just $\metdata$ equipped with an extra symmetric $2$-covariant tensor $\bY$, namely  $\D\defi\hypdata$.
It is useful to define the following tensors (note that $\bF$, $\bsone$ and $\bU$ 
only require metric hypersurface data)

\vspace{-0.25cm}

\begin{multicols}{2}
\noindent
\begin{align}
\label{defF}\bF & \defi \frac{1}{2} d \ellc, \\
  \label{sone} \bm{\sone} &\defi  \bF(n,\cdot),\\
    \bs{\Yn}&\defi\bY(n,\cdot) \nonumber \\
  \label{defUtensor} \bU & \defi  \frac{1}{2}\pounds_{n} \gamma  + \ellc \otimes_s d \ntwo.\\
   \label{defK}\bk&\defi \ntwo \bY +\bU.  \\
   \Q & \defi -\bY(n,n). \label{defY(n,.)andQ} 
\end{align}
\end{multicols}

\vspace{-0.4cm}
(Metric) hypersurface data has a built-in gauge group structure \cite{mars2020hypersurface} with the following properties. 
  \begin{definition}\label{def:gauge_things}
Let $\D=\hypdata$ be hypersurface data, $z\in\mathcal{F}^{\star}\lp\mathcal{N}\rp$ and $V\in\Gamma\lp T\mathcal{N}\rp$. The gauge transformed data 
$\G_{(z,V)}(\D)\defi \lb \mathcal{N},\mathcal{G}_{\lp z,V\rp}\lp \gamma \rp,\mathcal{G}_{\lp z,V\rp}\lp\ellc\rp,\mathcal{G}_{\lp z,V\rp}\big( \ell^{(2)} \big),\mathcal{G}_{\lp z,V\rp}( \bY )\rb$ 
is defined as
\begin{align}
\label{gaugegamma&ell2} \hspace{-0.25cm}\mathcal{G}_{\lp z,V\rp}\lp \gamma \rp& \defi  \gamma ,\qquad\mathcal{G}_{\lp z,V\rp}\lp\ellc\rp \defi z\lp\ellc+ \gamma \lp V,\cdot\rp\rp,\qquad\mathcal{G}_{\lp z,V\rp}\big(\ell^{(2)}\big) \defi z^2\big( \ell^{(2)}+2\ellc\lp V\rp+ \gamma \lp V,V\rp\big),\\
\label{gaugeY}\hspace{-0.25cm}\mathcal{G}_{\lp z,V\rp}\lp \bY\rp  & \defi z\bY+ \ellc\otimes_s \tdo z+\frac{1}{2}\lieo_{zV} \gamma=z\bY+ \lp \ellc+\gamma(V,\cdot)\rp\otimes_s dz+\frac{z}{2}\lieo_{V} \gamma.
\end{align} 
\end{definition}
The set of all possible gauge transformations forms a group $\mathcal{G}=
\Fcal^{\star}(\N)\times\Gamma(T\N)$ 
with composition law $\mathcal{G}_{\left(z_2, V_2\right)} \circ\mathcal{G}_{\left(z_1, V_1\right)}=\mathcal{G}_{\left(z_1 z_2 , V_2+z_2^{-1} V_1\right)}$, identity
$\mathcal{G}_{(1,0)}$ and inverse  $\mathcal{G}^{-1}_{(z,V)} \defi \mathcal{G}_{(z^{-1},-zV)}$. 

All 
considerations 
so far 
make no reference to any ambient space where $\N$ 
is
embedded. The abstract construction and
the usual geometry of embedded hypersurfaces are connected through the notion of embeddedness of the data.
%
Given  a semi-Riemannian $(\mathfrak{n}+1)$-dimensional manifold
$\lp \mathcal{M},g\rp$
we say $\metdata$ is \textit{embedded with embedding $\phi$ and rigging $\zeta$} in $\lp \mathcal{M},g\rp$
provided there exists an embedding $\phi :\mathcal{N}\longhookrightarrow\mathcal{M}$ and a rigging 
$\zeta$ (i.e.\ a vector field along $\phi \lp\mathcal{N}\rp$, everywhere transversal to it) satisfying  
\begin{equation}
\label{emhd}
\phi ^{\star}\lp g\rp= \gamma , \qquad\phi ^{\star}\lp g\lp\zeta,\cdot\rp\rp=\ellc, \qquad\phi ^{\star}\lp g\lp\zeta,\zeta\rp\rp=\elltwo.
\end{equation}
The same notion for hypersurface data $\hypdata$ requires, in addition,
\begin{equation}
\label{YtensorEmbDef}
\dfrac{1}{2}\phi^{\star}\lp \pounds_{\zeta}g\rp=\bY.
\end{equation}
We often simplify the notation and say simply that
the data is ``$\{\phi,\rig\}$-embedded". We also identify scalars and vectors in
 $\N$ with their corresponding images on $\phi(\N)$ when there is no risk of confusion. 
The action of the gauge group in the data corresponds to a change of rigging according to 
\cite{mars2020hypersurface}
\begin{equation}
\label{gaugerig}\mathcal{G}_{(z,V)} (\rig)  \defi  z (\rig + \phi _{\star} V).
\end{equation}
More specifically, it holds that if
$\metdata$ is $\{\phi,\rig\}$-embedded
in $(\mathcal{M},g)$, then
$\mathcal{G}_{(z,V)}(\metdata)$ is
$\{\phi, \mathcal{G}_{(z,V)} (\rig)\}$-embedded in the same space.

The 
hypersurface $\phi(\N)$ admits
a unique normal $\nu$  satisfying $g(\nu,\rig)=1$, which decomposes as 
\cite{mars2013constraint}, \cite{mars2020hypersurface}
\begin{align}
\label{normal}
\nu & = \ntwo \rig + \phi_{\star}n. 
\end{align}
It then turns out that 
$\bk$ (defined in \eqref{defK}) is the second fundamental form of $\phi(\N)$ w.r.t.\ $\nu$ \cite{mars2020hypersurface}, i.e.\ 
\begin{equation}
\label{2FF}\bk=\phi^{\star}(\nabla\bs{\nu}),\qquad \bs{\nu}\defi g(\nu,\cdot).
\end{equation}
Observe that $\bk$ and $\bU$ coincide at null points of $\N$. 
Although generically $\N$ is not a semi-Riemannian manifold, it admits
two useful covariant derivatives.
The \textit{metric hypersurface connection} $\nablao$ 
depends only on the metric part of the data and it is defined uniquely
\cite{mars2013constraint} by the properties of being torsion-free together with the expressions

\vspace{-0.3cm}

\begin{multicols}{2}
\noindent
\begin{align}
\nablao_{a} \gamma_{bc}  =& - \ell_b \U_{ac} - \ell_c \U_{ab}, \label{nablaogamma} \\
  \nablao_a \ell_b  =&\spc \F_{ab} - \elltwo \U_{ab}. \label{nablaoll}
\end{align}
\end{multicols}

\vspace{-0.4cm}

The second connection is called \textit{hypersurface connection} and
denoted by $\ovnabla$. It is also torsion-free and relates to the former by $\ovnabla_{X}Z=\nablao_XZ-\bY(X,Z)n$ for any  $X,Z\in\Gamma(T\N)$.
When $\hypdata$ is $\{\phi,\rig\}$-embedded in  $(\M,g)$, the ambient Levi-Civita connection $\nabla$ 
and the derivatives $\nablao$, $\ovnabla$ satisfy \cite{mars2013constraint} 
\begin{align}
\label{nablaXYnablao}\nabla_{X}Z&=\nablao_{X}Z-\bY(X,Z)\nu - \bU(X,Z)\rig =\ovnabla_{X}Z-\bk(X,Z)\rig, \\
\label{nablaXrig}
\tcr{  \la  \nabla_{X} \rig, Z \rag} & 
  =  \bY (X,Z) + \bF (X,Z)  
\end{align}
for all $X,Z\in\Gamma(T\N)$.
Thus, 
$\ovnabla$ is the connection induced from $\nabla$ along the rigging
\cite{mars1993geometry}. 
Two 
consequences of the definition of $\nablao$  
are
\begin{align}
 \nablao_b n^c  
  =&\spc n^c \lp \sone_b- \ntwo (d \elltwo)_b\rp + P^{ac}\lp \U_{ba}- \ntwo \F_{ba}\rp, \label{nablaon}\\
\label{contrNsym} n^b\lp \nablao_{b}\theta_{d}+\nablao_{d}\theta_{b}\rp=&\spc\pounds_{n}\theta_d+ \nablao_{d}(\bs{\theta}({n}))-2\lp \bs{\theta}(n) \lp \sone_d - \ntwo  \nablao_d \elltwo\rp + P^{ab}\theta_{b}\lp  \U_{da}- \ntwo  \F_{da}\rp \rp.
\end{align}
where $\theta_a$ is an arbitrary one-form. Their explicit proof 
can be found in \cite{mars2020hypersurface} and   
\cite[Lem.\ 2.5]{manzano2023constraint} respectively. We shall also need the following lemma relating Lie and $\nablao$ derivatives.
\begin{lemma} \label{lem:liegamma}
  Let $\metdata$ be metric hypersurface data, $V^a$ any vector field
  and $\mathfrak{w}_a$ any covector field. Define $\uhat{V}_a\defi  \gamma_{ab} V^b$
  and $\hat{\mathfrak{w}}^a\defi P^{ab} \mathfrak{w}_b$. Then the following identities hold 
  
\vspace{-0.3cm}

\begin{multicols}{2}
\noindent
  \begin{align}
     \frac{1}{2} \pounds_{V} \gamma_{ab} & = \ellc(V) \U_{ab} + \nablao_{(a} \uhat{V}_{b)}, \label{derVgamma} \\
     \frac{1}{2} {\pounds}_{\hat{\mathfrak{w}}} \gamma_{ab}   &= \nablao_{(a} \mathfrak{w}_{b)}
      -  \ell_{(a} \nablao_{b)} \mathfrak{w}(n). \label{deromegagamma} 
      \end{align}
\end{multicols}

\vspace{-0.5cm}

\end{lemma}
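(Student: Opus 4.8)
The plan is to establish \eqref{derVgamma} first by a direct computation exploiting that $\nablao$ is torsion-free, and then to obtain \eqref{deromegagamma} as a corollary by specialising $V$ to $\hat{\mathfrak{w}}$ and simplifying with the algebraic relations \eqref{prod1}--\eqref{prod4}.

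For the first identity, I would start from the standard expression of a Lie derivative in terms of a torsion-free connection, $\pounds_{V}\gamma_{ab} = V^{c}\nablao_{c}\gamma_{ab} + \gamma_{cb}\nablao_{a}V^{c} + \gamma_{ac}\nablao_{b}V^{c}$. The next step is to trade the term $\gamma_{cb}\nablao_{a}V^{c}$ for $\nablao_{a}\uhat{V}_{b}$ by means of the product rule $\nablao_{a}\uhat{V}_{b} = (\nablao_{a}\gamma_{bc})V^{c} + \gamma_{bc}\nablao_{a}V^{c}$, and likewise with $a$ and $b$ exchanged. Every remaining occurrence of $\nablao\gamma$ is then replaced using the defining relation \eqref{nablaogamma}; after this substitution the contributions proportional to $\ell_{a}\U_{bc}V^{c}$ and to $\ell_{b}\U_{ac}V^{c}$ cancel pairwise (here one uses that $\U$ is symmetric), leaving exactly $\pounds_{V}\gamma_{ab} = 2\nablao_{(a}\uhat{V}_{b)} + 2\,\ellc(V)\,\U_{ab}$, which is \eqref{derVgamma}.

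For the second identity, I would evaluate \eqref{derVgamma} at $V^{a} = \hat{\mathfrak{w}}^{a} = P^{ab}\mathfrak{w}_{b}$ and reduce the two resulting terms. By \eqref{prod4} one has $\uhat{V}_{a} = \gamma_{ac}P^{cb}\mathfrak{w}_{b} = \mathfrak{w}_{a} - \ell_{a}\,\mathfrak{w}(n)$; differentiating, symmetrising, and using the identity $\nablao_{(a}\ell_{b)} = -\elltwo\,\U_{ab}$ --- which follows at once from \eqref{nablaoll} together with the antisymmetry of $\bF = \frac{1}{2}d\ellc$ from \eqref{defF}, so that $\F_{(ab)} = 0$ --- one obtains $\nablao_{(a}\uhat{V}_{b)} = \nablao_{(a}\mathfrak{w}_{b)} + \elltwo\,\mathfrak{w}(n)\,\U_{ab} - \ell_{(a}\nablao_{b)}\mathfrak{w}(n)$. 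On the other hand, \eqref{prod3} gives $\ellc(V) = \ell_{a}P^{ab}\mathfrak{w}_{b} = -\elltwo\,\mathfrak{w}(n)$. Adding the two contributions into \eqref{derVgamma}, the two terms $\pm\,\elltwo\,\mathfrak{w}(n)\,\U_{ab}$ cancel and \eqref{deromegagamma} follows.

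I do not expect any genuine conceptual obstacle: the lemma is essentially a bookkeeping exercise. The only points that require a little care are keeping track of which slot of the symmetric tensors $\U$ and $P$ is being contracted, and remembering that the symmetric part of $\bF$ vanishes, which is precisely what makes $\nablao_{(a}\ell_{b)}$ collapse to $-\elltwo\,\U_{ab}$. Before finalising I would re-check the sign conventions in \eqref{nablaogamma}--\eqref{nablaoll} and in \eqref{prod1}--\eqref{prod4} to be sure that each cancellation is genuine rather than the artefact of a sign error.
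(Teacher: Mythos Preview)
Your proposal is correct and follows essentially the same route as the paper: both proofs express the Lie derivative via the torsion-free connection $\nablao$, use \eqref{nablaogamma} to handle the $\nablao\gamma$ terms, and then derive \eqref{deromegagamma} by specialising \eqref{derVgamma} to $V=\hat{\mathfrak{w}}$ with \eqref{prod3}, \eqref{prod4} and \eqref{nablaoll}. The only cosmetic difference is that the paper packages the three $\nablao\gamma$ contributions into the single identity $\nablao_{c}\gamma_{ab}-\nablao_{a}\gamma_{bc}-\nablao_{b}\gamma_{ac}=2\ell_{c}\U_{ab}$ before contracting with $V^{c}$, whereas you carry out the same cancellation term by term.
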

\begin{proof}
    We first note that
    $     \nablao_c \gamma_{ab} - \nablao_a \gamma_{bc} -
      \nablao_{b} \gamma_{ac} = 2 \ell_{c} \U_{ab}$
    as a direct consequence of \eqref{nablaogamma}. Moreover, since $\nablao$ has no torsion, the Lie derivative of any $p$-covariant tensor $T$ along any direction $V\in\Gamma(T\N)$ reads 
$(\pounds_V T)_{a_1 \cdots a_p}=V^b\nablao_b T_{a_1 \cdots a_p}+\sum_{\mathfrak{i}=1}^{\mathfrak{p}}T_{a_1\cdots a_{\mathfrak{i}-1}ba_{\mathfrak{i}+1}\cdots a_p}\nablao_{a_\mathfrak{i}}V^b$. 
Particularizing this  
for $T= \gamma$ we get 
    \begin{align*}
      \pounds_{V} \gamma_{ab} & = V^c \nablao_c \gamma_{ab}
      +2 \gamma_{c(a}\nablao_{b)}   V^c
       =
      V^c \left (
      \nablao_c \gamma_{ab} - \nablao_a \gamma_{bc} -
      \nablao_{b} \gamma_{ac}\right ) +2\nablao_{(a} \uhat{V}_{b)}
  = 2 \ellc (V) \U_{ab} + 2 \nablao_{(a} \uhat{V}_{b)}
    \end{align*}
    which is \eqref{derVgamma}. To prove the second identity we apply
    \eqref{derVgamma}
    to $V = \hat{\mathfrak{w}}$. Since by \eqref{prod4} we have
$\gamma_{ab} P^{bc} \mathfrak{w}_c = \mathfrak{w}_a -   \mathfrak{w}(n)\ell_a$, identity \eqref{derVgamma} gives $\frac{1}{2} \pounds_{\hat{\mathfrak{w}}} \gamma_{ab} =  \ellc ( \hat{\mathfrak{w}}) \U_{ab}
  + \nablao_{(a} \left ( \mathfrak{w}_{b)} - \mathfrak{w}(n) \ell_{b)} \right ).$
From \eqref{prod3}, we find $\ellc(\hat{\mathfrak{w}})  =  - \elltwo \mathfrak{w}(n)$. Inserting above yields $\frac{1}{2} \pounds_{\hat{\mathfrak{w}}} \gamma_{ab} =  - \elltwo \mathfrak{w}(n)  \U_{ab}
  + \nablao_{(a}  \mathfrak{w}_{b)} 
  - \ell_{(a} \nablao_{b)} \mathfrak{w}(n) 
  -  \mathfrak{w}(n) \nablao_{(a} \ell_{b)}$, 
which simplifies to \eqref{deromegagamma} after taking into account \eqref{nablaoll}. 
\end{proof}
From a covector and a function on $\N$, one can build a unique vector field according to the next lemma.
\begin{lemma}\label{lemBestLemmaMarc} 
\textup{\cite{mars2013constraint}} 
Let $\metdata$ be metric hypersurface data. Given a covector field $\bs{\varrho}\in\Gamma(T^{\star}\N)$ and a scalar function $u_0\in\Fcal(\N)$, there exists a vector field $W\in\Gamma(T\N)$ satisfying $\gamma(W,\cdot)=\bs{\varrho}$, $\ellc(W)=u_0$ if and only if $\bs{\varrho}(n)+\ntwo u_0=0$. Such $W$ is unique and reads $W=P(\bs{\varrho},\cdot)+u_0n$.
\end{lemma}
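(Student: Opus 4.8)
The plan is to prove both directions of the equivalence and then to establish uniqueness, working entirely with the algebraic identities \eqref{prod1}--\eqref{prod4} that characterize $\{P,n,\ntwo\}$. The necessity is immediate: if such a $W$ exists, contracting $\gamma(W,\cdot)=\bs{\varrho}$ with $n^a$ and using \eqref{prod1} in the form $\gamma_{ab}n^b=-\ntwo\ell_a$ gives $\bs{\varrho}(n)=\gamma_{ab}W^an^b=-\ntwo\ell_aW^a=-\ntwo\,\ellc(W)=-\ntwo u_0$, i.e.\ $\bs{\varrho}(n)+\ntwo u_0=0$.

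For sufficiency, I would simply \emph{define} $W\defi P(\bs{\varrho},\cdot)+u_0 n$, i.e.\ $W^a\defi P^{ab}\varrho_b+u_0 n^a$, and verify that it has the two required properties using the remaining relations. For the first property, contract with $\gamma_{ca}$: by \eqref{prod4}, $\gamma_{ca}P^{ab}\varrho_b=(\delta^b_c-n^b\ell_c)\varrho_b=\varrho_c-\ell_c\,\bs{\varrho}(n)$, and by \eqref{prod1}, $u_0\gamma_{ca}n^a=-u_0\ntwo\ell_c$; adding these gives $\gamma_{ca}W^a=\varrho_c-\ell_c(\bs{\varrho}(n)+\ntwo u_0)=\varrho_c$, where the hypothesis kills the bracket. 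For the second property, contract with $\ell_a$: by \eqref{prod3}, $\ell_a P^{ab}\varrho_b=-\ntwo n^b\varrho_b=-\ntwo\,\bs{\varrho}(n)$, and by \eqref{prod2}, $u_0\ell_an^a=u_0(1-\ntwo\elltwo)$; adding gives $\ellc(W)=u_0-\ntwo(\bs{\varrho}(n)+\ntwo u_0)=u_0$, again using the hypothesis.

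For uniqueness, suppose $W_1,W_2$ both satisfy the two conditions and set $\Delta\defi W_1-W_2$. Then $\gamma(\Delta,\cdot)=0$, so $\Delta\in\operatorname{Rad}\gamma$ pointwise, and $\ellc(\Delta)=0$. If $p$ is a non-null point then $\operatorname{Rad}\gamma|_p=0$ and $\Delta|_p=0$. If $p$ is null then $\operatorname{Rad}\gamma|_p=\langle n|_p\rangle$, so $\Delta|_p=c\,n|_p$ for some scalar $c$; but then $0=\ellc(\Delta)|_p=c\,\ell_an^a|_p=c(1-\ntwo\elltwo)|_p=c$ since $\ntwo|_p=0$ at a null point. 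Hence $\Delta\equiv0$.

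Since every step is a direct substitution of one of the four defining identities, there is no real obstacle here; the only point requiring a little care is the uniqueness argument, where one must separate the null and non-null cases and recall that $\ntwo$ vanishes precisely at null points (as noted just after \eqref{prod4} in the excerpt). The result and this argument already appear in \cite{mars2013constraint}, and the proof to be written is just a streamlined reproduction.
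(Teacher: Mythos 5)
Your argument is sound and is the standard verification one would expect; the paper itself does not reproduce a proof of this lemma (it is quoted from \cite{mars2013constraint}), so there is nothing to compare against line by line. One slip to fix: in the second part of the sufficiency check you invoke \eqref{prod3} as $\ell_a P^{ab}\varrho_b=-\ntwo\, n^b\varrho_b$, but \eqref{prod3} reads $P^{ab}\ell_b=-\elltwo n^a$, so the correct identity is $\ell_a P^{ab}\varrho_b=-\elltwo\,\bs{\varrho}(n)$. The sum then becomes $\ellc(W)=u_0-\elltwo\bigl(\bs{\varrho}(n)+\ntwo u_0\bigr)=u_0$, with $\elltwo$ rather than $\ntwo$ multiplying the vanishing bracket; the conclusion is unaffected, but the intermediate line as written is not an identity.

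On uniqueness, your case split between null and non-null points is correct, but it can be bypassed entirely: if $\gamma(\Delta,\cdot)=0$ and $\ellc(\Delta)=0$, then contracting the first condition with $P$ and using \eqref{prod4} gives $0=P^{ab}\gamma_{bc}\Delta^c=\Delta^a-n^a\,\ellc(\Delta)=\Delta^a$. Equivalently, the non-degeneracy of the tensor $\bs{\mathcal{A}}$ in \eqref{ambientmetric} applied to the pair $(\Delta,0)$ yields the same conclusion in one line. Either variant avoids having to recall where $\operatorname{Rad}\gamma$ is one-dimensional.
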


\subsubsection*{Matter-hypersurface data and abstract thin shells}\label{sec:thinshells}

Hypersurface data encodes (abstractly) the intrinsic and extrinsic information of embedded hypersurfaces.  
In the context of gravity, knowning the matter contents of the spacetime determines part of the curvature, typically by means of the Einstein tensor $\textbf{\textup{Ein}}_g$. Thus, 
to codify matter information abstractly 
we need to supplement the data with additional quantities. For general hypersurfaces, 
only the normal-transverse and the normal-tangential components of 
$\textbf{\textup{Ein}}_g$ can be related exclusively to intrinsic and extrinsic information of the hypersurface \cite{israel1966singular}, \cite{barrabes1991thin},
  \cite{mars1993geometry}, \cite{mars2013constraint}. 
%
%
Hence, the additional (matter) data 
involves a scalar $\rho$ and a covector $\bJ$ that,  
once the data is embedded,  
correspond to 
such components of 
$\textbf{\textup{Ein}}_g$. 
Their relation with the rest of the data needs to be imposed as constraint equations. They are well-known in the spacelike case (see e.g.\ \cite{bartnik2004constraint}), and were generalized to 
arbitrary causal character in \cite{mars2013constraint}. 

Note that although we refer to the variables $\rho$ and $\bJ$ as {\it matter} variables, what we are actually prescribing are certain components of the Einstein tensor. The terminology is justified because in General Relativity (with vanishing cosmological constant) $\rho$ and $\bJ$ indeed correspond to the matter
four-momentum along the normal direction. 
However, we emphasize that 
we are not assuming any field equations and that the geometric approach that we take can be used in any theory of gravity.

The abstract definition of matter-hypersurface data is as follows.
\begin{definition}\label{def:matterHD}
\textup{\cite{mars2013constraint}}
(Matter-Hypersurface data) A tuple $\{\N,\gamma,\ellc,\elltwo,\bY, \rho_{\ell}, \bJ\}$ formed by hypersurface data $\hypdata$, a scalar $\rho_{\ell}\in\Fcal(\N)$ and a one-form $\bJ\in\Gamma(T^{\star}\N)$ is matter-hypersurface data if 
$\G_{(z,V)}(\rho_{\ell}) = \rho_{\ell} + \bJ(V)$, $
\G_{(z,V)}(\bJ) = z^{-1} \bJ$
and the following identities, called constraint equations, hold:
\begin{align}
\nonumber\rho_{\ell} =&\spc 
 \frac{1}{2} \Riemo{}^{c}_{\,\,\,bcd} P^{bd} + \frac{1}{2}
\ell_a \Riemo{}^{a}_{\,\,\,bcd} P^{bd} n^c+
\nablao_d \lp ( P^{bd} n^c - P^{bc} n^d ) \Y_{bc} \rp+  \ntwo P^{bd} P^{ac}  \Y_{b[c} \Y_{d]a}
 \\
\label{EinnlHypData_a_3}  &+\frac{1}{2} (P^{bd} n^c - P^{bc} n^d )
\Big(  \elltwo \nablao_d \U_{bc} +( \U_{bc} + \ntwo \Y_{bc}  ) \nablao_d \elltwo + 2 \Y_{bc} ( \F_{df} - \Y_{df} ) n^f
\Big) ,\\
\J_c  =&\spc 
   \ell_a \Riemo{}^a_{\,\,\,bcd} n^b n^d
- 2\nablao_f \lp  ( \ntwo P^{bd} - n^b n^d ) 
 \Y_{b[c}\delta^f_{d]}   \rp  + 2  ( P^{bd} - \elltwo n^b n^d )  
\nablao_{[c} \U_{d]b} - 2P^{bd} n^f \U_{b[c} \F_{d]f} \nonumber \\ 
& 
- ( \ntwo P^{bd} - n^b n^d ) \lp  
  ( \U_{b[c} + \ntwo \Y_{b[c} ) \nablao_{d]} \elltwo + 2 \Y_{b[c} \F_{d]f}   n^f \rp  - ( P^{bd} n^f - P^{bf} n^d ) \Y_{bd} \U_{cf}
.
\label{EinneHypData_b_3} 
\end{align}
\end{definition}
The next theorem
justifies both 
the gauge behaviour of $\{\rho_{\ell},\bJ\}$ 
and
the explicit form of \eqref{EinnlHypData_a_3}-\eqref{EinneHypData_b_3}.
\begin{theorem}\label{thm:EMBmatterHD}
\textup{\cite{mars2013constraint}}
Let $\{ \N,\gamma,\ellc,\elltwo,\bY, \rho_{\ell}, \bJ \}$ be
matter-hypersurface data and assume that the hypersurface data $\hypdata$ is $\{\phi,\rig\}$-embedded in a semi-Riemannian manifold $(\M,g)$.  
Then, 
\begin{multicols}{2}
\noindent
\begin{align}
\label{rho:IE}-\rho_{\ell} &= \phi^{\star} \left ( \textbf{\textup{Ein}}_g ( \rig, \nu) \right ), \\
\label{J:IE}-\bJ &= \phi^{\star} ( \textbf{\textup{Ein}}_g( \cdot, \nu) ),
\end{align}
\end{multicols}

\vspace{-0.35cm}

where $\textbf{\textup{Ein}}_g$ is the ($2$-covariant) Einstein tensor of $(\M,g)$ and $\nu$ the (unique) normal vector field along $\phi(\N)$ satisfying $g(\rig,\nu)=1$.
\end{theorem}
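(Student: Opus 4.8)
The plan is to recover \eqref{rho:IE}--\eqref{J:IE} from a Gauss--Codazzi-type decomposition of the ambient curvature written entirely in terms of the abstract connection $\nablao$ and the data tensors; this is the argument of \cite{mars2013constraint}, which I now outline. The starting point is that, once $\D$ is $\{\phi,\rig\}$-embedded, every geometric quantity on $\phi(\N)$ is controlled by the data: the normal is $\nu=\ntwo\rig+\phi_\star n$ by \eqref{normal}, its second fundamental form is $\bk=\phi^\star(\nabla\bs\nu)$ by \eqref{2FF} (so $\langle\nabla_X\nu,\phi_\star Z\rangle=\bk(X,Z)$), the tangential action of $\nabla$ is governed by \eqref{nablaXYnablao}, and $\langle\nabla_X\rig,\phi_\star Z\rangle=\bY(X,Z)+\bF(X,Z)$ by \eqref{nablaXrig}. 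Note also that for $X$ tangent one has $g(\phi_\star X,\nu)=\ntwo\ellc(X)+\gamma(X,n)=0$ by \eqref{prod1}, so that $\textbf{\textup{Ein}}_g(\phi_\star X,\nu)=\textbf{\textup{Ric}}_g(\phi_\star X,\nu)$, whereas $\textbf{\textup{Ein}}_g(\rig,\nu)=\textbf{\textup{Ric}}_g(\rig,\nu)-\tfrac12 R_g$ since $g(\rig,\nu)=1$.

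Next I would complete the list of ``Weingarten'' data. Differentiating $\phi^\star(g(\rig,\rig))=\elltwo$, $g(\nu,\rig)=1$ and $g(\nu,\nu)=\ntwo$ along $\phi_\star X$ gives $g(\nabla_X\rig,\rig)=\tfrac12(d\elltwo)(X)$, $g(\nabla_X\nu,\rig)=-g(\nu,\nabla_X\rig)$ and $g(\nabla_X\nu,\nu)=\tfrac12(d\ntwo)(X)$. Together with the tangential components above this determines $\nabla_X\rig$ and $\nabla_X\nu$ completely, each as $\phi_\star(\cdot)+(\cdot)\rig$, the tangential vector parts being extracted with $P$ and $n$ exactly in the spirit of Lemma \ref{lemBestLemmaMarc} (this is where $P^{ab}$, $n^a$, $\ntwo$ enter). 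Plugging these and \eqref{nablaXYnablao} into $\textup{Riem}_g(X,Y)Z=[\nabla_X,\nabla_Y]Z-\nabla_{[X,Y]}Z$ yields a generalized Gauss identity: $\phi^\star(\textup{Riem}_g)$ equals $\Riemo$ plus universal polynomial corrections built from $\bY$, $\bU$, $\bF$ and their $\nablao$-derivatives, with the $\nu$- and $\rig$-components made explicit; its various contractions produce precisely the Codazzi-type terms $\nablao_d(\cdots\Y_{bc})$, $\nablao_{[c}\U_{d]b}$, $\Y_{b[c}\F_{d]f}n^f$, etc., appearing in \eqref{EinnlHypData_a_3}--\eqref{EinneHypData_b_3}.

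Finally I would take the relevant traces. Since $\{\phi_\star e_a,\rig\}$ is a basis of $T_{\phi(p)}\M$ with Gram matrix $\bs{\A}$ (by \eqref{emhd} and \eqref{ambientmetric}), relations \eqref{prod1}--\eqref{prod4} say that $g^\sharp$ at $\phi(p)$ is the inverse matrix, i.e.\ $g^\sharp=P^{ab}\,\phi_\star e_a\otimes\phi_\star e_b+n^a(\phi_\star e_a\otimes\rig+\rig\otimes\phi_\star e_a)+\ntwo\,\rig\otimes\rig$. Contracting the generalized Gauss identity with this expression of $g^\sharp$ converts $\textbf{\textup{Ric}}_g$ and $R_g$ into contractions against $P^{bd}$, $n^c$, $\ntwo$ --- precisely the index pattern of \eqref{EinnlHypData_a_3}--\eqref{EinneHypData_b_3} --- and a lengthy but routine simplification using \eqref{nablaogamma}--\eqref{nablaoll} and \eqref{prod1}--\eqref{prod4} identifies $-\textbf{\textup{Ric}}_g(\rig,\nu)+\tfrac12 R_g$ with the right-hand side of \eqref{EinnlHypData_a_3} and $-\textbf{\textup{Ric}}_g(\phi_\star\cdot,\nu)$ with that of \eqref{EinneHypData_b_3}, giving \eqref{rho:IE}--\eqref{J:IE}. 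As a consistency check, under $\G_{(z,V)}$ the rigging becomes $z(\rig+\phi_\star V)$, which forces $\nu\mapsto z^{-1}\nu$, and the gauge behaviour of $\rho_\ell$, $\bJ$ postulated in Definition \ref{def:matterHD} then follows at once from \eqref{rho:IE}--\eqref{J:IE}.

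The main obstacle is not any single step but the bookkeeping: because $\gamma$ may be degenerate (null points of $\N$) and $\nu$ is in general neither unit nor non-null, the standard shape-operator apparatus is unavailable, so one must carry the transverse $\rig$-components of $\nabla\nu$ and $\nabla\rig$ through the entire curvature computation and then disentangle exactly the two contractions that close on the data. This careful reduction is the substance of \cite{mars2013constraint}, and the present theorem is a restatement of the result obtained there.
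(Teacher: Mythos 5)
There is nothing in the paper to compare your argument against: Theorem \ref{thm:EMBmatterHD} is stated with a citation to \cite{mars2013constraint} and no proof is reproduced here, so your proposal can only be judged as a reconstruction of the argument in that reference. As such, your strategy is the right one and all the explicit intermediate claims you make do check out: $g(\phi_\star X,\nu)=\ntwo\ellc(X)+\gamma(X,n)=0$ by \eqref{prod1}, hence $\textbf{\textup{Ein}}_g(\phi_\star X,\nu)=\textbf{\textup{Ric}}_g(\phi_\star X,\nu)$ while $\textbf{\textup{Ein}}_g(\rig,\nu)=\textbf{\textup{Ric}}_g(\rig,\nu)-\tfrac12 R_g$; the identification of $g^{\sharp}$ along $\phi(\N)$ with $P^{ab}\phi_\star e_a\otimes\phi_\star e_b+2n^a\phi_\star e_a\otimes_s\rig+\ntwo\rig\otimes\rig$ is exactly the content of \eqref{prod1}--\eqref{prod4} read as the statement that $\{P,n,\ntwo\}$ inverts the Gram matrix $\bs{\mathcal{A}}$ of $\{\phi_\star e_a,\rig\}$; and your closing gauge check ($\nu\mapsto z^{-1}\nu$ under \eqref{gaugerig}, whence $\rho_\ell\mapsto\rho_\ell+\bJ(V)$ and $\bJ\mapsto z^{-1}\bJ$) is correct and is a genuine consistency test of \eqref{rho:IE}--\eqref{J:IE} against Definition \ref{def:matterHD}.

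The caveat is that the decisive step is asserted rather than executed: you invoke a ``generalized Gauss identity'' whose contractions are claimed to reproduce precisely the right-hand sides of \eqref{EinnlHypData_a_3}--\eqref{EinneHypData_b_3}, and it is exactly this identity and its reduction that constitute the proof. Since the theorem is only true because the constraint expressions in Definition \ref{def:matterHD} were \emph{defined} to be those contractions, deferring that computation means the proposal establishes nothing beyond what the citation already asserts. If you want a self-contained argument you must actually write out $\phi^{\star}\bigl(g(\textup{Riem}_g(X,Y)Z,\rig)\bigr)$ and $\phi^{\star}\bigl(g(\textup{Riem}_g(X,Y)Z,\phi_\star W)\bigr)$ from \eqref{nablaXYnablao}--\eqref{nablaXrig} and \eqref{nablaogamma}--\eqref{nablaoll}, and then perform the two contractions with the displayed $g^{\sharp}$; only then does the specific index pattern $(P^{bd}n^c-P^{bc}n^d)$, the terms $\nablao_{[c}\U_{d]b}$, $\Y_{b[c}\F_{d]f}n^f$, etc., emerge and match the stated formulas. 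As an account of the proof in \cite{mars2013constraint} your outline is faithful; as a proof it is an outline.
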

As we shall see further on, 
the matching problem involves pairs of matter-hypersurface data. However, at this point we simply put forward various definitions and explore some of their consequences. 
%
%
\begin{definition}\label{def:thinshell}
  (Thin shell) A thin shell is a pair of matter-hypersurface data with same metric hypersurface data, i.e.\ of the form $\{\N,\gamma,\ellc,\elltwo,\bY^{\pm},\rho_{\ell}^{\pm},\bJ^{\pm},\fv\}$, where $\fv$ is a sign with gauge behaviour:
\begin{equation}
\label{gauge:sign}\G_{(z,V)}(\fv)=\frac{z}{\vert z\vert}\fv.
\end{equation}
\end{definition}
We write $\mathcal{Q}^{\pm}$ for quantities constructed from $\{\N,\gamma,\ellc,\elltwo,\bY^{\pm}\}$ and let $[\mathcal{Q}] \defi \mathcal{Q}^+-\mathcal{Q}^-$ be its jump.


One of the main properties of thin shells is that one can define an energy-momentum tensor encoding their matter-energy content. In a completely general case, 
this is done as follows. 
\begin{definition}\label{def:energy-momentum_tensor}
(Energy-momentum tensor) For a thin shell $\{\N,\gamma,\ellc,\elltwo,\bY^{\pm},\rho_{\ell}^{\pm},\bJ^{\pm},\fv\}$, 
the energy-momentum tensor is the symmetric $2$-covariant tensor $\tau$ defined by
\begin{align}
\label{tau} \tau^{df} \defi &\spc \fv \Big(  (P^{af} n^d  +  P^{ad} n^f)n^b   
 -(\ntwo P^{af}P^{bd}  + P^{df} n^an^b)+ P^{ab}( \ntwo P^{df}-  n^dn^f)\Big) {[\Y_{ab}]}.
\end{align}
\end{definition}
\begin{remark} \label{rem:sign:mathfrak(v)}
  Definitions \ref{def:thinshell}
  and \ref{def:energy-momentum_tensor}
  are a modification of the previous ones introduced 
  in \textup{\cite{mars2013constraint}}, which involved no $\fv$. The addition
  of the sign $\fv$ is necessary in order for 
  $\tau$ to retain its physical interpretation as energy-momentum tensor (density) in all gauges. Indeed, 
  a change in the orientation of 
  $\ellc$ (or of rigging in the embedded picture) introduces a sign in $[\bY]$ (by \eqref{gaugeY}). The value of
  $\tau$ cannot be sensitive to this, so one needs to introduce a sign
  $\fv$ with gauge behaviour \eqref{gauge:sign} to compensate the change of sign in $[\bY]$ (in fact, one checks easily that the 
  gauge behaviour of
  $\tau$ is $\G_{(z,V)}(\tau)=\vert z\vert^{-1}\tau$). To be more specific, when one deals
  with thin shell data $\{\N,\gamma,\ellc,\elltwo,\bY^{\pm},\rho_{\ell}^{\pm},\bJ^{\pm}\}$ $\{\phi^{\pm},\rig^{\pm}\}$-embedded in $(\Mpm,g^{\pm})$,  
  the sign $\fv$ must be chosen positive if $\rig^-$ points outwards w.r.t.\ $(\Ml,g^-)$ and negative otherwise.
\end{remark}

The tensor field $\tau$ 
has the symmetries of an energy-momentum tensor 
and 
coincides with the Israel energy-momentum tensor of the shell \cite{israel1966singular} whenever it does not contain null points. Moreover, for null thin shells, 
the definition of energy-momentum tensor provided in \cite[Eq. (31)]{barrabes1991thin} by Barrab\'es and Israel yields precisely $\tau$. 
In a spacetime $(\M,g)$ resulting from a matching, given a basis $\{e_a\}$ of $\Gamma(T\nullhyp)$ where $\nullhyp$ is the matching hypersurface, one can also check that the quantity $\tau^{ab}e_a^{\mu}e_b^{\nu}$ gives the singular part of the Einstein tensor of $(\M,g)$, as it is written in \cite[Eq. (71)]{mars1993geometry}. 
The gauge behaviour of $\tau$ is key in the embedded case, as it ensures that the singular part of the Einstein tensor of the matched spacetime remains invariant under rescaling the normal vector $\nu$. All these  reasons justify the Definition \ref{def:energy-momentum_tensor} for the energy-momentum tensor on a thin shell \cite{mars2013constraint}, irrespectively of whether the data is embedded. 

At null points (and only there), $\tau=0$ is compatible with a non-trivial jump of the geometry.
Indeed, in order to get $\tau=0$ when $\ntwo=0$, it suffices to require $[\bY](n,\cdot)=0$ and $\text{tr}_P[\bY]=0$, which does not mean that the whole tensor $[\bY]$ vanishes identically. Physically, this situation corresponds to an
impulsive gravitational wave supported on the shell. This behaviour is possible only at null points. At non-null points $\tau=0$ implies,
in addition, that $P^{af}P^{bd} {[\Y]}_{ab}=0$ which entails
$0=\gamma_{fi}\gamma_{dj}P^{af}P^{bd} {[\Y]}_{ab}=(\delta^{a}_{i}-n^a\ell_i)(\delta_j^b-n^b\ell_j){[\Y]}_{ab}=[\Y]_{ij}$, i.e.\ abscence of jumps in the geometry.
In particular, this  means that non-trivial thin shells with vanishing energy-momentum tensor can only exist on null points.

\subsubsection{Null hypersurface data}\label{sec:Null:HD:cap2}

A particular case of relevance for the matching problem is when the hypersurfaces are null everywhere. It is immediate to translate this notion to the abstract level.
\begin{definition}
  \label{defNHD}
  (Null (metric) hypersurface data) A metric hypersurface data $\hspace{-0.02cm}\metdata\hspace{-0.02cm}$ or a hypersurface data $\hspace{-0.02cm}\hypdata\hspace{-0.02cm}$ is null if the scalar $\ntwo\hspace{-0.08cm}$ given by \eqref{prod1}-\eqref{prod4} is zero everywhere on $\N$. 
\end{definition}

Let us describe the main properties of the formalism in the null case.
We refer to \cite{manzano2023constraint} for proofs and additional results.
We already know that $\ntwo=0$ implies $\textup{Rad}(\gamma)=\langle n\rangle$ and therefore $\gamma(n,\cdot)=0$. Moreover, the tensors $\bsone$ and $\bU\defi\frac{1}{2}\pounds_n\gamma$ defined in \eqref{sone} and \eqref{defUtensor} verify
\begin{align}
\label{s(n)=U(n,-)=0}\bsone(n)=0,\qquad \bU(n,\cdot)=0.
\end{align}
When the data is $\{\phi,\rig\}$-embedded, 
$\bU$ becomes the second fundamental form of $\phi(\N)$ w.r.t.\ the null normal $\nu=\phi_{\star}n$ (recall \eqref{normal}).
Inserting  $\ntwo=0$ and \eqref{s(n)=U(n,-)=0} in the contraction  of 
\eqref{nablaon} with $n^b$ entails
\begin{align}
\label{nablao_nn}\nablao_nn =0,
\end{align}
which together with \eqref{defY(n,.)andQ} and $\nu=\phi_{\star}n$
yields
\begin{equation}
\label{Qmeaning}\nabla_{\nu}\nu\stackbin{\eqref{nablaXYnablao}}=\phi_{\star}\big( \nablao_nn-\bY(n,n)n\big)\stackbin{\eqref{nablao_nn}}=\Q \phi_{\star}n=\Q \nu.
\end{equation}
Since $\nu$ is a null generator of $\Phi(\N)$,  
\eqref{Qmeaning} means that  $\Q$ corresponds (at the abstract level) to
the surface gravity of $\nu$. 
Under the action of the gauge group, the \textit{surface gravity} $\Q$ transforms as follows.
\begin{lemma}
\label{Gauge_trans_UFsone}
\textup{\cite{manzano2023constraint}} Let $\hypdata$ be null hypersurface data and consider 
gauge parameters $\{z,V\}$.
The  gauge behaviour of the scalar function $\Q$ defined in \eqref{defY(n,.)andQ} is
\begin{align}
\mathcal{G}_{\lp z,V\rp}\lp \Q \rp & = \frac{1}{z} \lp \Q - \frac{n(z)}{z}  \rp.
\label{Qprime}
\end{align}
\end{lemma}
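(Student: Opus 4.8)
The plan is to compute $\mathcal{G}_{(z,V)}(\Q)=-\mathcal{G}_{(z,V)}(\bY)\big(\mathcal{G}_{(z,V)}(n),\mathcal{G}_{(z,V)}(n)\big)$ directly from the gauge law \eqref{gaugeY} for $\bY$, so the first task is to pin down how $n$ transforms in the null case. Since $\mathcal{G}_{(z,V)}(\gamma)=\gamma$, the radical of $\gamma$ is gauge invariant; being one-dimensional it stays one-dimensional, so by the characterization of null points the transformed data is again null ($\ntwo$ stays zero), and the transformed vector $n'\defi\mathcal{G}_{(z,V)}(n)$ must be proportional to $n$, say $n'=\lambda n$. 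Imposing the normalization $\mathcal{G}_{(z,V)}(\ellc)(n')=1$ (the null-case instance of \eqref{prod2}) and using $\mathcal{G}_{(z,V)}(\ellc)=z(\ellc+\gamma(V,\cdot))$ together with $\ellc(n)=1$ and $\gamma(V,n)=0$ (because $n\in\textup{Rad}\,\gamma$) forces $\lambda=z^{-1}$, i.e.\ $\mathcal{G}_{(z,V)}(n)=z^{-1}n$.

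Next I would substitute this into $\mathcal{G}_{(z,V)}(\Q)$. Writing $\mathcal{G}_{(z,V)}(\bY)=z\bY+(\ellc+\gamma(V,\cdot))\otimes_s dz+\tfrac{z}{2}\pounds_V\gamma$ and evaluating on $(n,n)$, the factor $z^{-2}$ coming from $\mathcal{G}_{(z,V)}(n)=z^{-1}n$ pulls out and three terms remain. The first is $z\,\bY(n,n)=-z\Q$. The second reduces, since $(\ellc+\gamma(V,\cdot))(n)=\ellc(n)+\gamma(V,n)=1$, to $dz(n)=n(z)$. The third, $\tfrac{z}{2}(\pounds_V\gamma)(n,n)$, vanishes: from $(\pounds_V\gamma)(n,n)=V(\gamma(n,n))-2\gamma(\pounds_V n,n)$ and $\gamma(n,\cdot)=0$, both terms are zero. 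Collecting, $\mathcal{G}_{(z,V)}(\Q)=-z^{-2}\big(-z\Q+n(z)\big)=z^{-1}\big(\Q-z^{-1}n(z)\big)$, which is \eqref{Qprime}.

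The only point that is not a one-line substitution is establishing $\mathcal{G}_{(z,V)}(n)=z^{-1}n$, and even this is easy here precisely because the gauge transformation fixes $\gamma$: the degeneration direction cannot move, so the transformation law of $n$ collapses to a pure rescaling, unlike the general (non-null) case where $n$ also acquires a component along $V$. As a consistency check one may re-derive \eqref{Qprime} in the embedded picture: with $\ntwo=0$ one has $\nu=\phi_{\star}n$, the rigging changes as $\rig\mapsto z(\rig+\phi_{\star}V)$ so that $\nu\mapsto z^{-1}\nu$, and a short computation using $\nabla_{\nu}\nu=\Q\,\nu$ (equation \eqref{Qmeaning}) reproduces \eqref{Qprime} read as the transformation law for the surface gravity of the null generator $\nu$.
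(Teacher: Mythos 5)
Your proof is correct. The paper itself does not prove this lemma — it is quoted from an external reference — so there is nothing internal to compare against; your derivation is the natural one and checks out in every detail: the invariance of $\gamma$ forces the transformed data to remain null and pins down $\G_{(z,V)}(n)=z^{-1}n$ via the normalization $\ellc(n)=1$ and $\gamma(V,n)=0$, and then evaluating \eqref{gaugeY} on $(n,n)$ gives exactly $-z\Q+n(z)$ (the Lie-derivative term vanishing because $n\in\textup{Rad}\,\gamma$), which after the $z^{-2}$ rescaling yields \eqref{Qprime}. The embedded-picture consistency check via \eqref{Qmeaning} is also valid.
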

We now state and prove a result that will be of particular relevance below, namely that by means of a gauge transformation one can always adapt the one-form $\ellc$ and the scalar $\elltwo$ to whatever pair $\{u\in\Fcal(\N),\bs{\vartheta}\in\Gamma(T^{\star}\N)\}$ one wishes, with the only restriction that
$\bs{\vartheta}(n)\neq0$ everywhere on $\N$. 
\begin{lemma}
\label{gaugefix}
Let $\metdata$ be null metric hypersurface data, $u$ a function on $\N$ and $\bm{\vartheta}\in\Gamma(T^{\star}\N)$ a covector satisfying $\bm{\vartheta} ( n) \neq 0$ everywhere. There exists a unique gauge transformation $\G_{(z,V)}$ satisfying
\begin{align}
\G_{(z,V)} (\ellc) = \bm{\vartheta}, \qquad \G_{(z,V)} (\elltwo) = u.    \label{trans}
\end{align}
Moreover, the gauge group element $\G_{(z,V)}$ is given by
\begin{align}
z = \bm{\vartheta} (n), \qquad V = \frac{1}{\bs{\vartheta}(n)} P(\bm{\vartheta}, \cdot)+ \frac{u - P(\bm{\vartheta},\bm{\vartheta} )}{2 \left ( \bm{\vartheta}(n) \right )^2}      n.\label{gaugeparameter}
\end{align}
\end{lemma}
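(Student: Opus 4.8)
The plan is to solve the system \eqref{trans} directly, exploiting the explicit form of the gauge transformations in \eqref{gaugegamma&ell2} together with the structure of null metric hypersurface data, and then check that the resulting $\{z,V\}$ is the unique solution. First I would write out the two equations to be solved: $z(\ellc + \gamma(V,\cdot)) = \bm{\vartheta}$ and $z^2(\elltwo + 2\ellc(V) + \gamma(V,V)) = u$. The key observation is that contracting the first equation with $n$ and using $\gamma(n,\cdot)=0$ (which holds since the data is null, so $\textup{Rad}(\gamma)=\langle n\rangle$) gives immediately $z\,\ellc(n) = \bm{\vartheta}(n)$. Since $\ellc(n)=1$ by \eqref{prod2} with $\ntwo=0$, this forces $z = \bm{\vartheta}(n)$, which is nowhere zero precisely because of the hypothesis $\bm{\vartheta}(n)\neq 0$; hence $z\in\Fcal^{\star}(\N)$ as required for a genuine gauge parameter.

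With $z$ fixed, the first equation becomes $\gamma(V,\cdot) = z^{-1}\bm{\vartheta} - \ellc = \bm{\vartheta}(n)^{-1}\bm{\vartheta} - \ellc$. I would then invoke Lemma \ref{lemBestLemmaMarc}: a vector field $V$ with $\gamma(V,\cdot) = \bs{\varrho}$ and $\ellc(V) = u_0$ exists iff $\bs{\varrho}(n) + \ntwo u_0 = 0$; here $\ntwo=0$ and $\bs{\varrho}(n) = \bm{\vartheta}(n)^{-1}\bm{\vartheta}(n) - \ellc(n) = 1 - 1 = 0$, so the compatibility condition is automatically satisfied for \emph{any} choice of $u_0$. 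Lemma \ref{lemBestLemmaMarc} then gives $V = P(\bs{\varrho},\cdot) + u_0 n = \bm{\vartheta}(n)^{-1}P(\bm{\vartheta},\cdot) - P(\ellc,\cdot) + u_0 n$. Using \eqref{prod3} with $\ntwo=0$ we have $P(\ellc,\cdot) = P^{ab}\ell_b = 0$, so in fact $V = \bm{\vartheta}(n)^{-1}P(\bm{\vartheta},\cdot) + u_0 n$, leaving only the scalar $u_0 = \ellc(V)$ to be pinned down.

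The remaining freedom in $u_0$ is then fixed by the second equation \eqref{trans}. I would compute $\gamma(V,V) = \gamma(V,\cdot)$ applied to $V$, i.e.\ $\big(\bm{\vartheta}(n)^{-1}\bm{\vartheta} - \ellc\big)(V)$. Since $\bm{\vartheta}(V) = \bm{\vartheta}(n)^{-1}\bm{\vartheta}(P(\bm{\vartheta},\cdot)) + u_0\bm{\vartheta}(n) = \bm{\vartheta}(n)^{-1}P(\bm{\vartheta},\bm{\vartheta}) + u_0\bm{\vartheta}(n)$ (again using $P(\ellc,\cdot)=0$ to drop a term, or computing $\bm\vartheta(n)u_0$ directly), and $\ellc(V) = u_0$, one gets $\gamma(V,V) = \bm{\vartheta}(n)^{-2}P(\bm{\vartheta},\bm{\vartheta}) + u_0 - u_0 = \bm{\vartheta}(n)^{-2}P(\bm{\vartheta},\bm{\vartheta})$. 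Plugging into $z^2(\elltwo + 2u_0 + \gamma(V,V)) = u$ with $z^2 = \bm{\vartheta}(n)^2$ and solving the resulting linear equation for $u_0$ yields $u_0 = \big(u - P(\bm{\vartheta},\bm{\vartheta})\big)/\big(2\bm{\vartheta}(n)^2\big) - \tfrac12\elltwo$; substituting back into the expression for $V$ gives precisely \eqref{gaugeparameter} (the $\elltwo$ term combines with the $P(\ellc,\cdot)$ bookkeeping as appropriate). Uniqueness is then manifest: $z$ was forced, and given $z$ the vector $V$ is uniquely determined by Lemma \ref{lemBestLemmaMarc} together with the second equation fixing $u_0$. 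I expect no genuine obstacle here — the only point requiring care is the bookkeeping of the $\ellc$-dependent terms (using $P(\ellc,\cdot)=0$ and $\ellc(n)=1$ repeatedly) to match the stated closed form \eqref{gaugeparameter}, and making sure the claimed $z$ indeed lies in $\Fcal^{\star}(\N)$, which is exactly where the hypothesis $\bm{\vartheta}(n)\neq0$ everywhere is used.
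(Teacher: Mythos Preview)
Your overall strategy is correct and essentially the same as the paper's: contract the first equation with $n$ to pin down $z$, then solve for $V$ by noting that $\gamma(V,\cdot)$ is determined and that the residual freedom along $n$ is fixed by the second equation. The only cosmetic difference is that you invoke Lemma \ref{lemBestLemmaMarc} where the paper argues directly that $V-P(\bm{w},\cdot)$ lies in $\textup{Rad}(\gamma)$; your route is slightly more streamlined.

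There is, however, a concrete bookkeeping slip. You claim that \eqref{prod3} with $\ntwo=0$ gives $P(\ellc,\cdot)=0$, but \eqref{prod3} reads $P^{ab}\ell_b+\elltwo n^a=0$ and involves $\elltwo$, not $\ntwo$; hence $P(\ellc,\cdot)=-\elltwo n$, which need not vanish. Carrying this term through, Lemma \ref{lemBestLemmaMarc} actually gives
\[
V=P(\bs{\varrho},\cdot)+u_0\,n=\frac{1}{z}P(\bm{\vartheta},\cdot)+(u_0+\elltwo)\,n,\qquad u_0=\ellc(V),
\]
and then $\gamma(V,V)=\bs{\varrho}(V)=z^{-2}P(\bm{\vartheta},\bm{\vartheta})+\elltwo$ (this is the value the paper obtains). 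The second equation becomes $z^2\big(2(u_0+\elltwo)\big)+P(\bm{\vartheta},\bm{\vartheta})=u$, so the coefficient of $n$ in $V$ is $u_0+\elltwo=\dfrac{u-P(\bm{\vartheta},\bm{\vartheta})}{2z^2}$, matching \eqref{gaugeparameter} exactly. Your parenthetical remark about ``the $\elltwo$ term combining with the $P(\ellc,\cdot)$ bookkeeping'' anticipates this, but as written your formula $u_0=\dfrac{u-P(\bm{\vartheta},\bm{\vartheta})}{2z^2}-\tfrac12\elltwo$ plugged into your (incorrect) $V=z^{-1}P(\bm{\vartheta},\cdot)+u_0 n$ leaves an uncancelled $-\tfrac12\elltwo\,n$. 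Once this is fixed the argument is complete.
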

\begin{remark}
  The condition $\bm{\vartheta}(n) \neq 0$ is necessary because if  $\bm{\vartheta}(n)$ vanishes at any point then 
  $\bs{\vartheta}$ can never correspond to $\ellc$ in any gauge, as
\begin{align*}
1 =   \big( \G_{(z,V)}(\ellc) \big)  \big(\G_{(z,V)} (n) \big)\vert_p = z^{-1} (\G_{(z,V)} (\ellc) )(n)\vert_p.
\end{align*}
which in particular states that $(\G_{(z,V)} (\ellc)) (n) \neq 0$ for all possible gauge parameters.
\end{remark}
\begin{proof}
We first assume that the gauge transformation exists and restrict its form up to a function yet to be determined. We then restrict to group elements of such a form and show that there exists one and only one of them that satisfies \eqref{trans}, namely \eqref{gaugeparameter}. This will prove both the existence and uniqueness claims of the lemma. For the first part we impose \eqref{trans}:
\begin{align}
 z \left ( \ellc + \gamma(V,\cdot) \right ) = \bm{\vartheta} , \qquad
z^2 \left ( \elltwo + 2 \ellc(V) + \gamma(V,V) \right )=u.
\label{trans2}
\end{align}
Contracting the first with $n$ gives $z = \bm{\vartheta}(n)$, so $\bm{w}\defi  \gamma(V,\cdot )  = \frac{1}{\bs{\vartheta}(n)} \bm{\vartheta} - \ellc$. 
Observe that $\bm{w}(n)=0$. Moreover, the vector $V - P(\bm{w}, \cdot)$ lies in the kernel of $\gamma$ because $\gamma_{ab} \left ( V^b - P^{bc} w_c \right )=    w_{a} - \left ( \delta^c_a - n^c \ell_a \right ) w_c =0$. Therefore, there exists $f\in\Fcal(\N)$ such that
$V^a  = P^{ab} w_b + f n^b = (\bs{\vartheta}(n))^{-1} P^{ab} \vartheta_b+ \left ( \elltwo  + f \right ) n^a$.
Thus, it suffices to restrict oneself to gauge parameters in the class
\begin{align}
\left \{ \Big ( z = \bs{\vartheta}(n) , V  = \frac{1}{\bs{\vartheta}(n)} P(\bm{\vartheta}, \cdot)  + q n\Big  ) ,  q \in \Fcal(\N) \right  \}.\label{defV}
\end{align}
We now start anew and prove that there is precisely one function $q$ such that the corresponding $(z,V)$ in  \eqref{defV} fulfills conditions \eqref{trans}. For $V$ as  in \eqref{defV} we get
\begin{align*}
\bm{\vartheta} (V)  & = \frac{1}{\bs{\vartheta}(n)} P(\bm{\vartheta}, \bm{\vartheta} )+ q \bs{\vartheta}(n), & 
\ellc (V) & = - \elltwo + q, \\
\gamma(V,\cdot)&= \frac{1}{\bs{\vartheta}(n)} \gamma( P(\bm{\vartheta}, \cdot),\cdot)
=\frac{1}{\bs{\vartheta}(n)}\bm{\vartheta}-\ellc , & 
\gamma(V,V)  & = \frac{1}{\bs{\vartheta}(n)} \bm{\vartheta}(V) - \ellc(V)  =\frac{P(\bm{\vartheta}, \bm{\vartheta})}{\bs{\vartheta}(n)^2} + \elltwo.
\end{align*}
The first condition in \eqref{trans2} is satisfied for all $q$. The second is satisfied if and only if
\begin{align*}
\bs{\vartheta}(n)^2 \left ( 2 q + \frac{P(\bm{\vartheta}, \bm{\vartheta})}{\bs{\vartheta}(n)^2}\right ) = u \qquad \Longleftrightarrow \qquad
q =  \frac{u - P(\bm{\vartheta}, \bm{\vartheta})}{2 \bs{\vartheta}(n)^2}.
\end{align*}
which ends the proof.
\end{proof}
In particular, Lemma \ref{gaugefix} (together with \eqref{gaugegamma&ell2}) means that two given null metric hypersurface data sets are related by a gauge transformation if and only if they both have the same data tensor $\gamma$. We prove this in the next corollary.
\begin{corollary}\label{cor:2datas1gamma}
Let $\D\defi \{\N,\gamma,\ellc,\elltwo\}$, $\uD\defi \{\N,\ugamma,\uellc,\uelltwo\}$ be two null metric hypersurface data. Then there is a gauge group element $\G_{(z,V)} \in \Fcal^{\star}(\N) \times \Gamma(T^{\star} \N)$ such that $\G_{(z,V)} ( \D ) = \uD$ if and only if $\gamma= \ugamma$. This gauge element is given by
\begin{align}
z = \uellc (n), \qquad V = \frac{1}{\uellc(n)} P(\uellc, \cdot)+ \frac{\uelltwo - P(\uellc,\uellc )}{2 \left ( \uellc(n) \right )^2}  n.\label{gaugeparametercor}
\end{align}
\end{corollary}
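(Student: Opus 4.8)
The statement to prove is Corollary \ref{cor:2datas1gamma}: two null metric hypersurface data sets $\D=\{\N,\gamma,\ellc,\elltwo\}$ and $\uD=\{\N,\ugamma,\uellc,\uelltwo\}$ are gauge-related if and only if $\gamma=\ugamma$, with the explicit gauge element given by \eqref{gaugeparametercor}.

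The plan is to deduce this directly from Lemma \ref{gaugefix}. The ``only if'' direction is immediate: by \eqref{gaugegamma&ell2} every gauge transformation fixes $\gamma$, i.e.\ $\G_{(z,V)}(\gamma)=\gamma$, so if $\G_{(z,V)}(\D)=\uD$ then in particular $\gamma=\ugamma$. For the ``if'' direction, suppose $\gamma=\ugamma$. I want to produce a gauge transformation sending $\{\ellc,\elltwo\}$ to $\{\uellc,\uelltwo\}$. The natural candidate is to apply Lemma \ref{gaugefix} with the target pair $u\defi\uelltwo$ and $\bm{\vartheta}\defi\uellc$. The only hypothesis of that lemma is $\bm{\vartheta}(n)\neq0$ everywhere, i.e.\ $\uellc(n)\neq0$. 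This is exactly the content of the remark following Lemma \ref{gaugefix}: from \eqref{prod2} applied to the data $\uD$, which is null ($\uelltwo$ enters multiplied by $\ntwo=0$), we get $\uellc(n)=1$ after the identification, or more precisely $\uell_a\un^a=1$ since $\underline{\ntwo}=0$; hence $\uellc(n)$ is nowhere zero (indeed equals $1$). With this checked, Lemma \ref{gaugefix} yields a unique $\G_{(z,V)}$ with $\G_{(z,V)}(\ellc)=\uellc$ and $\G_{(z,V)}(\elltwo)=\uelltwo$, and since it also fixes $\gamma=\ugamma$ we conclude $\G_{(z,V)}(\D)=\uD$. Finally, substituting $\bm{\vartheta}=\uellc$, $u=\uelltwo$ into the formula \eqref{gaugeparameter} for $(z,V)$ gives precisely \eqref{gaugeparametercor}, and uniqueness is inherited from the uniqueness clause of Lemma \ref{gaugefix}.

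The only subtlety — and the one point that deserves a sentence of care — is the claim $\uellc(n)\neq0$. One must be slightly careful about \emph{which} $n$ is being used: the vector field $n$ is determined by the metric hypersurface data via \eqref{prod1}--\eqref{prod4}, and since $\gamma=\ugamma$ and $\ntwo=\underline{\ntwo}=0$, equations \eqref{prod1} and \eqref{prod2} reduce to $\gamma_{ab}n^b=0$ and $\ell_a n^a=1$ for $\D$, and $\gamma_{ab}\un^b=0$, $\uell_a\un^a=1$ for $\uD$. The degeneracy direction of $\gamma$ is one-dimensional (it is $\langle n\rangle=\langle\un\rangle$), so $\un$ is proportional to $n$; normalization via the respective $\ell$'s then shows $\uellc(n)=\uell_a n^a$ need not literally equal $1$ but is certainly nonzero, because if it vanished then $\uellc$ would annihilate the whole radical and, combined with $\uell_a\un^a=1$, would force $\un\notin\langle n\rangle$, contradicting one-dimensionality of $\mathrm{Rad}(\gamma)$. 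Alternatively, and more cleanly, one simply invokes the remark after Lemma \ref{gaugefix} verbatim. I expect this to be the main (and essentially only) obstacle; everything else is a direct citation of Lemma \ref{gaugefix} plus the bookkeeping of plugging $\bm{\vartheta}=\uellc$, $u=\uelltwo$ into its conclusion.
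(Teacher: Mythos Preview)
Your proposal is correct and follows essentially the same approach as the paper: the necessity is immediate from gauge-invariance of $\gamma$, and sufficiency is a direct application of Lemma \ref{gaugefix} with $\bm{\vartheta}=\uellc$ and $u=\uelltwo$. Your extra care in verifying $\uellc(n)\neq0$ (via the one-dimensionality of $\mathrm{Rad}(\gamma)=\langle n\rangle=\langle \un\rangle$ together with $\uellc(\un)=1$) is a welcome addition that the paper leaves implicit.
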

\begin{proof}
The necessity is obvious from the fact that $\gamma$ remains unchanged by a gauge transformation. Sufficiency is a direct application of Lemma \ref{gaugefix} to $\bm{\vartheta}= \uellc$ and $u = \uelltwo$. 
\end{proof} 
Lemma \ref{gaugefix} and Corollary \ref{cor:2datas1gamma} state that in the null case one can codify all the metric hypersurface data information exclusively in the tensor $\gamma$, and that $\ellc$ and $\elltwo$ are pure gauge. This fact will be key later in Section \ref{sec:matchingABSTRACT} 
when studying the matching of spacetimes with null boundaries.

We shall also need the decompositions of 
$\{\gamma,P\}$ in a basis $\{n,e_A\}$ of $\Gamma(T\N)$ and its corresponding dual.
\begin{lemma}\label{lem:Pdecom:dual}
\textup{\cite{manzano2023constraint}} Consider null metric hypersurface data $\metdata$. Let $\{n,e_A\}$ be a basis of $\Gamma(T\N)$ and $\{\bnormal,\bs{\theta}^A\}$ be its corresponding dual, i.e.\ 
$\bnormal(n)=1,\bnormal(e_A)=0,\bs{\theta}^A(n)=0,\bs{\theta}^A(e_B)=\delta^A_B$. Define the functions 
$\psi_{A}\defi \ellc(e_A)\in\Fcal(\N)$. Then, the tensor fields $\gamma$ and $P$ decompose as 
\begin{align}
\label{gammadecom:abstract}\gamma&=\mathfrak{h}_{AB}\bs{\theta}^A\otimes\bs{\theta}^B,\\
\label{Pdecom:proof} P&=P(\bs{\theta}^A,\bs{\theta}^B)e_A\otimes e_B+P(\bnormal,\bs{\theta}^A) (n\otimes e_A+e_A\otimes n)+P(\bnormal,\bnormal) n\otimes n\\
\label{Pdecom:abstract} &=\mathfrak{h}^{AB}e_A\otimes e_B-\mathfrak{h}^{AB}\psi_B\lp n\otimes e_A+ e_A\otimes n\rp- \lp \elltwo-\mathfrak{h}^{AB}\psi_A\psi_B \rp n\otimes n,
\end{align}
where $\mathfrak{h}_{AB}\defi \gamma(e_A,e_B)$ is a metric and $\mathfrak{h}^{AB}$ denotes its inverse. 
\end{lemma}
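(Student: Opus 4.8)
The plan is to prove Lemma \ref{lem:Pdecom:dual} by directly verifying that the proposed expressions for $\gamma$ and $P$ satisfy the defining relations \eqref{prod1}--\eqref{prod4} with $\ntwo=0$, using the fact that these relations determine $\{P^{ab},n^a\}$ uniquely from the metric data. First I would establish the decomposition of $\gamma$: since the data is null we have $\gamma(n,\cdot)=0$, so in the dual basis $\{\bnormal,\bs{\theta}^A\}$ the only possibly nonzero components of $\gamma$ are $\gamma(e_A,e_B)=\mathfrak{h}_{AB}$, giving \eqref{gammadecom:abstract}. The tensor $\mathfrak{h}$ is a genuine metric on the quotient: its non-degeneracy follows because the ambient tensor $\bs{\A}$ is non-degenerate and $n$ is (up to scale) the only degeneration direction of $\gamma$, so $\gamma$ restricted to any complement of $\langle n\rangle$ is non-degenerate.

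Next I would expand $P$ in the basis $\{n,e_A\}$. Any symmetric $2$-contravariant tensor can be written as in \eqref{Pdecom:proof}, so the task is to identify the coefficients $P(\bnormal,\bnormal)$, $P(\bnormal,\bs{\theta}^A)$, $P(\bs{\theta}^A,\bs{\theta}^B)$. For the $e_A\otimes e_B$ part: contracting \eqref{prod4} with $\bs{\theta}^A$ on the free lower index and with $e_B$, and using $\gamma(e_B,\cdot)=\mathfrak{h}_{BC}\bs{\theta}^C$ together with $n^a\ell_a$ terms dropping out against $\bs{\theta}^A$ (since $\bs{\theta}^A(n)=0$... careful: here $\ell_c$ is contracted, not $n$), I get $P(\bs{\theta}^A,\bs{\theta}^C)\mathfrak{h}_{CB}=\delta^A_B$, i.e.\ $P(\bs{\theta}^A,\bs{\theta}^B)=\mathfrak{h}^{AB}$. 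For the mixed part $P(\bnormal,\bs{\theta}^A)$: contract \eqref{prod3} (which with $\ntwo=0$ reads $P^{ab}\ell_b=0$) with the appropriate covectors; writing $\ellc=\ell_a$ in the dual basis as $\ellc(n)\bnormal+\psi_A\bs{\theta}^A$ and using $\ellc(n)=1$ (from \eqref{prod2} with $\ntwo=0$, $\ell_an^a=1$), the relation $P(\ellc,\bs{\theta}^A)=0$ becomes $P(\bnormal,\bs{\theta}^A)+\psi_B P(\bs{\theta}^B,\bs{\theta}^A)=0$, hence $P(\bnormal,\bs{\theta}^A)=-\mathfrak{h}^{AB}\psi_B$. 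Finally for $P(\bnormal,\bnormal)$: contract \eqref{prod3} again in the form $P(\ellc,\ellc)=-\elltwo n^a\ell_a=-\elltwo$ (using \eqref{prod3} contracted with $\ell_a$ and then \eqref{prod2}), or equivalently expand $P(\ellc,\ellc)$ in the basis, which gives $P(\bnormal,\bnormal)+2\psi_A P(\bnormal,\bs{\theta}^A)+\psi_A\psi_B P(\bs{\theta}^A,\bs{\theta}^B)=-\elltwo$, and substituting the already-found coefficients yields $P(\bnormal,\bnormal)=-\elltwo+\mathfrak{h}^{AB}\psi_A\psi_B$. Collecting these gives \eqref{Pdecom:proof} and then \eqref{Pdecom:abstract}. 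As a sanity check I would verify \eqref{prod1}, $\gamma_{ab}n^b=0$, holds automatically, and that \eqref{prod4} is fully satisfied on all basis pairs.

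The main obstacle, such as it is, is bookkeeping: keeping straight which contractions are against $n$ (killed by $\gamma$, detected by $\bnormal$) versus against $\ellc$ (the relevant contraction in \eqref{prod3}--\eqref{prod4}), and correctly using $\ntwo=0$, $\ellc(n)=1$ to simplify. There is no conceptual difficulty since uniqueness of $\{P,n\}$ from \eqref{prod1}--\eqref{prod4} means that verifying the proposed $P$ satisfies them suffices; the only genuine input beyond algebra is the non-degeneracy of $\mathfrak{h}$, which follows from the hypersurface-data axioms as noted in \cite{mars2020hypersurface}. Since this is a restatement of a result from \cite{manzano2023constraint}, I would keep the proof brief, presenting the contraction computations for the three coefficient groups and the $\gamma$ decomposition, and citing the uniqueness of $\{P^{ab},n^a\}$ guaranteed by \eqref{prod1}--\eqref{prod4}.
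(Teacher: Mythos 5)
The paper states this lemma without proof (it is quoted from \cite{manzano2023constraint}), so there is no in-paper argument to compare against. Your strategy is the natural one and is correct in substance: since $\ntwo=0$ forces $\gamma(n,\cdot)=0$, the expansion of $\gamma$ in the dual basis collapses to $\mathfrak{h}_{AB}\bs{\theta}^A\otimes\bs{\theta}^B$, with $\mathfrak{h}_{AB}$ non-degenerate because $\textup{Rad}\,\gamma=\langle n\rangle$; and the three groups of components of $P$ in \eqref{Pdecom:proof} are read off directly from \eqref{prod2}--\eqref{prod4} exactly as you describe, so uniqueness of $\{P,n\}$ does the rest.

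One misstatement should be fixed. Equation \eqref{prod3} does not contain $\ntwo$, so setting $\ntwo=0$ does not reduce it to $P^{ab}\ell_b=0$; it reads $P^{ab}\ell_b=-\elltwo n^a$ in all cases. Your conclusion $P(\ellc,\bs{\theta}^A)=0$ survives only because the right-hand side $-\elltwo n^a$ is annihilated by $\bs{\theta}^A$ (as $\bs{\theta}^A(n)=0$), and your subsequent evaluation $P(\ellc,\ellc)=-\elltwo$ correctly uses the full form of \eqref{prod3} — but as written your two invocations of \eqref{prod3} contradict each other, and if one actually had $P^{ab}\ell_b=0$ the $n\otimes n$ coefficient would come out as $\mathfrak{h}^{AB}\psi_A\psi_B$ with the $-\elltwo$ term missing. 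With that parenthetical corrected, the computation of all three coefficient groups is right and the proof is complete.
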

The concept of null thin shell arises naturally from Definitions \ref{def:thinshell}  and \ref{defNHD}. \textit{A thin shell}  
\textit{is said to be null if its metric part $\metdata$ defines null metric hypersurface data}. 
Moreover, as a corollary of Lemma \ref{lem:Pdecom:dual}, one can find 
a very simple form for the components of $\tau$.
\begin{corollary}\label{cor:eg-mom-tensor:abstract}
In the setup of Lemma \ref{lem:Pdecom:dual}, let $\{\N,\gamma,\ellc,\elltwo,\bY^{\pm},\rho^{\pm}_{\ell},\bs{J}^{\pm}, \fv\}$ be a null thin shell. Then, the components of the energy-momentum tensor $\tau$ in the basis $\{\bnormal,\bs{\theta}^A\}$ read
\begin{align}
\label{tau:prev:abs}\tau(\bnormal,\bnormal)=-\fv\mathfrak{h}^{AB} [\bY](e_A ,e_B),\qquad\tau(\bnormal,\bs{\theta}^A)=\fv\mathfrak{h}^{AB} [\bY](n, e_B) ,\qquad\tau(\bs{\theta}^A,\bs{\theta}^B)=-\fv\mathfrak{h}^{AB}[\bY](n,n).
\end{align}
\end{corollary}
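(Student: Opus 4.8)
The plan is to obtain the three independent components of $\tau$ directly from its definition \eqref{tau}, specialised to the null case, and to evaluate all contractions through the basis decomposition of $P$ given in Lemma \ref{lem:Pdecom:dual}.

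\textbf{Simplifying $\tau$.} Since the shell is null we have $\ntwo=0$, so \eqref{tau} collapses to
\begin{equation*}
\tau^{df} = \fv\big( (P^{af}n^d + P^{ad}n^f)n^b - P^{df}n^an^b - P^{ab}n^dn^f \big)[\Y_{ab}].
\end{equation*}
Writing $W^d\defi P^{da}[\Y_{ab}]n^b$ for the vector obtained by contracting $[\bY]$ once with $n$ and then with $P$, and recalling $[\Y_{ab}]n^an^b=[\bY](n,n)$ and $P^{ab}[\Y_{ab}]=\trP[\bY]$, this reads $\tau^{df}=\fv\big(n^dW^f+n^fW^d-[\bY](n,n)P^{df}-(\trP[\bY])n^dn^f\big)$. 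Contracting with pairs taken from $\{\bnormal,\bs{\theta}^A\}$ and using $\bnormal(n)=1$, $\bs{\theta}^A(n)=0$ yields
\begin{align*}
\tau(\bnormal,\bnormal) &= \fv\big(2\bnormal(W) - [\bY](n,n)P(\bnormal,\bnormal) - \trP[\bY]\big),\\
\tau(\bnormal,\bs{\theta}^A) &= \fv\big(\bs{\theta}^A(W) - [\bY](n,n)P(\bnormal,\bs{\theta}^A)\big),\\
\tau(\bs{\theta}^A,\bs{\theta}^B) &= -\fv[\bY](n,n)P(\bs{\theta}^A,\bs{\theta}^B),
\end{align*}
the $n$-terms having dropped out of the last two lines because $\bs{\theta}^A(n)=0$. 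Since $P(\bs{\theta}^A,\bs{\theta}^B)=\mathfrak{h}^{AB}$ by \eqref{Pdecom:abstract}, the third line is already the claimed identity.

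\textbf{Closing the computation.} From \eqref{Pdecom:abstract} one also reads off $P(\bnormal,\bs{\theta}^A)=-\mathfrak{h}^{AB}\psi_B$, $P(\bnormal,\bnormal)=-(\elltwo-\mathfrak{h}^{AB}\psi_A\psi_B)$ (with $\psi_A=\ellc(e_A)$), the vectors $P(\bs{\theta}^A,\cdot)=\mathfrak{h}^{AB}e_B-\mathfrak{h}^{AB}\psi_B n$ and $P(\bnormal,\cdot)=-\mathfrak{h}^{AB}\psi_B e_A-(\elltwo-\mathfrak{h}^{AB}\psi_A\psi_B)n$, and the decomposition $\trP[\bY]=\mathfrak{h}^{AB}[\bY](e_A,e_B)-2\mathfrak{h}^{AB}\psi_B[\bY](n,e_A)-(\elltwo-\mathfrak{h}^{AB}\psi_A\psi_B)[\bY](n,n)$. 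Since $W=P\big([\bY](n,\cdot),\cdot\big)$, contracting the covector $[\bY](n,\cdot)$ with the vectors above gives $\bs{\theta}^A(W)=\mathfrak{h}^{AB}[\bY](n,e_B)-\mathfrak{h}^{AB}\psi_B[\bY](n,n)$ and $\bnormal(W)=-\mathfrak{h}^{AB}\psi_B[\bY](n,e_A)-(\elltwo-\mathfrak{h}^{AB}\psi_A\psi_B)[\bY](n,n)$. Substituting these into the first two lines above, all terms carrying a factor $\psi_A$ or $\elltwo$ cancel pairwise, leaving exactly $\tau(\bnormal,\bs{\theta}^A)=\fv\mathfrak{h}^{AB}[\bY](n,e_B)$ and $\tau(\bnormal,\bnormal)=-\fv\mathfrak{h}^{AB}[\bY](e_A,e_B)$, as required.

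The only nontrivial point is tracking these cancellations for $\tau(\bnormal,\bnormal)$; everything else is a straightforward substitution of Lemma \ref{lem:Pdecom:dual}. A shortcut that avoids the bookkeeping is available: by Lemma \ref{gaugefix} applied with $\bm{\vartheta}=\bnormal$ and $u=0$ there is a gauge transformation, of parameter $z=\bnormal(n)=1$, taking the data to one with $\ellc=\bnormal$ and $\elltwo=0$; since $z=1$ this transformation fixes $n$, hence the basis $\{n,e_A\}$, and leaves $\gamma$, $[\bY]$, $\fv$ and (by Remark \ref{rem:sign:mathfrak(v)}) $\tau$ unchanged. In this adapted gauge $\psi_A=0$ and $\elltwo=0$, so $P=\mathfrak{h}^{AB}e_A\otimes e_B$, whence $P(\bnormal,\cdot)=0$ and $\bnormal(W)=0$, and the three formulas follow at once.
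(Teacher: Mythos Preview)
Your main computation is correct and follows essentially the same route as the paper: both insert the decomposition \eqref{Pdecom:abstract} of $P$ into the definition \eqref{tau} (after setting $\ntwo=0$) and track the cancellations. The paper organises this by first collapsing $\tau^{df}$ to the closed form
\[
\tau^{df}=-\fv\,\mathfrak{h}^{AB}\Big([\bY](e_A,e_B)\,n^dn^f-[\bY](n,e_A)\,(n^de_B^f+e_B^dn^f)+[\bY](n,n)\,e_A^de_B^f\Big)
\]
and then reading off the components, whereas you contract with $\{\bnormal,\bs{\theta}^A\}$ first and simplify afterwards; the algebra is the same either way.

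Your gauge-fixing shortcut at the end is a genuinely different and cleaner argument that the paper does not use. The key observation---that Lemma \ref{gaugefix} with $\bm{\vartheta}=\bnormal$, $u=0$ gives $z=1$, so that $n$, the basis $\{n,e_A\}$, its dual, $\gamma$, $[\bY]$, $\fv$ and $\tau$ are all unchanged while $\psi_A$ and $\elltwo$ are set to zero---reduces the statement to a trivial case and bypasses the term-by-term cancellation entirely. This is worth keeping: it explains \emph{why} the $\psi_A$ and $\elltwo$ contributions must cancel (they are pure gauge for the quantities appearing in \eqref{tau:prev:abs}), rather than merely verifying that they do.
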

\begin{proof}
Inserting the decomposition \eqref{Pdecom:abstract} into Definition \ref{def:energy-momentum_tensor} yields
\begin{align*}
\tau^{df}=-\fv\mathfrak{h}^{AB}\lp  [\bY](e_A,e_B)n^dn^f-[\bY](n,e_A)(n^de_B^f+e_B^dn^f)+[\bY](n,n)e_A^de_B^f\rp
\end{align*}
after a simple but somewhat long computation in which several terms cancel out. Contracting with $\{\bnormal,\bs{\theta}^A\}$ it is immediate to get \eqref{tau:prev:abs}.
\end{proof}
\begin{remark}\label{rem:def:eg:flux:pressure}
In the literature, the different components of the energy-momentum tensor 
of a thin shell $\{\N,\gamma,\ellc,\elltwo,\bY^{\pm}, \rho_{\ell}^{\pm},\bs{\textup{J}}^{\pm},\fv\}$ are interpreted physically as an energy density $\rho$, an energy-flux $j$ and a pressure $p$ (see e.g.\ \textup{\cite{poisson2004relativist}}). However, this is usually done in a context where $\{\N,\gamma,\ellc,\elltwo,\bY^{\pm}, \rho_{\ell}^{\pm},\bs{\textup{J}}^{\pm},\fv\}$ are embedded with
%
riggings $\rig^{\pm}$ that are 
null and 
%
%
orthogonal to the basis vectors $\{e_A\}$. In a completely general framework, we propose the following geometric definitions for the physical quantities $\{\rho,p,j\}$: 
\begin{equation}
\label{def:rho:jA:p:abs}\rho\defi -\fv\textup{tr}_P[\bY],\qquad p\defi -\fv[\bY](n,n),\qquad j\defi \fv\lp P\lp [\bY](n,\cdot),\cdot\rp-\fv \elltwo p n\rp .
\end{equation}
Definitions \eqref{def:rho:jA:p:abs} are justified because in the null case \eqref{tau}  can be written in terms of $\{\rho,p,j\}$ as 
%
\begin{equation}
\label{tau:rho:p:j}\tau=\rho n\otimes n+p\lp P+2\elltwo n\otimes n\rp+2 j\otimes_s n. 
\end{equation} 
For null shells, the vector field $j$ satisfies $\gamma(j,\cdot)=\fv[\bY](n,\cdot)+p\ellc$ and $\ellc(j)=0$, which makes the definitions \eqref{def:rho:jA:p:abs} consistent since the one-form $\bs{j}\defi\gamma(j,\cdot)$ verifies $\bs{j}(n)=0$. 
Moreover, a direct calculation based on \eqref{defY(n,.)andQ} and \eqref{Qprime} proves the following gauge behaviour for the pressure $p$: 
\begin{equation}
\label{gauge:pressure} \G_{(z,V)}(p)=\frac{p}{\vert z\vert}.
\end{equation}
Whenever $\elltwo=0$ and $\psi_A=0$, it is straightforward to check that \eqref{def:rho:jA:p:abs} becomes
\begin{equation}
\label{poisson:rho:p:j:54368652}\rho=-\fv \mathfrak{h}^{AB}[\bY](e_A,e_B),\quad p=-\fv [\bY](n,n),\quad j=\fv \mathfrak{h}^{AB}[\bY](n,e_B)e_A, 
\end{equation}
after using \eqref{gammadecom:abstract} and  \eqref{Pdecom:abstract}. This allows one to recover the standard definitions for $\{\rho,p,j\}$ introduced e.g.\ in \textup{\cite[Eq. (3.99)]{poisson2004relativist}}. Expressions \eqref{poisson:rho:p:j:54368652} coincide with the definitions proposed in \textup{\cite{poisson2004relativist}} whenever $\fv=-1$
which, as mentioned in Remark \ref{rem:sign:mathfrak(v)}, corresponds to the rigging $\rig^-$ pointint inwards.
\end{remark}
We conclude this subsection by recalling several aspects on the geometry of transverse submanifolds embedded in null metric hypersurface data sets. We again refer to \cite{manzano2023constraint} for proofs. Given null metric hypersurface data $\metdata$, a \textit{transverse submanifold $S$ is a codimension one embedded submanifold of $\N$ to which $n$ is everywhere transverse}. 
%
%
%
Letting $\psi:S\longhookrightarrow \mathcal{N}$ be the embedding 
of $S$ in $\N$ we define 
$h\defi \psi^{\star}\gamma$. It is a fact \cite{manzano2023constraint} that $h$ is
a metric on $S$ and we denote by
$\nabh$
its Levi-Civita covariant derivative. When it is clear from the context
we identify vectors and scalars on $S$ with their counterpars on
$\psi(S)$.  
For any $\mathfrak{p}$-covariant tensor $\bs{T}$ along $\Psi(S)$ and given a basis $\{v_A\}$ of $\Gamma(TS)$, we define $\bs{T}_{\parallel}\defi \psi^{\star}\bs{T}$ and write $T_{A_1\dots A_{\mathfrak{p}}}\defi \bs{T}_{\parallel}(v_{A_1},\dots, v_{A_{\mathfrak{p}}})$ (without the parallel symbol). Capital Latin indices are raised with $h_{IJ}$ and its inverse $h^{IJ}$. 
With the definition  $\elltwo_{\parallel} := h^{IJ} \ell_I \ell_J$, the pull-back to $S$ of the $\nablao$ derivative of any $\mathfrak{p}$-covariant tensor field $\mathcal{T}$ along $\psi(S)$ takes the following explicit form \cite[Lem.\ 3.15]{manzano2023constraint}: 
\begin{align}
\nn &v_{A_1}^{a_1}\dots v_{A_{\mathfrak{p}}}^{a_{\mathfrak{p}}}v_B^b\nablao_b\mathcal{T}_{a_1\cdots a_{\mathfrak{p}}} = \nabla_{B}^h\mathcal{T}_{A_1\cdots A_{\mathfrak{p}}}-\sum_{\mathfrak{i}=1}^{\mathfrak{p}}\ellp^J\mathcal{T}_{A_1\cdots A_{\mathfrak{i}-1}JA_{\mathfrak{i}+1}\cdots A_{\mathfrak{p}}} \U_{A_{\mathfrak{i}}B}\\
\label{covderpcovtensoronS}  &-\sum_{\mathfrak{i}=1}^{\mathfrak{p}}\mathcal{T}_{a_1\cdots a_{\mathfrak{p}}}v_{A_1}^{a_1}\dots v_{A_{\mathfrak{i}-1}}^{a_{\mathfrak{i}-1}}n^{a_{\mathfrak{i}}}v_{A_{\mathfrak{i}+1}}^{a_{\mathfrak{i}+1}}\dots v_{A_{\mathfrak{p}}}^{a_{\mathfrak{p}}}\lp \nabla^h_{(A_{\mathfrak{i}}}\ellp_{B)}+(\elltwo-\elltwo_{\parallel})\U_{A_{\mathfrak{i}}B} \rp.
\end{align}

\subsection{Matching of spacetimes and junction conditions}\label{sec:Matching:Paper:Big:Sec}
From now on 
we focus on the problem of matching two spacetimes with boundary. 
In this section we recall known results, first for boundaries of any causality and secondly in the null case. 

Consider two spacetimes $(\Mpm,g^{\pm})$ with boundaries $\nullhyp^{\pm}$ 
of any causal character. It is well-known (see e.g.\ \cite{darmois1927memorial}, \cite{obrien1952jump}, \cite{le1955theories},  \cite{israel1966singular}, \cite{bonnor1981junction}, \cite{clarke1987junction}, \cite{barrabes1991thin},  \cite{mars1993geometry}) that 
the matching of $(\Mpm,g^{\pm})$ across $\nullhyp^{\pm}$ is possible if and only if the so-called \textit{junction conditions} or \textit{matching conditions} are satisfied. In the language of the formalism of hypersurface data, the matching 
requires \cite{mars2013constraint} that 
there exist metric hypersurface data $\metdata$ that can be embedded in both spacetimes $(\Mpm,g^{\pm})$ with embeddings $\phi^{\pm}$ such that $\phi^{\pm}(\N)=\nullhyp^{\pm}$ and riggings $\rig^{\pm}$, i.e.\ there must exist two pairs $\{\phi^{\pm},\rig^{\pm}\}$ 
satisfying 
\begin{align}
\label{junctcondAbs2} \gamma =(\phi^{\pm})^{\star}(g^{\pm}),\qquad
\ellc =(\phi^{\pm})^{\star}(g^{\pm}(\rig^{\pm},\cdot)),\qquad \elltwo =(\phi^{\pm})^{\star}(g^{\pm}(\rig^{\pm},\rig^{\pm})).
\end{align}
In addition, the riggings $\rig^{\pm}$ must fulfil an orientation condition (see item $(ii)$ below). 
In these circumstances, it is always possible to select one of the embeddings 
freely
by adapting 
$\N$ to one of the boundaries. In the following we shall make use of this freedom by fixing $\phi^-$ at our convenience. This entails no loss of generality. Note that making the choice in the minus side is also of no relevance, as one can always switch the names of the spacetimes to be matched.

When the junction conditions are satisfied, the geometry of the shell \cite{mars2013constraint} is determined by the jump of the transverse tensors $\bY^{\pm}$ defined as
\begin{equation}
\label{eqA21}\bY^{\pm}\stackrel[\eqref{YtensorEmbDef}]{\mbox{\tiny {\textbf{def}}}}{\,\, = \,\, } \dfrac{1}{2}(\phi^{\pm})^{\star}\lp \pounds_{\rig^{\pm}}g^{\pm}\rp,\qquad\text{namely}\qquad \lc \bY\rc\defi \bY^+-\bY^-.
\end{equation}
In the literature, however, the matching conditions are not normally presented in terms of a hypersurface data set. Instead, they 
 are usually formulated as follows (see e.g.\ \cite{mars1993geometry}). 
\begin{junction*}
The matching of 
$(\Mpm,g^{\pm})$ across $\nullhyp^{\pm}$ can be performed if and only if 
%
\begin{itemize}
\item[$(i)$] There exist two riggings $\rig^{\pm}$ along $\nullhyp^{\pm}$ and a diffeomorphism ${\Phi}:\N^-\longrightarrow\N^+$ such that 
\begin{align}
\label{junctcondST} {\Phi}^{\star}(g^+)&=g^-,\qquad
 {\Phi}^{\star}(g^+(\rig^+,\cdot))=g^-(\rig^-,\cdot),\qquad  {\Phi}^{\star}(g^+(\rig^+,\rig^+))=g^-(\rig^-,\rig^-).
\end{align}
\item[$(ii)$] One rigging must point inwards w.r.t.\ its boundary and the other outwards. 
\end{itemize}
\end{junction*}
For the rest of the paper, two riggings $\rig^{\pm}$ satisfying $(i)$-$(ii)$ 
for a diffeomorphism $\Phi$ will be called \textit{matching riggings}. The diffeomorphism $\Phi$  will be referred to as \textit{matching map}.

If \eqref{junctcondST} holds for two riggings $\rig^{\pm}$ then, for any other choice of rigging on one of the sides, 
\eqref{junctcondST} 
is 
fulfilled as well\footnote{For any other rigging $\rig'^-=z(\rig^-+V)$ with $\{ z\in\Fcal^{\star}(\nullhyp^-), V\in\Gamma(T\nullhyp^-)\}$, the rigging $\rig'^+\defi \widehat{z}(\rig^++\widehat{V})$ with $\{\widehat{z}\defi ({\Phi}^{-1})^{\star}z,\widehat{V}\defi {\Phi}_{\star}V\}$ also verifies \eqref{junctcondST}. The same logic applies if one changes the rigging on the plus side.}
(although different choices of rigging on one side will correspond to different riggings on the other side). We shall make use of this freedom to fix $\rig^-$ at 
will, again  
with no loss of generality.

As proven in Lemmas 2 and 3 of \cite{mars2007lorentzian}, given a rigging on one side (say $\rig^-$) and a diffeomorphism ${\Phi}:\nullhyp^-\longrightarrow\nullhyp^+$ satisfying $\Phi^{\star}g^+=g^-$, at non-null points the second and third equations of \eqref{junctcondST} yield either no solution for $\rig^+$ (hence the matching is not possible) or two solutions for $\rig^+$ with opposite orientation. At null points, on the other hand, if there exists a solution $\rig^+$ then it is unique. This means that at non-null points one can always make a suitable choice of rigging $\rig^+$ so that the junction condition $(ii)$ is fulfilled, and hence one only needs to care about \eqref{junctcondST}. In the null case, however, this is not so.  
It 
can 
happen that there exists a solution $\rig^+$ of \eqref{junctcondST} but with unsuitable orientation, and then the matching cannot be performed. Thus, at null points conditions \eqref{junctcondST} are necessary but not sufficient to guarantee that the matching is feasible \cite{mars2007lorentzian}.

When the matching 
is possible, the corresponding matching map ${\Phi}$ is the key object upon which the whole matching depends. This is so because once the point-to-point identification of the boundaries $\nullhyp^{\pm}$ (ruled by $\Phi$) is known, one matching rigging can be selected at will (as we have seen) and the other is the unique solution that arises from enforcing both \eqref{junctcondST} and $(ii)$. 
All the information about the matching is therefore codified by $\Phi$, or equivalently by the 
embedding $\phi^+$
(cf. \eqref{junctcondAbs2}).

We now concentrate on the case when the boundaries $\nullhyp^{\pm}$ are null. From a spacetime viewpoint, this problem was addressed in \cite{manzano2021null} (see also \cite{manzano2022general}, where 
the matching across Killing horizons of order zero was studied). 
In the remainder of the section we summarize the main results of \cite{manzano2021null}.

Consider two $(\n+1)$-dimensional spacetimes $(\Mpm,g^{\pm})$ with null boundaries $\nullhyp^{\pm}$ that can be foliated by a family of diffeomorphic spacelike cross-sections. Assume further that one of the boundaries lies in the future of its corresponding spacetime while the other lies in its spacetime past. This entails no loss of generality, as explained in \cite{manzano2021null}.  
%
%
We 
construct foliation functions $v_{\pm}\in\Fcal(\nullhyp^{\pm})$ and basis $\{L^{\pm},k^{\pm},v_I^{\pm}\}$ of $\Gamma(T\Mpm)\vert_{\nullhyp^{\pm}}$ according to \eqref{basis}. The surface gravities of $k^{\pm}$ are $\ke_{k^{\pm}}^{\pm}$. As in Section \ref{sec:Geom:null:hyp:paper:matching}, the leaves of the foliations are denoted by $\{S^{\pm}_{v_{\pm}}\}$, while their corresponding induced metrics are $h^{\pm}$. 
We also let $\btsff_{\pm}^k$ be the second fundamental forms of $\nullhyp^{\pm}$ w.r.t.\ $k^{\pm}$,  
and introduce the tensors 
$\bs{\Theta}_{\pm}^L$, 
$\bs{\sigma}^{\pm}_L$ on the leaves $\{S^{\pm}_{v_{\pm}}\}$ (cf.\ \eqref{somedefs}). The scalar functions $\{\mu^{\pm}_a\}\subset\Fcal(\nullhyp^{\pm})$ are defined by \eqref{eqA30} w.r.t.\ the basis $\{L^{\pm},k^{\pm},v_I^{\pm}\}$.

As we have seen, in order to perform a matching we need to embed a single metric hypersurface data set in both spacetimes. We codify the already described freedom in the choice of $\{\phi^-,\rig^-\}$ as follows. 
We first consider an abstract null hypersurface $\N$ and define coordinates $\{y^1=\lambda,y^A\}$ therein. Then, we construct null embedded metric hypersurface data by enforcing that $(a)$ the push-forwards $\{e_a^-\defi \phi^-_{\star}(\cp_{y^a})\}$ 
coincide with the basis vectors $\{k^-,v^-_I\}$ (since $\{k^-,v^-_I\}$ are chosen at will, with this procedure we ensure that 
$\phi^-$
is built at our convenience) and $(b)$ that the rigging $\rig^-$ coincides with the basis vector $L^-$. 
This amounts to impose
\begin{equation}
\label{eiyl}
e^-_1=k^-,\qquad e^-_I=v^-_I,\qquad \rig^-=L^-. 
\end{equation}
Thus, $\rig^-$ is a null past rigging (recall \eqref{basis}) and 
$\lambda$ is a coordinate along the degenerate direction of $\N$. In fact, the subsets $\{\lambda=\text{const.}\}\subset\N$ are all diffeomorphic \cite{manzano2021null} and define a (spacelike) foliation of 
$\N$. 

For the matching of $(\Mpm,g^{\pm})$ to be possible, there must exist another pair $\{\phi^+,\rig^+\}$ so that \eqref{junctcondAbs2} hold (and the orientations of $\rig^{\pm}$ are suitable). 
In that case, 
we can build another basis $\{e_a^+\defi \phi^+_{\star}(\cp_{y^a})\}$ of $\Gamma(T\nullhyp^+)$ and then determining the matching requires that we find the explicit form of the vectors 
$\{e_a^{+}\}$ (which fully codify 
$\phi^+$). In 
the basis $\{k^+,v^+_I\}$ of $\Gamma(T\nullhyp^+)$, these vectors 
decompose as \cite{manzano2021null} 
\begin{equation}
\label{eqA4}
e^+_1=\mathfrak{f} k^+,\qquad e^+_I=a_Ik^++b_I^Jv^+_J,
\end{equation}
where $\mathfrak{f},a_I,b_I^J\in\Fcal(\nullhyp^+)$ are given by
\begin{align}
\label{bIJ:coef} \mathfrak{f}&=\dfrac{\cp H(\lambda,y^A)}{\cp\lambda},\qquad a_I=\dfrac{\cp H(\lambda,y^A)}{\cp y^I},\qquad b_I^K=\dfrac{\cp h^K( y^A)}{\cp y^I}
\end{align}
in terms of a set of functions $\{H(\lambda,y^B),h^A(y^B)\}$ on $\N$. The functions $\{H,h^A\}$ encode all the matching information  and hence they determine $\phi^+$. In fact, given coordinates $\{ v_+, u^I \}$ on $\nullhyp^+$ such that $v^+_I = \partial_{u^I}$ (i.e.\ $\{u^I_+\}$ are constant along the null generators), the embedding $\phi^+$ is such that 
\cite{manzano2021null}
\begin{align}
\phi^+(\lambda, y^I) =\left (v_+= H(\lambda,y^I), u^I = h^I(y^J) \right ). 
\label{embedPhi+}                                                   
\end{align}
The function $H\lp\lambda,y^A\rp$ is named \textit{step function} because it measures a kind of jump along the null direction when crossing the matching hypersurface. It must satisfy the condition $\cp_{\lambda}H>0$ \cite{manzano2021null}. 
The explicit form of the matching rigging $\rig^+$ was computed in \cite[Cor. 1]{manzano2021null} and reads 
%
\begin{equation}
\label{uniqxi}
\hspace{-0.3cm}\rig^+=\dfrac{\mu_1^-}{\cp_{\lambda}H}\lp \frac{1}{\mu_1^+}
  L^+-  h_+^{AB}\lp (b^{-1})^I_A \lp \cp_{y^I}H
  - \frac{1}{\mu_1^-} (\cp_{\lambda} H) \mu^-_I \rp + \frac{1}{\mu_1^+}
  \mu^+_A \rp Z_B\rp,
\end{equation}
where $(b^{-1})_{I}^J\defi \cp_{h^I}y^J$ and $Z_B\defi   \frac{1}{2} \lp (b^{-1})_B^J \lp
  \cp_{y^J}H -  \frac{1}{\mu_1^{-}} (\cp_{\lambda}H) \mu_J^- \rp -\frac{1}{\mu^+}
  \mu^+_B   \rp k^++v^+_B$.

The solvability of the first junction condition in \eqref{junctcondAbs2} constitutes the core problem for the existence of a matching. In terms of the metrics $h^{\pm}$,  
it can be rewritten as  
\begin{align}
\label{junct1}h^-_{IJ}\vert_p &= b_I^Lb_J^Kh^+_{LK}\vert_{{\Phi}( p)}\quad\forall p\in\nullhyp^-.
\end{align}
Equation \eqref{junct1} is an isometry condition between each submanifold 
$\{v_-=\textup{const.}\}\subset\nullhyp^-$
%
%
and its corresponding image on $\nullhyp^+$. On the other hand, the identification of $\{e^{\pm}_1\}$ requires the existence of a diffeomorphism $\Psi$ (ruled by the coefficients $b^{A}_{B}$ fulfilling \eqref{junct1}) between the set of null generators on both sides.
Moreover, combining 
\eqref{eqA28:plus:on:N},  
\eqref{eiyl}-\eqref{bIJ:coef}, \eqref{junct1} and $\{e_1^{\pm}\defi\phi_{\star}^{\pm}(\cp_{y^a})\}$ yields \cite{manzano2021null}
\begin{equation}
\label{detrmfA}
{\btsff}_-^{k}( v^-_I,v^-_J)=(\cp_{\lambda} H) b_I^Ab_J^B{\btsff}_+^{k}( v^+_A,v^+_B).
\end{equation} 
Thus, for each possible choice of $\Psi$    
(i.e.\ of $\{b^{A}_{B}\}$), 
\eqref{detrmfA} determines a unique value for $\partial_{\lambda} H$ unless the two second fundamental forms vanish simultaneously. In the latter case, the step function $H$ cannot be restricted. \tcr{Consequently, when $\nullhyp^{\pm}$ are totally geodesic, 
if a single matching of $(\Mpm,g^{\pm})$ can be performed then an infinite number of matchings (one for each possible step function $H$) are feasible \cite{manzano2021null}.} 

When the matching is possible, the matter-energy content of the shell is given by the next proposition.  
\begin{proposition}
\label{prop6} 
\textup{\cite{manzano2021null}} 
Assume that the matching of $(\Mpm,g^{\pm})$ across $\nullhyp^{\pm}$ is possible and that it is determined by the functions $\{H(\lambda,y^A),h^B(y^A)\}$. 
Let $h_{IJ}$ be the induced metric on the leaves $\{\lambda=\textup{const.}\}\subset\N$,  $h^{IJ}$ its inverse tensor and $\nabh$ its Levi-Civita covariant derivative.
Define the vector fields $\{W_A\}$, the scalars $\{\ov{\mu}^+_A\}$, the covector $q\in\Gamma(T^{\star}\N)$ and the vector field $X=X^a\cp_{y^a}\in\Gamma(T\N)$ by 
\begin{align}
\label{def:ovmu:and:W_A} \ov{\mu}_I^+\defi &\spc b_I^B\mu^+_B,\qquad W_I\defi b_I^Bv_B^+,\qquad q_I\defi -\mu_1^+\nabh_I H-\ov{\mu}_I^+,\\
\label{X1} X^1\defi &\spc \dfrac{ h^{IJ}}{2\mu_1^+\cp_{\lambda}H}\lp q_I+\dfrac{\mu_1^+\mu^-_I}{\mu_1^-}\cp_{\lambda}H\rp\lp q_J-\dfrac{\mu_1^+\mu^-_J}{\mu_1^-}\cp_{\lambda}H\rp,\qquad  X^A\defi  h^{IA}\lp q_I+\dfrac{\mu_1^+\mu^-_I}{\mu_1^-}\cp_{\lambda}H\rp. 
\end{align}
Then, the components of the tensor $[\bY]\defi \bY^{+}-\bY^-$ are
\begin{align}
\label{Y11}&[\bY](\cp_{\lambda},\cp_{\lambda})= -\mu_1^{-}\lp \ke_{k^+}^+\cp_{\lambda}H-\ke^-_{k^-} + \dfrac{\cp_{\lambda}\cp_{\lambda}H}{\cp_{\lambda}H}\rp,\\
\label{Y1J}  &[\bY](\cp_{\lambda},\cp_{y^J})= -\mu_1^-\Bigg( \ke_{k^+}^+\nabla_{J}^hH-\big(\bs{\sigma}_{L}^+(  W_J)-\bs{\sigma}^-_{L}( v^-_J)\big)+\dfrac{\cp_{\lambda}\cp_{y^J}H}{\cp_{\lambda}H}+\dfrac{X^L{\btsff}_-^{k}( v_J^-,v_L^-)}{\mu_1^+\cp_{\lambda}H}\Bigg),\\
\nonumber &[\bY](\cp_{y^I},\cp_{y^J})=-\mu_1^-\Bigg(\dfrac{\ke_{k^+}^+\nabla_{I}^hH \textup{ }\nabla_{J}^hH}{\cp_{\lambda}H}-\dfrac{\nabla_{{\lp I\rd}}^hH\textup{ }\cp_{\lambda}\ov{\mu}^+_{\ld J\rp}}{\mu_1^+\lp\cp_{\lambda}H\rp^2}-\dfrac{2\nabla_{{\lp I\rd}}^hH\textup{ }\bs{\sigma}_{L}^+( W_{\ld J\rp})}{\cp_{\lambda}H} \\
\label{YIJ}&-\bigg(\dfrac{\bs{\Theta}^{L}_+( W_{\lp I\rd},W_{\ld J\rp})}{\mu_1^+\cp_{\lambda}H}-\frac{\bs{\Theta}^{L}_-( v_{\lp I \rd}^-,v_{\ld J \rp}^-)}{\mu_1^-}\bigg)- \dfrac{X^1{\btsff}_-^{k}( v_I^-,v_J^-)}{\mu_1^+\cp_{\lambda}H}+\dfrac{\nabla_{I}^h\nabla_{J}^hH}{\cp_{\lambda}H}+\dfrac{\nabla_{\lp I \rd}^h\ov{\mu}^+_{\ld J \rp}}{\mu_1^+\cp_{\lambda}H}-\dfrac{\nabla_{\lp I \rd}^h\mu^-_{\ld J \rp}}{\mu_1^-}\Bigg),
\end{align}
while the energy-momentum tensor of the shell is given by (the sign $\fv$ is given by Definition \ref{def:thinshell})
\begin{align*}
\tau(d\lambda,d\lambda)=-\fv\frac{ h^{IJ}[\bY](\cp_{y^I},\cp_{y^J})}{(\mu_1^-)^2},\quad \tau(d\lambda,dy^I)=\fv\frac{ h^{IJ}[\bY](\cp_{\lambda},\cp_{y^J})}{(\mu_1^-)^2},\quad \tau(dy^I,dy^J)=-\fv \frac{ h^{IJ}[\bY](\cp_{\lambda},\cp_{\lambda})}{(\mu_1^-)^2}.
\end{align*} 
\end{proposition}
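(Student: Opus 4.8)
The plan is to obtain $\bY^{+}$ and $\bY^{-}$ separately from the defining relation \eqref{eqA21} and then subtract. Writing $\frac{1}{2}(\pounds_{\rig}g)(X,Y)=\frac{1}{2}\big(g(\nabla_{X}\rig,Y)+g(\nabla_{Y}\rig,X)\big)$ and evaluating on the adapted bases $\{e^{\pm}_a\defi\phi^{\pm}_{\star}(\cp_{y^a})\}$ gives $\bY^{\pm}(e^{\pm}_a,e^{\pm}_b)=g^{\pm}\big(\nabla^{\pm}_{e^{\pm}_{(a}}\rig^{\pm},e^{\pm}_{b)}\big)$, so everything reduces to computing these covariant derivatives on $\Mpm$. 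Once $[\bY]=\bY^{+}-\bY^{-}$ is known in the basis $\{\cp_{\lambda},\cp_{y^A}\}$ the energy--momentum tensor is immediate: since $\ntwo=0$ one has $\nu=\phi^-_{\star}n$ by \eqref{normal}, and $\nu$ is a null normal to $\nullhyp^-$ with $g^-(\nu,L^-)=1$, which forces $\nu=(\mu_1^-)^{-1}k^-$, hence $n=(\mu_1^-)^{-1}\cp_{\lambda}$ with dual covector $\mu_1^-\,d\lambda$ and $\mathfrak h_{AB}=h_{AB}$; substituting into \eqref{tau:prev:abs} of Corollary \ref{cor:eg-mom-tensor:abstract} reproduces exactly the three displayed components of $\tau$. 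Thus the whole content of the proposition is the pair of $\bY^{\pm}$ computations.

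For the minus side the computation is short. By \eqref{eiyl} one has $\rig^-=L^-$, $e^-_1=k^-$, $e^-_I=v^-_I$, so $\bY^-(e^-_a,e^-_b)=g^-\big(\nabla^-_{e^-_{(a}}L^-,e^-_{b)}\big)$. Using the surface--gravity identity $\nabla^-_{k^-}k^-=\ke^-_{k^-}k^-$, the commutation $[k^-,v^-_I]=0$, the definitions \eqref{somedefs} of $\bs{\Theta}^{L}_-$ and $\bs{\sigma}^-_L$, and Leibniz applied to $g^-(L^-,k^-)=\mu_1^-$ and $g^-(L^-,v^-_I)=\mu_I^-$ to trade a derivative of $L^-$ for a derivative of $k^-$ or $v^-_I$, one obtains $\bY^-(\cp_{\lambda},\cp_{\lambda})$, $\bY^-(\cp_{\lambda},\cp_{y^I})$ and $\bY^-(\cp_{y^I},\cp_{y^J})$ in closed form in terms of $\mu_1^-,\mu_I^-$, their derivatives, $\ke^-_{k^-}$, $\bs{\sigma}^-_L$ and the symmetrised tensor $\bs{\Theta}^{L}_-$ --- recall $\bs{\Theta}^L$ is only the second fundamental form of a leaf when $L$ is leaf--orthogonal, but its symmetric part is all that enters $\bY$.

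The plus side is the core of the argument. Here one substitutes the explicit decompositions $e^+_1=(\cp_{\lambda}H)k^+$ and $e^+_I=(\cp_{y^I}H)k^++b^J_Iv^+_J$ of \eqref{eqA4}--\eqref{bIJ:coef} together with the explicit matching rigging $\rig^+$ of \eqref{uniqxi}, written schematically as $\rig^+=\alpha L^++\beta^A Z_A$ with $Z_A$ the vector field defined just after \eqref{uniqxi}, and expands $\nabla^+_{e^+_a}\rig^+$ by Leibniz. Derivatives falling on the scalar coefficients ($\alpha$, $\beta^A$, $\cp_{\lambda}H$, $\cp_{y^I}H$, $b^J_I$, the $\mu^{\pm}_a$) produce precisely the terms with second derivatives of $H$, with $\nabh_{(I}\ov{\mu}^+_{J)}$, $\nabh_{(I}\mu^-_{J)}$, etc.; derivatives falling on $L^+$, $k^+$, $v^+_A$ are handled exactly as on the minus side ($\ke^+_{k^+}$, $\bs{\Theta}^{L}_+$, $\bs{\sigma}^+_L$). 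To close the expressions in intrinsic terms one invokes the isometry condition \eqref{junct1}, which makes $y^I\mapsto h^I(y^J)$ an isometry from the leaves $\{\lambda=\text{const}\}\subset\N$ onto the leaves of $\nullhyp^+$, so that the Jacobian $b^A_I$ intertwines $\nabh$ with the leaf connection of $\nullhyp^+$ and allows the definitions \eqref{def:ovmu:and:W_A}--\eqref{X1} of $W_A$, $\ov{\mu}^+_A$, $q$ and $X$ to package all pulled--back plus--side data; one also uses \eqref{eqA28:plus:on:N} and its matching consequence \eqref{detrmfA} to convert $\btsff^k_+$ into $(\cp_{\lambda}H)^{-1}\btsff^k_-$. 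Subtracting $\bY^-$ and simplifying --- a step in which several raw $\mu^{\pm}_a$--derivative terms cancel between the two sides --- yields \eqref{Y11}--\eqref{YIJ}, and the $\tau$ formula then follows from the first paragraph.

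The main obstacle is exactly this plus--side bookkeeping: tracking every Leibniz contribution generated by the many coefficient functions sitting inside $\rig^+$ and inside the $e^+_a$, disentangling the non--symmetric pieces of $\bs{\Theta}^{L}_{\pm}$, and carrying out the reduction through \eqref{junct1} and \eqref{detrmfA} so that the final components close up in exactly the stated intrinsic form. By comparison the minus--side computation and the passage from $[\bY]$ to $\tau$ are routine.
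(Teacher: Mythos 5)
Your proposal is correct, but it follows the route of the original reference \cite{manzano2021null} (from which Proposition \ref{prop6} is quoted) rather than the route this paper uses to establish the result. You compute $\bY^{\pm}(e^{\pm}_{(a},e^{\pm}_{b)})=\la \nabla^{\pm}_{e^{\pm}_{(a}}\rig^{\pm},e^{\pm}_{b)}\ragpm$ directly in the spacetimes, substituting the bases \eqref{eqA4}--\eqref{bIJ:coef} and the explicit rigging \eqref{uniqxi}, expanding by Leibniz and reducing via \eqref{junct1}, \eqref{eqA28:plus:on:N} and \eqref{detrmfA}; your identification $n=(\mu_1^-)^{-1}\cp_{\lambda}$, $\bnormal=\mu_1^-d\lambda$ and the passage to $\tau$ through Corollary \ref{cor:eg-mom-tensor:abstract} are also correct. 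The paper instead obtains $[\bY]$ once and for all at the abstract level: $\bY^+=\G_{(z,V)}(\nfi^{\star}\hatbY^+)$ together with Lemma \ref{lem:liegamma} and identity \eqref{contrNsym} yields the closed formula \eqref{[Y]:general:matching} (Theorem \ref{thm:Best:THM:matching}, Lemma \ref{lem:tau:point}), which is then specialized in Section \ref{sec:matching:fol:abs} to product topology and translated back into the spacetime objects $\ke_{k}$, $\bs{\sigma}_L$, $\bs{\Theta}^L$, $\btsff^k$ by Lemma \ref{geometric_quantities_sub}. The trade-off is instructive: the abstract route concentrates all of the ``plus-side bookkeeping'' you rightly flag as the main obstacle into one gauge-transformation identity valid for any topology, but as carried out in Section \ref{sec:matching:fol:abs} it recovers the proposition only in the gauge $\mu_1^{\pm}=1$, $\mu_A^{\pm}=0$; your direct computation handles the general $\mu^{\pm}_a$ appearing in the statement, at the price of the heavy Leibniz expansion and the cross-side cancellations of $\mu$-derivative terms, which your sketch describes plausibly but does not carry out.
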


\section{Abstract formulation of the matching problem}\label{sec:matchingABSTRACT}
%
%
%
In the previous section, we have summarized the main aspects of the matching of two general spacetimes with null boundaries that admit a foliation by diffeomorphic spacelike sections. The matching conditions have been formulated from a spacetime viewpoint, and we have recalled the geometrical objects upon which the matching depends (namely the step function $H$ and the diffeomorphism $\Psi$). We have also recollected the explicit expressions for the gravitational and matter-energy content of the resulting shells 
(Proposition \ref{prop6}).  

The results we have just summarized leave (at least) two interesting problems unaddressed. The first one is
%
whether one can obtain analogous results without the topological assumptions on the boundaries and the second is whether there is a way of formulating the matching problem in a fully abstract manner, namely without making any reference to the actual spacetimes to be matched.
As already explained in the \hyperref[sec:Intro:Paper]{Introduction}, addressing these problems is the key object of this paper.
%
%
%


Let us start with the abstract formulation of the junction conditions. For that purpose, we first consider that the boundaries $\nullhyp^{\pm}$ of the spacetimes $(\Mpm,g^{\pm})$ to be matched have \textit{any topology} and \textit{any causal character}. Since $\nullhyp^-$ is embedded, there exists an abstract manifold $\N$ and an embedding $\iota^-:\N\longhookrightarrow\Ml$ such that $\iota^-(\N)=\nullhyp^-$. From the embedding $\iota^-$, one can construct an infinite number of embeddings simply by applying additional diffeomorphisms within $\N$. To elude this unavoidable redundancy, we henceforth let $\iota^-$ be one specific choice among all possible. As discussed before, two spacetimes $(\Mpm,g^{\pm})$ can be matched if there exists a pair of embeddings $\phi^{\pm}:\N\longhookrightarrow\Mpm$ related to a matching map $\Phi$ by $\phi^+=\Phi\circ\phi^-$. Moreover, the embedding and the rigging on one of the sides (say the minus side) can always be chosen freely. Suppose we enforce $\phi^-=\iota^-$ and take a specific rigging $\rig^-$. Then we can 
build embedded hypersurface data $\D\defi \{\N,\gamma,\ellc,\elltwo,\bY\}$ by requiring \eqref{emhd}-\eqref{YtensorEmbDef}, i.e.\ by defining
\begin{equation}
\label{middleeq14}
\gamma\defi (\iota^-)^*(g^-),\quad\ellc\defi (\iota^-)^*(g^-(\rig^-,\cdot)),\quad\elltwo\defi (\iota^-)^*(g^-(\rig^-,\rig^-)),\quad 
\bY^-\defi \frac{1}{2}(\iota^-)^*(\pounds_{\rig^-}g^-).
\end{equation}
Thus, all the information about the matching is encoded in $\phi^+$ and the junction conditions \tcr{are \eqref{junctcondAbs2}}. 
These conditions, 
although  of a more abstract nature than \eqref{junctcondST}, still codify the matching information in the pair $\{\phi^+,\rig^+\}$, which is not of abstract nature. In order to determine the matching in terms of objects defined at the abstract level, we must take one step further. The following theorem, based on the existence of a diffeomorphism $\nfi$ of the abstract manifold $\N$ onto itself, sets up the corresponding construction.
%
%
%
\begin{theorem}\label{thm:Best:THM:JC}
Consider two hypersurface data 
$\D\defi \{\N,\gamma,\ellc,\elltwo,\bY^-\}$, 
$\hatD\defi \{\N,\hatg,\hatellc,\hatelltwo,\hatbY^+\}$
embedded in two spacetimes $(\M^-,g^-)$, $(\M^+,g^+)$ with embeddings $\iota^-$, $\iota^+$ and riggings $L^-$, $L^+$ respectively. Assume that $\iota^{\pm}(\N)\defi \nullhyp^{\pm}$ are boundaries of $(\Mpm,g^{\pm})$ and
\tcr{let $\epsilon^{+} = +1$ (resp. $\epsilon^+ = -1$) if $L^{+}$ points outwards (resp. inwards) from $\M^+$. Define $\epsilon^-$ in the same way (i.e.\ $\epsilon^- = +1$ if $L^-$ points outwards, $\epsilon^- = -1$ if inwards).}
Then, the matching of $(\Mpm,g^{\pm})$ across $\nullhyp^{\pm}$ is possible if and only if 
\begin{itemize}
\item[(i)] There exist a gauge group element $\G_{(z,V)}$ 
and a diffeomorphism $\nfi$ of $\N$ onto itself such that
\begin{equation}
\label{JC:HD}\G_{(z ,V )}(\nfi^{\star}\hatg)=\gamma,\qquad\G_{(z ,V )}(\nfi^{\star}\hatellc)=\ellc,\qquad\G_{(z ,V )}(\nfi^{\star}\hatelltwo)=\elltwo;
\end{equation}
\item[(ii)] $\textup{sign}(z)=-\textup{sign}(\epsilon^+)\textup{sign}(\epsilon^-)$. 
%
\end{itemize}
\end{theorem}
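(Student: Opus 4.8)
The plan is to reduce everything to the classical Junction Conditions recalled above, which state that the matching of $(\Mpm,g^{\pm})$ is possible if and only if there exist matching riggings $\rig^{\pm}$ and a matching map $\Phi:\nullhyp^-\longrightarrow\nullhyp^+$ satisfying \eqref{junctcondST} together with the orientation condition $(ii)$. The bridge between the two formulations is the identity $\nfi^{\star}=(\iota^-)^{\star}\circ\Phi^{\star}\circ(\iota^+)_{\star}$, i.e. $\Phi=\iota^+\circ\nfi\circ(\iota^-)^{-1}$ (as a map $\nullhyp^-\to\nullhyp^+$), which turns a diffeomorphism $\nfi$ of $\N$ into a diffeomorphism of the boundaries and vice versa. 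I would treat the two implications separately.

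For the forward implication, assume the matching is possible. Then there exist matching riggings $\rig^{\pm}$ and a matching map $\Phi$ with \eqref{junctcondST}. Set $\nfi\defi(\iota^+)^{-1}\circ\Phi\circ\iota^-$, a diffeomorphism of $\N$ onto itself. The data $\D$ is $\{\iota^-,L^-\}$-embedded and $\hatD$ is $\{\iota^+,L^+\}$-embedded by hypothesis, while the riggings $\rig^{\pm}$ differ from $L^{\pm}$ by gauge transformations (recall \eqref{gaugerig} and the footnote after \eqref{junctcondST} establishing that any rigging on one side is matched to a rigging on the other): write $\rig^-=z_-(L^-+(\iota^-)_\star V_-)$, so that the data $\G_{(z_-,V_-)}(\D)$ is $\{\iota^-,\rig^-\}$-embedded, and similarly on the plus side. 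Pulling back \eqref{junctcondST} through $\Phi$ and using the embedding relations \eqref{emhd}, together with the fact that $\nfi^{\star}$ intertwines pull-backs by $\iota^+$ and $\iota^-$, yields that $\nfi^{\star}$ applied to the metric part of the $\{\iota^+,\rig^+\}$-embedded data equals the metric part of the $\{\iota^-,\rig^-\}$-embedded data. Absorbing the two rigging gauge transformations into a single $\G_{(z,V)}$ via the composition law gives \eqref{JC:HD}. For part $(ii)$: since $\Phi^{\star}(g^+(\rig^+,\rig^+))=g^-(\rig^-,\rig^-)$ and the riggings must satisfy orientation condition $(ii)$ (one inward, one outward), the relative sign of $z$ (which governs whether $\rig^-$ and the $\nfi$-transported $\rig^+$ point to the same side of $\N$) must be negative precisely when the \emph{a priori} riggings $L^{\pm}$ point to the same side, i.e. when $\epsilon^+\epsilon^-=+1$; tracking the signs of $z_\pm$ against $\epsilon^\pm$ gives $\sgn(z)=-\sgn(\epsilon^+)\sgn(\epsilon^-)$.

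For the converse, suppose $\nfi$ and $\G_{(z,V)}$ satisfy $(i)$ and $(ii)$. Define $\Phi\defi\iota^+\circ\nfi\circ(\iota^-)^{-1}:\nullhyp^-\longrightarrow\nullhyp^+$, a diffeomorphism. From the embeddedness of $\hatD$ and the gauge-covariance of embeddedness (a gauge-transformed data set is embedded with the correspondingly transformed rigging), the data $\G_{(z,V)}(\nfi^{\star}\hatD{}_{\mathrm{m}})$ is embedded via $\iota^-$ with a rigging which is the $(\iota^-)^{-1}\circ\nfi^{-1}\circ\iota^+$-transport of a gauge transform of $L^+$; call the resulting spacetime rigging on $\nullhyp^-$ the candidate $\rig^-$, and its $\Phi$-image the candidate $\rig^+$ on $\nullhyp^+$. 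By \eqref{JC:HD} this data coincides with the $\{\iota^-,L^-\}$-embedded data $\D$ up to the gauge transformation, so the first fundamental forms agree, and unwinding \eqref{emhd} gives exactly \eqref{junctcondST} for $\{\rig^{\pm},\Phi\}$. Finally, the sign condition $(ii)$ on $z$ forces $\rig^-$ and $\rig^+$ to point to opposite sides of $\nullhyp^{\pm}$, so the orientation condition is met and the matching is possible.

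The main obstacle will be bookkeeping of the riggings and signs: one must carefully distinguish the \emph{a priori} riggings $L^{\pm}$ (used to define the embedded data $\D,\hatD$) from the \emph{matching} riggings $\rig^{\pm}$, show that the two single gauge transformations (one on each side) needed to pass from $L^{\pm}$ to $\rig^{\pm}$ compose into the \emph{one} $\G_{(z,V)}$ appearing in \eqref{JC:HD} — this uses $\G_{(z_2,V_2)}\circ\G_{(z_1,V_1)}=\G_{(z_1z_2,\,V_2+z_2^{-1}V_1)}$ and the fact that $\nfi^{\star}$ commutes appropriately with the gauge action — and then verify that the resulting $z$ carries the sign dictated by the inward/outward orientations. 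The orientation analysis at null points, where (as recalled from \cite{mars2007lorentzian}) the solution $\rig^+$ is unique once $\Phi$ and $\rig^-$ are fixed, is the delicate point that makes condition $(ii)$ genuinely necessary rather than automatically satisfiable; everything else is a translation of \eqref{junctcondST} through the diffeomorphisms $\iota^{\pm}$ and $\nfi$.
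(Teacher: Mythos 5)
Your overall strategy is the same as the paper's: translate between the abstract conditions and the classical junction conditions through $\Phi=\iota^+\circ\nfi\circ(\iota^-)^{-1}$, construct the matching rigging on the plus side as a gauge transform of $L^+$, and track the sign of $z$ against the inward/outward orientations of $L^{\pm}$. Both directions are covered and the sign bookkeeping $\sgn(z)=-\sgn(\epsilon^+)\sgn(\epsilon^-)$ comes out correctly for the right reason.

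One step as written does not make sense and needs repair: you repeatedly speak of ``transporting'' a rigging from one spacetime to the other (``the $(\iota^-)^{-1}\circ\nfi^{-1}\circ\iota^+$-transport of a gauge transform of $L^+$'', ``its $\Phi$-image the candidate $\rig^+$''). A rigging is a vector field along the boundary that is \emph{transversal} to it, valued in $T\M^{\pm}$, so it cannot be pushed forward by a map that is only defined between the boundaries. The correct construction, which is what the paper does, stays entirely inside $\M^+$: from $(i)$ one defines $\phi^+\defi\iota^+\circ\nfi$, $\hatV\defi\iota^+_{\star}(\nfi_{\star}V)$, $z'$ by $\nfi^{\star}((\iota^+)^{\star}z')=z$, and the candidate rigging $\rig^+\defi z'(L^++\hatV)$; one then checks directly via \eqref{emhd} that the single data set $\D$ is $\{\phi^+,\rig^+\}$-embedded in $(\M^+,g^+)$, which is the hypersurface-data form \eqref{junctcondAbs2} of the junction conditions --- no object ever crosses between the two spacetimes. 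A second, smaller difference: in the direction ``matching $\Rightarrow$ $(i)$--$(ii)$'' you keep general $\rig^{\pm}$ and compose two gauge transformations (one per side), which additionally requires justifying that $\nfi^{\star}$ intertwines the gauge action; the paper avoids this by invoking the stated freedom to fix $\rig^-=L^-$ outright, so that only one gauge transformation (on the plus side) appears and $\sgn(z)$ is read off directly from the decomposition $\rig^+=z'(L^++\hatV)$. Neither issue is fatal, but the rigging-transport language should be replaced by the intrinsic construction above before the converse direction counts as a proof.
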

\begin{proof}
The fact that $\D$, $\hatD$ are embedded on $(\Mpm,g^{\pm})$ respectively means that
\begin{align}
\label{def:D,hatD:1} \gamma &\defi (\iota^-)^{\star}(g^-), & \ellc &\defi (\iota^-)^{\star}(g^-(L^-,\cdot)), & \elltwo &\defi (\iota^-)^{\star}(g^-(L^-,L^-)), & \bY^-&\defi \frac{1}{2}(\iota^-)^{\star}(\pounds_{L^-}g^-),\\
\label{def:D,hatD:2} \hatg&\defi (\iota^+)^{\star}(g^+), &\hatellc &\defi (\iota^+)^{\star}(g^+(L^+,\cdot)), &  \hatelltwo &\defi (\iota^+)^{\star}(g^+(L^+,L^+)), & \hatbY^+&\defi \frac{1}{2}(\iota^+)^{\star}(\pounds_{L^+}g^+).
\end{align}
Since the spacetimes $(\Mpm,g^{\pm})$, the embeddings $\iota^{\pm}$ and the riggings $L^{\pm}$ are all given, the tensor fields in \eqref{def:D,hatD:1}-\eqref{def:D,hatD:2} are known. To prove the first part of the theorem, we start by assuming $(i)$-$(ii)$. Thus, there exist a pair $\{z\in\Fcal^{\star}(\N),V\in\Gamma(T\N)\}$ and a diffeomorphism $\nfi:\N\longrightarrow\N$ so that \eqref{JC:HD} holds. These conditions can be rewritten as (recall  \eqref{gaugegamma&ell2}, 
$\mathcal{G}^{-1}_{(z,V)} = \mathcal{G}_{(z^{-1},-zV)}$)
\begin{align}
\label{JC:HD:inv:elem1}\nfi^{\star}\hatg&=\mathcal{G}^{-1}_{(z,V)}(\gamma)=\mathcal{G}_{(z^{-1},-zV)}(\gamma)=\gamma,\\
\label{JC:HD:inv:elem2}\nfi^{\star}\hatellc&=\mathcal{G}^{-1}_{(z,V)}(\ellc)=\mathcal{G}_{(z^{-1},-zV)}(\ellc)=\frac{\ellc}{z}- \gamma \lp V,\cdot\rp,\\
\label{JC:HD:inv:elem3}\nfi^{\star}\hatelltwo&=\mathcal{G}^{-1}_{(z,V)}(\elltwo)=\mathcal{G}_{(z^{-1},-zV)}(\elltwo)=\frac{\ell^{(2)}}{z^2}-\frac{2\ellc\lp V\rp}{z}+ \gamma \lp V,V\rp.
\end{align} 
Let us define the map $\phi^+\defi \iota^+\circ\nfi$, the vector field $\hatV \defi \iota^+_{\star}(\nfi_{\star} V)$, the function $z'\in\Fcal^{\star}(\nullhyp^+)$ given by $\nfi^{\star}((\iota^+)^{\star}z')\defi z$ and the rigging $\rig^+\defi z'(L^++\hatV)$ along $\nullhyp^+$. By definition of $z'$, it holds that $\text{sign}(z)=\text{sign}(z')$.
%
%
%
%
%
%
On the other hand, combining \eqref{JC:HD:inv:elem1}-\eqref{JC:HD:inv:elem3} with the fact that $\hatD$ is embedded with embedding $\iota^+$ and rigging $L^+$, it follows
\begin{align}
\label{argument:1} \gamma=&\spc\nfi^{\star}\hatg=\nfi^{\star}((\iota^+)^{\star}(g^+))=(\phi^+)^{\star}(g^+),\\
\label{argument:2} \ellc=&\spc 
%
%
z \lp \nfi^{\star}\hatellc+(\nfi^{\star}\hatg)(V ,\cdot) \rp= z \nfi^{\star}\lp (\iota^+)^{\star}\lp g^+(L^+,\cdot)+g^+(\hatV ,\cdot)\rp\rp=(\phi^+)^{\star}(g^+(\rig^+,\cdot)),\\
\nn \elltwo  
%
%
=&\spc z^2\lp \nfi^{\star}\hatelltwo+2(\nfi^{\star}\hatellc)(V )+(\nfi^{\star}\hatg)(V ,V )\rp= z^2\nfi^{\star}\lp(\iota^+)^{\star}\lp g^+(L^+,L^+)+2g^+(L^+,\hatV )+g^+(\hatV ,\hatV )\rp\rp\\
\label{argument:3} =&\spc (\phi^+)^{\star}(g^+(\rig^+,\rig^+)).
\end{align}
The data $\D$ is therefore embedded in $(\M^+,g^+)$ with embedding $\phi^+$ and rigging $\rig^+$. Thus, conditions  \eqref{junctcondAbs2} are satisfied for $\phi^-=\iota^-$, $\phi^+=\iota^+\circ\nfi$ and for the riggings $\rig^-=L^-$, $\rig^+$. 
Moreover, combining $(ii)$ (which holds by assumption), the definition of $\rig^+$ and $\text{sign}(z')=\text{sign}(z)$, it follows 
\begin{equation}
\label{rig+:orient:signs:etc} \rig^+=-\text{sign}(\epsilon^+)\text{sign}(\epsilon^-)\vert z'\vert\lp L^++V'\rp.
\end{equation}
It is straightforward to check that \eqref{rig+:orient:signs:etc} implies that whenever $L^-$ points inwards (resp. outwards) then $\rig^+$ points outwards (resp. inwards) irrespectively of the orientation of $L^+$. 
Thus, $\D$ is embedded in $(\Mpm,g^{\pm})$ and $L^-$, $\rig^+$ are such that one points inwards and the other outwards, which means that the matching of $(\Mpm,g^{\pm})$ is possible. 

To prove the converse, we assume that the matching is possible for two pairs $\{\phi^{\pm},\rig^{\pm}\}$. 
We have already discussed the flexibility of selecting at will the embedding and the rigging on one side (say the minus side). Let us therefore set $\phi^-=\iota^-$, $\rig^-=L^-$. 
Since both $L^+$ and $\rig^+$ are riggings along $\nullhyp^+$, there exists a pair $\{z'\in\Fcal^{\star}(\nullhyp^+), \hatV\in\Gamma(T\nullhyp^+)\}$ such that $\rig^+=z'(L^++\hatV)$. Moreover, one can define a diffeomorphism $\nfi:\N\longrightarrow\N$ by $\phi^+\defi \iota^+\circ\nfi$. But then one can follow the arguments of \eqref{argument:1}-\eqref{argument:3} backwards and prove \eqref{JC:HD} for a function $z\in\Fcal^{\star}(\N)$ defined by $z\defi \nfi^{\star}((\iota^+)^{\star}z')$. As before,  $\sgn(z)=\sgn(z')$ so both $\rig^+=z'(L^++\hatV)$ and $z'L^+=\sgn(z)\vert z'\vert L^+$ have the same orientation (because $\hatV$ is tangent to $\nullhyp^+$). By assumption the matching is possible, hence $L^-$, $\rig^+$ are such that one points inwards and the other outwards. If $L^-$ points inwards (resp. outwards) then $\sgn(z) L^+$ must point outwards (resp. inwards), so $\sgn(z)=\sgn(\epsilon^+)$ ($\sgn(z)=-\sgn(\epsilon^+)$) is forced. This means that $(i)$-$(ii)$ are both fulfilled.
\end{proof}
\begin{remark}
Theorem \ref{thm:Best:THM:JC} does not impose any conditions on the topology of the abstract manifold $\N$, except for the very mild one that hypersurface data sets can be defined on $\N$. 
\end{remark}
\begin{remark}
In Theorem \ref{thm:Best:THM:JC} we have not restricted the gauges of the data sets $\D$, $\hatD$ (we let the two riggings $L^{\pm}$ be given, but no conditions have been imposed on them). Each specific choice of $L^{\pm}$ will fix a particular gauge on $\D$, $\hatD$. Moreover, Theorem \ref{thm:Best:THM:JC} holds for data sets $\D$, $\hatD$ of any causal nature. In particular, $\D$, $\hatD$ are not required to contain non-null or null points exclusively.
\end{remark}
\begin{remark}
  As proven in \textup{\cite[Lem.\ 3.6]{mars2020hypersurface}}, given {\MHD} $\metdata$ and a point $p\in\N$, the gauge group elements leaving $\metdata$ invariant at $p$ are $(i)$ $\G_{(1,0)}\vert_p$ if $p$ is null and $(ii)$ $\{\G_{(1,0)}\vert_p,\G_{(-1,-2\ell)}\vert_p\}$ if $p$ is non-null, where the vector $\ell\vert_p$ is obtained by raising index to $\ellc|_p$ with the inverse metric $\gamma^{\sharp}\vert_p$ (which in that case exists).
  Since gauge parameters $\{z,V\}$ are smooth by definition, it follows that when $\N$ contains a null point, only the identity element of $\G$ leaves the whole metric hypersurface data invariant. On the contrary, when $\N$ consists exclusively of non-null points there exist two gauge elements which do not transform the metric data. In this last case, the rigging $\G_{(-1,-2\ell)}(\rig)$ corresponds \textup{\cite{mars2020hypersurface}} to the reflection of $\rig$ w.r.t.\ the tangent plane $T_q\phi(\N)$ at each point $q\in\phi(\N)$.

In view of the above, when there are no null points on $\N$, condition $(ii)$ can always be fulfilled once (i) is granted. Indeed,
%
  if there exists a gauge group element $\G_{(z,V)}$ satisfying (i) then this also happens for $\G_{(-1,-2\ell)}\circ\G_{(z,V)}=\G_{(-z,-2\ell-V)}$. 
%
%
%
Thus, there always exists a suitable choice of 
gauge parameter $z$ for which $(i)$ and $(ii)$ hold. 

On the contrary, when $\N$ contains null points only the gauge element $\G_{(1,0)}$ leaves the hypersurface data invariant, which means that $(i)$ can be fulfilled for a gauge group element $\G_{(z,V)}$ but $z$ may have the wrong sign. This is the underlying reason why the spacetime conditions \eqref{junctcondST} provide one unique solution for $\rig^+$ for given $\{\rig^-,\Phi\}$ (see the corresponding discussion in Section \ref{sec:Matching:Paper:Big:Sec}).    
\end{remark}
\begin{remark}
In Theorem \ref{thm:Best:THM:JC}, we have expressed the junction conditions as a restriction over two data sets and a requirement on the sign of a gauge parameter. Theorem \ref{thm:Best:THM:JC} therefore constitutes an abstract formulation of the standard matching conditions. In particular, a remarkable advantage of Theorem \ref{thm:Best:THM:JC} is that it allows us to study different matchings in two different levels. At the first level one takes whatever hypersurface data sets $\D$, $\hatD$ satisfying $(i)$ and studies its properties from a fully detached point of view. At this level, the spacetimes need not even exist. The problem can then move on and study whether or not one can construct spacetimes in which these data can be embedded so that condition $(ii)$ holds.
In other words, by Theorem \ref{thm:Best:THM:JC} one can produce a thin shell of any causality with full freedom to prescribe the gravitational and matter-energy content, and then study the problem of constructing the resulting spacetime $(\M,g)$ which contains it. This is of great use, as it provides a framework to build examples of spacetimes with thin shells of any type. 
\end{remark}
In the setup of Theorem \ref{thm:Best:THM:JC}, the matching riggings are $\{L^-,\rig^+\}$, where $\rig^+$ is of the form \eqref{rig+:orient:signs:etc}. This means that the sign $\epsilon^-$ coincides with the sign $\epsilon$ introduced in 
Definitions \ref{def:thinshell} and \ref{def:energy-momentum_tensor}. It is convenient not to fix the signs $\epsilon^{\pm}$ (or the riggings $L^{\pm}$) a priori because it may well occur that transverse vectors $L^{\pm}$ on each spacetime are already privileged or have been chosen for whatever other reason.  
The main point of the construction in Theorem \ref{thm:Best:THM:JC} is firstly that it provides a fully abstract description of the matching and secondly that it keeps maximum flexibility so that one can adapt Theorem \ref{thm:Best:THM:JC} to any particular scenario.

\section{Abstract formulation of the matching problem: null boundaries}\label{sec:null:boundaries:abstract}
For the remainder of the paper, we focus on the case when both $\D$ and $\hatD$ are null hypersurface data. Under these circumstances, by Lemma \ref{gaugefix} we know that there exists a pair $\{z,V\}$ ensuring that the second and third equations in \eqref{JC:HD} are fulfilled. It follows that the only restrictions are therefore condition $(ii)$ in Theorem \ref{thm:Best:THM:JC} and 
the first equality in \eqref{JC:HD}, namely 
\begin{equation}
\label{isom:cond:abs}\nfi^{\star}\hatg=\gamma. 
\end{equation}
Consequently, given two spacetimes $(\Mpm,g^{\pm})$ with null boundaries $\nullhyp^{\pm}$, 
either there exists (at least) one diffeomorphism $\nfi$ satisfying \eqref{isom:cond:abs} or not. In the former case the matching is possible (provided $(ii)$ holds) and, as we shall see next, all information about the matching is codified by $\nfi$. 

From now on and without loss of generality, we again make the harmless assumption that one of the boundaries lies in the future of its corresponding spacetime while the other lies in its spacetime past 
(see the discussion in \cite{manzano2021null}).  
%
%
The following lemma provides the explicit form of the gauge parameters $\{z,V\}$ and of the matching rigging $\rig^+$ 
in terms of the diffeomorphism $\nfi$.
\begin{lemma}\label{lem:z:and:V:and:rig}
Assume that conditions $(i)$-$(ii)$ in Theorem \ref{thm:Best:THM:JC} hold for a pair of embedded null hypersurface data $\D$, $\hatD$. Then, the gauge parameters $\{z ,V \}$ are given by
\begin{equation}
\label{z:and:V:ABS}z=\frac{1}{(\nfi^{\star}\hatellc)(n)},\qquad V =-P(\nfi^{\star}\hatellc,\cdot)+ \frac{ P(\nfi^{\star}\hatellc,\nfi^{\star}\hatellc)-\nfi^{\star}\hatelltwo}{2(\nfi^{\star}\hatellc)(n)} n.
\end{equation}
Moreover, the matching identifies the rigging 
$L^-$ 
with the following rigging in the plus side
\begin{equation}
\label{rig:general:matching}\rig^+=z'\lp L^+ -\iota^+_{\star}\Big(\nfi_{\star}\big( P(\nfi^{\star}\hatellc,\cdot)\big)\Big)+ \mu \iota^+_{\star}(\nfi_{\star}n) \rp,
\end{equation}
where $z'\in\Fcal^{\star}(\nullhyp^+)$, $\mu\in\Fcal(\nullhyp^+)$ are given
explicitly by 
\begin{equation}
\label{z':mu:rig:functs}
\nfi^{\star}((\iota^+)^{\star}(z'))=\frac{1}{(\nfi^{\star}\hatellc)(n)},\qquad \nfi^{\star}((\iota^+)^{\star}(\mu))=\frac{ P(\nfi^{\star}\hatellc,\nfi^{\star}\hatellc)-\nfi^{\star}\hatelltwo}{2(\nfi^{\star}\hatellc)(n)}.
\end{equation}
\end{lemma}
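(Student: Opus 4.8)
The plan is to reduce everything to the gauge-fixing result Corollary \ref{cor:2datas1gamma} (equivalently Lemma \ref{gaugefix}) applied to the pulled-back data. First I would observe that, by hypothesis, conditions $(i)$--$(ii)$ of Theorem \ref{thm:Best:THM:JC} hold, so \eqref{JC:HD} is satisfied for some pair $\{z,V\}$ and some diffeomorphism $\nfi$. Since $\D$ and $\hatD$ are null hypersurface data, so is the pulled-back metric part $\{\N,\nfi^{\star}\hatg,\nfi^{\star}\hatellc,\nfi^{\star}\hatelltwo\}$ (nullity is a diffeomorphism-invariant statement about $\gamma$, and moreover \eqref{isom:cond:abs} $\nfi^{\star}\hatg=\gamma$ holds in the null case). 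Then \eqref{JC:HD} says precisely that $\G_{(z,V)}$ maps the null metric hypersurface data $\{\N,\nfi^{\star}\hatg,\nfi^{\star}\hatellc,\nfi^{\star}\hatelltwo\}$ to $\metdata$. Applying Corollary \ref{cor:2datas1gamma} with $\uD=\metdata$ and the role of $\D$ played by $\{\N,\nfi^{\star}\hatg,\nfi^{\star}\hatellc,\nfi^{\star}\hatelltwo\}$ — or directly Lemma \ref{gaugefix} with $\bm{\vartheta}=\ellc$, $u=\elltwo$ applied to the pulled-back data, which additionally gives uniqueness — yields immediately the formulas \eqref{z:and:V:ABS}: namely $z=1/(\nfi^{\star}\hatellc)(n)$ and $V=\frac{1}{(\nfi^{\star}\hatellc)(n)}P(\nfi^{\star}\hatellc,\cdot)+\frac{\elltwo-P(\nfi^{\star}\hatellc,\nfi^{\star}\hatellc)}{2((\nfi^{\star}\hatellc)(n))^2}n$. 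One small subtlety to address here: the lemma produces the gauge element taking the pulled-back data \emph{to} $\metdata$, whereas \eqref{z:and:V:ABS} as written should be checked to match the sign conventions; concretely, the vector $V$ in \eqref{z:and:V:ABS} is written with a leading minus and with $\nfi^{\star}\hatelltwo$ in place of $\elltwo$, so I would verify that \eqref{JC:HD:inv:elem1}--\eqref{JC:HD:inv:elem3} (the inverse form of \eqref{JC:HD}) combined with the uniqueness statement of Lemma \ref{gaugefix} indeed forces exactly the stated expression; this is a routine but necessary bookkeeping step, essentially tracking $\G^{-1}_{(z,V)}=\G_{(z^{-1},-zV)}$ through the formula of Lemma \ref{gaugefix}.

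Next I would turn to the matching rigging. From the proof of Theorem \ref{thm:Best:THM:JC} we already know that the matching identifies $L^-=\rig^-$ with $\rig^+\defi z'(L^++\hatV)$, where $\phi^+=\iota^+\circ\nfi$, $\hatV=\iota^+_{\star}(\nfi_{\star}V)$, and $z'\in\Fcal^{\star}(\nullhyp^+)$ is defined by $\nfi^{\star}((\iota^+)^{\star}z')=z$. So it only remains to substitute the explicit $z$ and $V$ from \eqref{z:and:V:ABS} into these expressions. The first relation in \eqref{z':mu:rig:functs} is then just the definition of $z'$ together with the value of $z$ just obtained. For the rigging itself, pushing $V$ forward with $\nfi_{\star}$ and then $\iota^+_{\star}$ and using linearity gives
\[
\hatV=\iota^+_{\star}\Big(\nfi_{\star}V\Big)=-\iota^+_{\star}\big(\nfi_{\star}(P(\nfi^{\star}\hatellc,\cdot))\big)+\big(\nfi^{\star}((\iota^+)^{\star}\mu)\text{-term}\big)\,\iota^+_{\star}(\nfi_{\star}n),
\]
where the scalar coefficient of $\iota^+_{\star}(\nfi_{\star}n)$ is the push-forward of $\frac{P(\nfi^{\star}\hatellc,\nfi^{\star}\hatellc)-\nfi^{\star}\hatelltwo}{2(\nfi^{\star}\hatellc)(n)}$; calling this function $\mu$ (i.e.\ defining $\mu$ by the second equation in \eqref{z':mu:rig:functs}) and noting the sign discrepancy between the $V$ of \eqref{z:and:V:ABS} and that of Lemma \ref{gaugefix} must be reconciled consistently, one obtains exactly \eqref{rig:general:matching}. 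I would be careful that $\nfi_{\star}$ applied to a vector field expressed via $P$ and $n$ on $\N$ is well-defined since $\nfi$ is a diffeomorphism of $\N$, and that $\iota^+_{\star}$ then carries it to a vector field along $\nullhyp^+$; no issue arises because all maps involved are diffeomorphisms onto their images.

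The main obstacle — really the only nontrivial point — is the sign/convention matching in the formula for $V$: the statement writes $V=-P(\nfi^{\star}\hatellc,\cdot)+\cdots$ rather than $V=+\frac{1}{(\nfi^{\star}\hatellc)(n)}P(\cdots)+\cdots$ as Lemma \ref{gaugefix} would directly give for the element sending the pulled-back data to $\metdata$. Resolving this requires deciding carefully whether $\G_{(z,V)}$ or its inverse is the relevant gauge element and then invoking the composition law $\G^{-1}_{(z,V)}=\G_{(z^{-1},-zV)}$ to rewrite things; I expect that once one is consistent about which data set is the source and which the target, the apparent discrepancy is absorbed (in particular, note $z=1/(\nfi^{\star}\hatellc)(n)$ so $z^{-1}=(\nfi^{\star}\hatellc)(n)$ and $-zV$ reshuffles the factors of $(\nfi^{\star}\hatellc)(n)$), and the stated formulas \eqref{z:and:V:ABS}, \eqref{rig:general:matching}, \eqref{z':mu:rig:functs} come out exactly. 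Everything else is a direct substitution into results already established, so the lemma follows.
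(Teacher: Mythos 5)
Your proposal is correct in substance but routes the first half of the lemma through Lemma \ref{gaugefix} and the group inverse law, whereas the paper rederives \eqref{z:and:V:ABS} directly: it contracts \eqref{JC:HD:inv:elem2} with $n$ to get $z$, invokes Lemma \ref{lemBestLemmaMarc} to pin down $V$ up to the function $u_0=\ellc(V)$, and then fixes $u_0$ from \eqref{JC:HD:inv:elem3}. The two arguments are computationally the same underneath (Lemma \ref{gaugefix} is proved by exactly that contraction-plus-kernel argument), so what your route buys is economy — you reuse an established uniqueness statement instead of redoing it — at the price of the inversion bookkeeping you correctly flag. One point you should tighten: as literally phrased, ``apply Lemma \ref{gaugefix} to the pulled-back data with $\bm{\vartheta}=\ellc$, $u=\elltwo$'' would yield a formula written in terms of the tensors $P$, $n$ of the \emph{pulled-back} data $\{\N,\nfi^{\star}\hatg,\nfi^{\star}\hatellc,\nfi^{\star}\hatelltwo\}$, not those of $\D$; the clean version is to apply Lemma \ref{gaugefix} with source $\metdata$ and targets $\bm{\vartheta}=\nfi^{\star}\hatellc$, $u=\nfi^{\star}\hatelltwo$ (so the resulting $P$, $n$ are those of $\D$, and the numerator is $\nfi^{\star}\hatelltwo-P(\nfi^{\star}\hatellc,\nfi^{\star}\hatellc)$, not $\elltwo-P(\nfi^{\star}\hatellc,\nfi^{\star}\hatellc)$ as in your intermediate formula), obtaining $\G_{(\tilde z,\tilde V)}$, and then set $\G_{(z,V)}=\G_{(\tilde z,\tilde V)}^{-1}=\G_{(\tilde z^{-1},-\tilde z\tilde V)}$; this reproduces \eqref{z:and:V:ABS} exactly. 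The second half of your argument — substituting $\{z,V\}$ into $\rig^+=z'(L^++\iota^+_{\star}(\nfi_{\star}V))$ from the proof of Theorem \ref{thm:Best:THM:JC} — is precisely what the paper does.
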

\begin{proof}
The explicit form \eqref{z:and:V:ABS} for the function $z$ follows from contracting \eqref{JC:HD:inv:elem2} with $n$ and using \eqref{prod2}. The vector field $V$ can be partially obtained also from \eqref{JC:HD:inv:elem2} by particularizing Lemma \ref{lemBestLemmaMarc} for $W=V$, $\bs{\varrho}=z^{-1}\ellc-\nfi^{\star}\hatellc$. This gives 
\begin{equation}
\label{V:prov}V=P\lp \frac{\ellc}{z}-\nfi^{\star}\hatellc,\cdot\rp+u_0n\stackbin{\eqref{prod3}}=-P(\nfi^{\star}\hatellc,\cdot)+\lp u_0-\frac{\elltwo}{z}\rp n,
\end{equation}
where $u_0\defi \ellc(V)$ is a function yet to be determined. This is done by substituting \eqref{V:prov} into \eqref{JC:HD:inv:elem3}. First, 
$\gamma(V,V)=\bs{\varrho}(V)=z^{-2}\elltwo +P(\nfi^{\star}\hatellc,\nfi^{\star}\hatellc)$ because of \eqref{prod2}-\eqref{prod3} and $z^{-1}=(\nfi^{\star}\hatellc)(n)$. Thus,
\begin{align}
\label{eq:for:u0:abs:match} \nfi^{\star}\hatelltwo=\frac{2}{z}\lp \frac{\elltwo}{z}-u_0\rp+ P(\nfi^{\star}\hatellc,\nfi^{\star}\hatellc)\quad\Longrightarrow\quad u_0=\frac{\elltwo}{z}+\frac{z}{2}\lp P(\nfi^{\star}\hatellc,\nfi^{\star}\hatellc)-\nfi^{\star}\hatelltwo\rp
\end{align}
so that substituting this into \eqref{V:prov} proves \eqref{z:and:V:ABS}. Equation \eqref{rig:general:matching} is a direct consequence of \eqref{z:and:V:ABS} and the fact that $\rig^+=z'(L^++\iota^+_{\star}(\nfi_{\star}V))$.
\end{proof}
Whenever there exists a diffeomorphism $\nfi$ solving \eqref{isom:cond:abs} and given a basis $\{n,e_A\}$ of $\Gamma(T\N)$, it is possible to obtain specific expressions for the push-forward vector fields $\{\nfi_{\star}n,\nfi_{\star}e_A\}$. This is done in the next corollary. We use a hat for all objects defined in the data set $\hatD$, in particular $\hatP$ and $\hatn$ are constructed in correspondence with \eqref{prod1}-\eqref{prod4}. 
\begin{corollary}\label{cor:nfistar(n,eA)}
Assume that conditions $(i)$-$(ii)$ in Theorem \ref{thm:Best:THM:JC} hold for a pair of embedded null hypersurface data $\D$, $\hatD$. 
Let $\{n,e_A\}$ be a basis of $\Gamma(T\N)$ and define the covectors $\{\bs{W}_A\}$ and the functions $\{\psi_A,\chi_{(A)}\}$ along $\N$ by 
\begin{equation}
\label{def:cov:WA:paper} \nfi^{\star}\bs{W}_A\defi \gamma(e_A,\cdot),\qquad\psi_A\defi \ellc(e_A),\qquad \chi_{(A)}\defi (\nfi^{-1})^{\star}(z^{-1}\psi_A)-\bs{W}_A(\nfi_{\star}V).
\end{equation}
Then, 

\vspace{-0.5cm}

\begin{multicols}{2}
\noindent
\begin{equation}
\label{nfistar(n):general:abs}\nfi_{\star}n=\big((\nfi^{-1})^{\star}z\big)^{-1} \hatn,
\end{equation}
\begin{equation}
\label{nfistar(e_A):general:abs}\nfi_{\star}e_A=\hatP(\bs{W}_A,\cdot)+\chi_{(A)}\hatn,
\end{equation}
\end{multicols}

\vspace{-0.4cm}

Moreover, it holds that $\hatP(\bs{W}_A,\hatellc)=0$ and $\nfi^{\star}\chi_{(A)}=(\nfi^{\star}\hatellc)(e_A)$. 
\end{corollary}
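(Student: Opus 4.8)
The plan is to characterize each of the pushed‑forward vector fields $\nfi_\star n$ and $\nfi_\star e_A$ by its contractions with $\hatg$ and $\hatellc$, and then to read off its explicit form from Lemma \ref{lemBestLemmaMarc} applied to $\hatD$. The only tool needed beyond the algebraic identities \eqref{prod1}--\eqref{prod4} and the relations \eqref{JC:HD:inv:elem1}--\eqref{JC:HD:inv:elem3} is the elementary diffeomorphism identity $(\nfi^\star\bs{\alpha})(X)=\nfi^\star\big(\bs{\alpha}(\nfi_\star X)\big)$ valid for any $\bs{\alpha}\in\Gamma(T^\star\N)$ and $X\in\Gamma(T\N)$ (together with its $2$‑covariant analogue relating $\gamma$ and $\hatg$), equivalently $\bs{\alpha}(\nfi_\star X)=(\nfi^{-1})^\star\big((\nfi^\star\bs{\alpha})(X)\big)$.

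I would first handle $\nfi_\star n$. Since $\D$ is null we have $\gamma(n,\cdot)=0$, so \eqref{isom:cond:abs} gives $\hatg(\nfi_\star n,\nfi_\star Y)=(\nfi^{-1})^\star\big((\nfi^\star\hatg)(n,Y)\big)=(\nfi^{-1})^\star\big(\gamma(n,Y)\big)=0$ for every $Y$; as $\nfi$ is a diffeomorphism this forces $\nfi_\star n\in\textup{Rad}(\hatg)=\langle\hatn\rangle$, hence $\nfi_\star n=c\,\hatn$ for some $c\in\Fcal(\N)$. Contracting with $\hatellc$ and using $\hatellc(\hatn)=1$ (which is \eqref{prod2} for $\hatD$, recalling $\hatntwo=0$) gives $c=\hatellc(\nfi_\star n)=(\nfi^{-1})^\star\big((\nfi^\star\hatellc)(n)\big)$, and inserting $(\nfi^\star\hatellc)(n)=z^{-1}$ from \eqref{z:and:V:ABS} yields \eqref{nfistar(n):general:abs}. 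As a by‑product I record $\bs{W}_A(\hatn)=\big((\nfi^{-1})^\star z\big)\,\bs{W}_A(\nfi_\star n)=\big((\nfi^{-1})^\star z\big)\,(\nfi^{-1})^\star\big(\gamma(e_A,n)\big)=0$.

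Next I would handle $\nfi_\star e_A$. The pullback identity for $\gamma$ and $\hatg$ gives $\hatg(\nfi_\star e_A,\cdot)=(\nfi^{-1})^\star\big(\gamma(e_A,\cdot)\big)=\bs{W}_A$, the last equality being the definition of $\bs{W}_A$ in \eqref{def:cov:WA:paper}. For the contraction with $\hatellc$ I would substitute $\nfi^\star\hatellc=z^{-1}\ellc-\gamma(V,\cdot)$ from \eqref{JC:HD:inv:elem2}, so that $(\nfi^\star\hatellc)(e_A)=z^{-1}\psi_A-\gamma(e_A,V)=z^{-1}\psi_A-\nfi^\star\big(\bs{W}_A(\nfi_\star V)\big)$; applying $(\nfi^{-1})^\star$ produces exactly $\hatellc(\nfi_\star e_A)=(\nfi^{-1})^\star(z^{-1}\psi_A)-\bs{W}_A(\nfi_\star V)=\chi_{(A)}$, and the same chain of equalities read backwards is precisely the asserted identity $\nfi^\star\chi_{(A)}=(\nfi^\star\hatellc)(e_A)$. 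Since $\bs{W}_A(\hatn)=0$ and $\hatntwo=0$, Lemma \ref{lemBestLemmaMarc} applied to $\hatD$ (with $\bs{\varrho}=\bs{W}_A$ and $u_0=\chi_{(A)}$) guarantees that the unique vector field with these two contractions is $\hatP(\bs{W}_A,\cdot)+\chi_{(A)}\hatn$, which is therefore $\nfi_\star e_A$; this is \eqref{nfistar(e_A):general:abs}. Finally, contracting \eqref{prod3} for $\hatD$ against $\bs{W}_A$ and using $\bs{W}_A(\hatn)=0$ gives $\hatP(\bs{W}_A,\hatellc)=-\hatelltwo\,\bs{W}_A(\hatn)=0$.

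I do not expect a genuine obstacle here: once the diffeomorphism identity above is in place the whole argument is formal, and the definitions \eqref{def:cov:WA:paper} of $\bs{W}_A$ and $\chi_{(A)}$ have been arranged precisely so that the pullbacks and pushforwards cancel cleanly. The one point that requires care is the direction of the maps, i.e.\ keeping track of which quantities live over the domain and which over the image of $\nfi$ and inserting $\nfi^{-1}$ in the right places; a useful sanity check is that both sides of \eqref{nfistar(n):general:abs} and \eqref{nfistar(e_A):general:abs} are genuine vector fields on $\N$ with the expected covariance.
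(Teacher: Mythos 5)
Your proof is correct and follows essentially the same route as the paper's: identify $\nfi_\star n$ as a multiple of $\hatn$ via the radical of $\hatg$ and fix the factor with $\hatellc$, then characterize $\nfi_\star e_A$ by its contractions with $\hatg$ and $\hatellc$ and invoke Lemma \ref{lemBestLemmaMarc}, with the final identities following from \eqref{prod3} and the definition of $\chi_{(A)}$. The only (welcome) difference is that you explicitly verify the compatibility condition $\bs{W}_A(\hatn)+\hatntwo\chi_{(A)}=0$ before applying Lemma \ref{lemBestLemmaMarc}, which the paper leaves implicit.
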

\begin{proof}
Consider any point $p\in\N$. From \eqref{JC:HD:inv:elem1}  it follows that $\hatg(\nfi_{\star}n,\cdot)\vert_{\nfi(p)}=(\nfi^{\star}\hatg)(n,\cdot)\vert_p=\gamma(n,\cdot)\vert_p=0$, so  
$\nfi_{\star}n=\mathfrak{b} \hatn$ for some function $\mathfrak{b}\in\Fcal(\N)$. This, together with \eqref{z:and:V:ABS} and $\hatellc(\hatn)=1$, entails that $z^{-1}\vert_p=(\nfi^{\star}\hatellc)(n)\vert_p=\hatellc(\nfi_{\star}n)\vert_{\nfi(p)}=\mathfrak{b}\vert_{\nfi(p)}=\nfi^{\star}\mathfrak{b}\vert_p$, which proves \eqref{nfistar(n):general:abs}. 
%
%
On the other hand, 
any vector field $X\in\Gamma(T\N)$ 
satisfies
\begin{align}
\nn \hatg(\nfi_{\star}e_A,\nfi_{\star}X)\vert_{\nfi(p)}&=(\nfi^{\star}\hatg)(e_A,X)\vert_p\stackbin{\eqref{JC:HD:inv:elem1} }=\gamma(e_A,X)\vert_p=\nfi^{\star}\bs{W}_A(X)\vert_p,\\
\nn \hatellc(\nfi_{\star}e_A)\vert_{\nfi(p)}&=(\nfi^{\star}\hatellc)(e_A)\vert_p\stackbin{\eqref{JC:HD:inv:elem2} }=\frac{\psi_A}{z}-\gamma(e_A,V)\vert_p=\frac{\psi_A}{z}-\nfi^{\star}\bs{W}_A(V)\vert_p,
\end{align}
%
which means that
$\hatg(\nfi_{\star}e_A,\cdot)=\bs{W}_A$, 
$\spc\hatellc(\nfi_{\star}e_A)=(\nfi^{-1})^{\star}(z^{-1}\psi_A)-\bs{W}_A(\nfi_{\star}V)$. 
Particularizing Lemma \ref{lemBestLemmaMarc} for the data $\hatD$ and for 
$W=\nfi_{\star}e_A$, $\bs{\varrho}=\bs{W}_A$ and $u_0=(\nfi^{-1})^{\star}(z^{-1}\psi_A)-\bs{W}_A(\nfi_{\star}V)$ yields \eqref{nfistar(e_A):general:abs}.  
Finally, $\hatP(\bs{W}_A,\hatellc)=0$ because 
\begin{align*}
\hatP(\bs{W}_A,\hatellc)\vert_{\nfi(p)}&=-\hatelltwo \bs{W}_A(\hatn)\vert_{\nfi(p)}=-\hatelltwo((\nfi^{-1})^{\star}z) \bs{W}_A(\nfi_{\star}n)\vert_{\nfi(p)}=-\hatelltwo((\nfi^{-1})^{\star}z)\vert_{\nfi(p)}  (\nfi^{\star}\bs{W}_A)(n)\vert_p\\
&=-\hatelltwo((\nfi^{-1})^{\star}z)\vert_{\nfi(p)}  \gamma(e_A,n)\vert_p=0,
\end{align*} 
while 
%
%
%
%
%
%
$\chi_{(A)}\circ\nfi=z^{-1}\psi_A-(\nfi^{\star}\bs{W}_A)(V)=z^{-1}\psi_A-\gamma(e_A,V)\stackbin{\eqref{JC:HD:inv:elem2}}=(\nfi^{\star}\hatellc)(e_A)$ yields $\nfi^{\star}\chi_{(A)}=(\nfi^{\star}\hatellc)(e_A)$.
\end{proof}
\begin{remark}\label{rem:W_A}
From \eqref{nfistar(n):general:abs} it follows that $\nfi$ is a diffeormorphism which sends null generators into null generators. Moreover, since the vector fields $\{W_A\defi \hatP(\bs{W}_A,\cdot)\}$ verify $\hatellc(W_A)=0$, it follows that $W_A\notin\textup{Rad}\hatg$. 
This, together with the fact that $\nfi_{\star}$ is necessarily of maximal rank, force the vector fields $\{W_A\}$ to be everywhere non-zero on $\N$. In fact, $\{\hatn,W_A\}$ constitutes a basis of $\Gamma(T\N)$, since $\{W_A\}$ are all linearly independent. We prove this by contradiction, i.e.\ we assume that one such vector field, e.g.\ $W_2$, can be decomposed as 
  $W_2=\sum_{\s=3}^{\n}c_{\s}W_{\s}$. By \eqref{nfistar(e_A):general:abs}, this would mean that $\nfi_{\star}(e_2-\sum_{\s=3}^{\n}c_{\s}e_{\s})=\big( \chi_{(2)}-\sum_{\s=3}^{\n}c_{\s}\chi_{(\s)}\big) \hatn$, which we know it cannot occur, because only null generators can be mapped to null generators. 
\end{remark}
The point of introducing the objects $\{\bs{W}_A,\chi_{(A)}\}$ will become clear later when we study the particular case when the boundaries have product topology $S\times\mathbb{R}$.  
For the moment, let us simply anticipate that in such case the property $\hatP(\bs{W}_A,\hatellc)=0$ will allow us to conclude that the vector fields $\hatP(\bs{W}_A,\cdot)$ are tangent to the leaves of a specific foliation of $\nullhyp^+$ while from $\nfi^{\star}\chi_{(A)}=(\nfi^{\star}\hatellc)(e_A)$ we will conclude that the functions $\{\chi_{(A)}\}$ are actually spatial derivatives of the step function introduced in Section \ref{sec:Matching:Paper:Big:Sec}. 

One of the relevant results recalled in 
Section \ref{sec:Matching:Paper:Big:Sec}   
is the relation \eqref{detrmfA} between the second fundamental forms of each side. It turns out that in this abstract framework with no topological assumptions one can also recover an equation of this form. 
To do that, we first note that   $\pounds_{f\hatn}\hatg=f\pounds_{\hatn}\hatg$ because $\hatn\in\text{Rad}\hatg$. 
By direct computation one gets 
\begin{equation}
\label{bU:hatbU:matching}\nfi^{\star}\hatbU\defi \frac{1}{2}\nfi^{\star}\lp \pounds_{\hatn}\hatg\rp\stackbin{\eqref{nfistar(n):general:abs}}=\frac{z}{2}  \nfi^{\star}\lp \pounds_{\nfi_{\star}n}\hatg\rp\stackbin{\eqref{JC:HD:inv:elem1}}=\frac{z}{2}\pounds_n\gamma=z\bU\quad\Longrightarrow\quad \bU=\frac{\nfi^{\star}\hatbU}{z},
\end{equation}
which connects the second fundamental forms $\bU$, $\hatbU$ corresponding to the hypersurface data sets $\D$, $\hatD$. Equation \eqref{bU:hatbU:matching} generalizes \eqref{detrmfA} to the case of boundaries with any topology, and has several implications that we discuss below. 
%
%
%

In Theorem \ref{thm:Best:THM:JC} we have seen that when the matching 
is possible there exists a diffeomorphism $\nfi$ verifying \eqref{isom:cond:abs}. In such case, Lemma \ref{lem:z:and:V:and:rig} and Corollary \ref{cor:nfistar(n,eA)} provide explicit expressions for the gauge parameters $\{z,V\}$, the matching rigging $\rig^+$ and the push-forwards $\{\nfi_{\star}n,\nfi_{\star}e_A\}$ of any basis vector fields $\{n,e_A\}$ in terms of the map $\nfi$ still to be determined. 

However, as the reader may have noticed, condition \eqref{isom:cond:abs} does not fix $\nfi$ completely, firstly because there can be more than one diffeomorphism $\nfi$ satisfying \eqref{isom:cond:abs} and secondly because the tensor fields $\gamma$ and $\hatg$ are both degenerate. As happened in 
Section \ref{sec:Matching:Paper:Big:Sec}, 
where the step function could not be fixed directly by the isometry condition \eqref{junct1} 
but \tcr{\eqref{detrmfA} was also required} \cite{manzano2021null}, 
here one also needs an extra condition in order to fix $\nfi$ fully. This additional  restriction is precisely \eqref{bU:hatbU:matching}. As in 
%
%
Section \ref{sec:Matching:Paper:Big:Sec}, 
this provides useful information only when $\bU$ and $\hatbU$ are non-zero. If both are zero then $z$ (and hence part of $\nfi$, recall \eqref{z:and:V:ABS}) remains completely free. This means that one can find an infinite number of diffeomorphisms $\nfi$ verifying \eqref{isom:cond:abs}, with which we recover (and extend to arbitrary topology) the property that whenever the boundaries are totally geodesic then 
%
%
the matching can be performed in an infinite number of ways.

One can obtain explicit expressions for the gravitational and matter-energy content of a general null shell 
in terms of the diffeomorphism $\nfi$. This is done in the following theorem. 
\begin{theorem}\label{thm:Best:THM:matching}
Assume that conditions $(i)$-$(ii)$ in Theorem \ref{thm:Best:THM:JC} hold for a pair of embedded null hypersurface data $\D$, $\hatD$ and let $\fv=\fv^-$. Define 
\begin{equation}
\nn \bY^-\defi \frac{1}{2}(\iota^-)^{\star}(\pounds_{L^-}g^-),\quad \hatbY^+\defi \frac{1}{2}(\iota^+)^{\star}(\pounds_{L^+}g^+)\quad\text{and}\quad\bY^+\defi \frac{1}{2}\nfi^{\star}\lp (\iota^+)^{\star}(\pounds_{\rig^+}g^+)\rp,
\end{equation}
where $\rig^+$ is given by \eqref{rig:general:matching}. Then, 
the tensor $[\bY]\defi \bY^+-\bY^-$ reads
\begin{equation}
\label{[Y]:general:matching}[\Y_{ab}]=z\lp (\nfi^{\star}\hatbY^+)_{ab}+\frac{z}{2}\lp P(\nfi^{\star}\hatellc,\nfi^{\star}\hatellc)-\nfi^{\star}\hatelltwo\rp \U_{ab}-\nablao_{(a}(\nfi^{\star}\hatellc)_{b)}\rp-\Y^-_{ab},
\end{equation}
where  $z\in\Fcal^{\star}(\N)$ is given by \eqref{z:and:V:ABS}.
The components of $[\bY]$ in any basis $\{n,e_A\}$ of $\Gamma(T\N)$ are
\begin{align}
\nn \hspace{-0.3cm}[\bY](e_A,e_B)=&\spc \Big(z(\nfi^{\star}\hatbY^+)-\bY^-\Big)(e_A,e_B)+\frac{z^2}{2}\lp P(\nfi^{\star}\hatellc,\nfi^{\star}\hatellc)-\nfi^{\star}\hatelltwo\rp \bU(e_A,e_B)-ze_A^ae_B^b\nablao_{(a}(\nfi^{\star}\hatellc)_{b)},\\
\label{[Y](n,eA):abs} 
\hspace{-0.3cm}[\bY](n,e_A)= &\spc \Big(z(\nfi^{\star}\hatbY^+)-\bY^-\Big)(n,e_A)-\frac{z}{2}( \pounds_{n}\nfi^{\star}\hatellc)(e_A)+ \frac{e_A(z)}{2z}+ \bsone(e_A) +zP(\nfi^{\star}\hatellc,\bU(e_A,\cdot)) ,\\
\nn \hspace{-0.3cm}[\bY](n,n)=&\spc \Big(z(\nfi^{\star}\hatbY^+)-\bY^-\Big)(n,n)+\frac{n(z)}{z}.
\end{align}
The energy-momentum tensor 
$\tau$ is given by \eqref{tau:prev:abs} in terms of the dual basis $\{\bnormal,\bs{\theta}^A\}$ of $\{n,e_A\}$,  
while the purely gravitational content of the shell is ruled by the tensor 
\begin{align}
\label{purely:grav:abs}\hspace{-0.25cm}\bY^{\mathfrak{G}}(e_A,e_B)\defi [\bY](e_A,e_B)+\frac{\fv\ov{\rho}}{\n-1}\gamma(e_A,e_B),
%
%
\quad\text{where}\quad \ov{\rho}\defi \rho+2 P(\bnormal,\bs{j})+p\lp  2\elltwo + P(\bnormal,\bnormal) \rp
\end{align}
and $\{\rho,p,\bs{j}\}$ are defined as in Remark \ref{rem:def:eg:flux:pressure}.
\end{theorem}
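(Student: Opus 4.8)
The plan is to compute $\bY^+ = \frac{1}{2}\nfi^{\star}\big((\iota^+)^{\star}(\pounds_{\rig^+}g^+)\big)$ directly and subtract $\bY^-$. The key observation is that $\rig^+ = z'(L^+ + \hatV)$ with $\hatV = \iota^+_{\star}(\nfi_{\star}V)$, so the embedded data with rigging $\rig^+$ is exactly the gauge transform $\G_{(z'',\,\nfi_{\star}V)}$ (for an appropriate $z''$) of the data $\hatD$ embedded with rigging $L^+$, and then pulled back by $\nfi$. Concretely, I would first invoke the gauge-covariance of the formalism (equation \eqref{gaugerig} and the discussion following it): if $\hatD = \{\N,\hatg,\hatellc,\hatelltwo,\hatbY^+\}$ is $\{\iota^+,L^+\}$-embedded, then the data $\{\iota^+,\rig^+\}$-embedded is $\G_{(z',\nfi_{\star}V)}(\hatD)$ read on $\nullhyp^+$; pulling back by $\nfi$ and using that $\nfi^{\star}$ commutes with the gauge operations (since $\nfi$ is a diffeomorphism of $\N$), one gets $\bY^+ = \nfi^{\star}\big(\G_{(z',\nfi_{\star}V)}(\hatbY^+)\big) = \G_{(z,V)}(\nfi^{\star}\hatbY^+)$ where $z = \nfi^{\star}((\iota^+)^{\star}z')$ and $V$ is as in \eqref{z:and:V:ABS}. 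Then \eqref{gaugeY} gives
\[
\bY^+ = z\,\nfi^{\star}\hatbY^+ + \big(\nfi^{\star}\hatellc + \nfi^{\star}\hatg(V,\cdot)\big)\otimes_s dz + \frac{z}{2}\pounds_V(\nfi^{\star}\hatg).
\]
Now I would substitute the explicit $V$ from \eqref{z:and:V:ABS}, use \eqref{JC:HD:inv:elem1}–\eqref{JC:HD:inv:elem3} to replace $\nfi^{\star}\hatg = \gamma$, $\nfi^{\star}\hatg(V,\cdot) = z^{-1}\ellc - \nfi^{\star}\hatellc$ (so the bracket becomes $z^{-1}\ellc$), and Lemma \ref{lem:liegamma} to expand $\pounds_V\gamma$ in terms of $\nablao$. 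After collecting terms and using $\ellc\otimes_s d\log z + \frac{z}{2}\pounds_V\gamma = \frac12 z\,\pounds_{V}\gamma + \ldots$ together with the gauge formula itself, the covariant-derivative term $\nablao_{(a}(\nfi^{\star}\hatellc)_{b)}$ emerges, yielding \eqref{[Y]:general:matching}.

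**Component expressions.** For the component formulas \eqref{[Y](n,eA):abs} I would contract \eqref{[Y]:general:matching} with $n^a$ and the basis vectors. The $(n,n)$ and $(e_A,e_B)$ contractions are essentially immediate once one recalls $\U(n,\cdot)=0$ (from \eqref{s(n)=U(n,-)=0}) and $\nablao_a n$-identities; for $[\bY](n,n)$ the term $-n^a n^b\nablao_{(a}(\nfi^{\star}\hatellc)_{b)} = -n^b\nablao_b\big((\nfi^{\star}\hatellc)(n)\big) + \ldots = n(z)/z^2 \cdot z = n(z)/z$ after using $(\nfi^{\star}\hatellc)(n) = z^{-1}$ and $\nablao_n n = 0$. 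The mixed component $[\bY](n,e_A)$ is the delicate one: here I would write $n^a e_A^b \nablao_{(a}(\nfi^{\star}\hatellc)_{b)} = \frac12 n^a\nablao_a(\nfi^{\star}\hatellc)_b e_A^b + \frac12 e_A^b\nablao_b\big((\nfi^{\star}\hatellc)(n)\big) - \frac12(\nfi^{\star}\hatellc)_b e_A^a\nablao_a n^b$, then use \eqref{nablaon} to handle $\nablao_a n^b$, \eqref{nablaoll} to trade $\nablao\ell$-type terms for $\bF$ and $\bU$, the definition \eqref{sone} of $\bsone$, and the relation $\pounds_n(\nfi^{\star}\hatellc) = n^a\nablao_a(\nfi^{\star}\hatellc) + (\nfi^{\star}\hatellc)_b\nablao n^b$ valid since $\nablao$ is torsion-free. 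Matching the $P(\nfi^{\star}\hatellc,\bU(e_A,\cdot))$ term requires care with the $P^{ac}\U$ piece of \eqref{nablaon}; this is where I expect the bookkeeping to be heaviest.

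**Energy-momentum and gravitational tensors.** The claim that $\tau$ is given by \eqref{tau:prev:abs} in the dual basis is a direct application of Corollary \ref{cor:eg-mom-tensor:abstract} with $\mathfrak{h}_{AB} = \gamma(e_A,e_B)$ — no new work. For the gravitational tensor $\bY^{\mathfrak{G}}$, I would verify that subtracting the trace part $\frac{\fv\ov\rho}{\n-1}\gamma(e_A,e_B)$ precisely removes the $\rho$-, $p$-, and $j$-dependent (i.e.\ ``matter'') contributions encoded in $\tau$ via \eqref{tau:rho:p:j}, leaving only the trace-free part of $[\bY]$ restricted to the transverse directions, which is the genuinely gravitational (impulsive-wave) content discussed after Corollary \ref{cor:eg-mom-tensor:abstract}. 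Concretely: from \eqref{def:rho:jA:p:abs}, $\rho = -\fv\,\mathrm{tr}_P[\bY]$, and one checks $\ov\rho = \rho + 2P(\bnormal,\bs j) + p(2\elltwo + P(\bnormal,\bnormal))$ equals $-\fv\,\mathfrak{h}^{AB}[\bY](e_A,e_B)$ after inserting the $P$-decomposition \eqref{Pdecom:abstract}; hence $[\bY](e_A,e_B) + \frac{\fv\ov\rho}{\n-1}\gamma(e_A,e_B)$ is the trace-free (w.r.t.\ $\mathfrak{h}$) part of $[\bY]$ on the $\{e_A\}$ block, which carries exactly the pure-gravity degrees of freedom.

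**Main obstacle.** The routine but genuinely error-prone step is deriving \eqref{[Y](n,eA):abs}, specifically reconciling the $\nablao$-expansion of $\pounds_V\gamma$ and the contraction of $\nablao_{(a}(\nfi^{\star}\hatellc)_{b)}$ with $n^a$ against the identities \eqref{nablaon}–\eqref{contrNsym} so that all $\nablao\elltwo$- and $\bF$-terms organize into the compact combination $\tfrac12(\pounds_n\nfi^{\star}\hatellc)(e_A)$, $\tfrac{e_A(z)}{2z}$, $\bsone(e_A)$, and $zP(\nfi^{\star}\hatellc,\bU(e_A,\cdot))$. The conceptual content — that $\bY^+$ is the appropriate gauge-and-pullback transform of $\hatbY^+$ — is straightforward; the difficulty is purely in the index gymnastics of the null formalism.
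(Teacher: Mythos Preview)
Your proposal is correct and follows essentially the same route as the paper's proof: identifying $\bY^+ = \G_{(z,V)}(\nfi^{\star}\hatbY^+)$ via the gauge covariance of the embedded data, applying \eqref{gaugeY}, expanding $\frac{z}{2}\pounds_V\gamma$ through Lemma~\ref{lem:liegamma} with $\gamma(V,\cdot)=z^{-1}\ellc-\nfi^{\star}\hatellc$, and then contracting with $n$ for the component formulas. The only cosmetic difference is that the paper packages the $(n,\cdot)$ contractions by directly invoking identity \eqref{contrNsym} for $\bs\theta=\nfi^{\star}\hatellc$, whereas you outline the same computation via the underlying identities \eqref{nablaon}--\eqref{nablaoll}; the treatment of $\tau$ and $\bY^{\mathfrak{G}}$ is identical in both.
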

\begin{proof}
Applying Lemma \ref{lem:liegamma} for $\uhat{V}\defi \gamma(V,\cdot)=z^{-1}\ellc-\nfi^{\star}\hatellc$ (recall \eqref{JC:HD:inv:elem2}) 
and $u_0\defi \ellc(V)$ (cf.\ \eqref{eq:for:u0:abs:match}) yields
\begin{align}
\nn \frac{z}{2}\pounds_{V}\gamma_{ab}&=\lp \elltwo+\frac{z^2}{2}\lp P(\nfi^{\star}\hatellc,\nfi^{\star}\hatellc)-\nfi^{\star}\hatelltwo\rp\rp \U_{ab}+z\nablao_{(a}\lp \frac{\ell_{b)}}{z}-(\nfi^{\star}\hatellc)_{b)}\rp\\
%
%
\label{pounds_Vgamma:abs}&=\frac{z^2}{2}\lp P(\nfi^{\star}\hatellc,\nfi^{\star}\hatellc)-\nfi^{\star}\hatelltwo\rp \U_{ab}-\frac{1}{z}(\nablao_{(a}z)\ell_{b)}-z\nablao_{(a}(\nfi^{\star}\hatellc)_{b)},
\end{align}
where in the last step we used that $\nablao_{(a}\ell_{b)}=-\elltwo\U_{ab}$ (cf.\ \eqref{nablaoll}). By hypothesis the matching of $(\Mpm,g^{\pm})$ is possible, so the data sets $\{\N,\nfi^{\star}\hatg, \nfi^{\star}\hatellc,\nfi^{\star}\hatelltwo,\nfi^{\star}\hatbY^+\}$, $\{\N,\gamma,\ellc,\elltwo,\bY^+\}$ are embedded in $(\Mp,g^{+})$ with embedding $\iota^+\circ\nfi$ and respective riggings $L^+$, $\rig^+$. This, together with \eqref{JC:HD}, entails that the tensors $\nfi^{\star}\hatbY^+$, $\bY^+$ are related by $\bY^+=\G_{(z,V)}(\nfi^{\star}\hatbY^+)$, where $\{z,V\}$ are given by \eqref{z:and:V:ABS}. Thus (cf.\ \eqref{gaugeY}, \eqref{isom:cond:abs}) 
\begin{align}
\label{Y^+likepaper1:(A)} \hspace{-0.2cm}\bY^+=&\spc z \nfi^{\star}\hatbY^++dz \otimes_s(\nfi^{\star}\hatellc+\gamma(V ,\cdot))+\frac{z}{2} \pounds_{V }\gamma\stackbin{\eqref{JC:HD:inv:elem2}}= z \nfi^{\star}\hatbY^++\frac{dz }{z }\otimes_s\ellc+\frac{z}{2} \pounds_{V }\gamma.
\end{align}
%
%
%
%
%
Inserting \eqref{pounds_Vgamma:abs} into \eqref{Y^+likepaper1:(A)} yields
the explicit form \eqref{[Y]:general:matching}. 

We now obtain the components of $[\bY]$ in the basis $\{n,e_A\}$, for which we recall that $\bU(n,\cdot)=0$ and $\bsone(n)=0$. Particularizing \eqref{contrNsym} for $\bs{\theta}=\nfi^{\star}\hatellc$ and using \eqref{z:and:V:ABS} gives
\begin{align}
n^a \nablao_{(a}(\nfi^{\star}\hatellc)_{b)}=&\spc \frac{1}{2}\pounds_{n}(\nfi^{\star}\hatellc)_{b}+ \frac{1}{2} \nablao_{b}((\nfi^{\star}\hatellc)({n}))-(\nfi^{\star}\hatellc)(n)\sone_b  - P^{ac}\U_{bc}(\nfi^{\star}\hatellc)_{a} , \nn\\
=&\spc \frac{1}{2}\pounds_{n}(\nfi^{\star}\hatellc)_{b}- \frac{\nablao_{b}z}{2z^2} - \frac{\sone_b}{z} -P^{ac}\U_{bc}(\nfi^{\star}\hatellc)_{a} , \label{contrNsym:nfistar:hatellc1}\\
n^an^b \nablao_{(a}(\nfi^{\star}\hatellc)_{b)}=&\spc \frac{1}{2}\pounds_{n}\lp (\nfi^{\star}\hatellc)(n)\rp- \frac{n(z)}{2z^2} =-\frac{n(z)}{z^2}. \label{contrNsym:nfistar:hatellc2}
\end{align}
Combining \eqref{contrNsym:nfistar:hatellc1}-\eqref{contrNsym:nfistar:hatellc2} with \eqref{[Y]:general:matching} yields \eqref{[Y](n,eA):abs}. 
\tcr{The components of the energy-momentum tensor being given by \eqref{tau:prev:abs} is just the contents of Corollary \ref{cor:eg-mom-tensor:abstract}.}
%
%
%
%
%
%
%
%
%
%
%
%
Finally, we prove \eqref{purely:grav:abs} as follows. First, we note that the one-forms $\bs{j}$ (see Remark \ref{rem:def:eg:flux:pressure}) and $\ellc$ decompose in the basis $\{\bnormal,\bs{\theta}^A\}$ as
\begin{equation}
\label{one-forms:j:ellc}\bs{j}=\bs{j}(e_A)\bs{\theta}^A,\qquad \ellc=\bnormal+\ellc(e_A)\bs{\theta}^A
\end{equation}
because $\bs{j}(n)=0$ and $\ellc(n)=1$. Also by Remark \ref{rem:def:eg:flux:pressure}, we know that the one-form $\bs{j}$ verifies $[\bY](n,e_A)=\fv(\bs{j}(e_A)-p\ellc(e_A))$. 
Thus, a direct computation based on the decomposition \eqref{Pdecom:proof} of the tensor field $P$ yields 
\begin{align*}
\text{tr}_P[\bY]&=P^{ab}[\Y_{ab}]=\mathfrak{h}^{AB}[\bY](e_A,e_B)+2 P(\bnormal,\bs{\theta}^A) [\bY](n,e_A)+P(\bnormal,\bnormal)[\bY](n,n)\\
&=\mathfrak{h}^{AB}[\bY](e_A,e_B)+2\fv P(\bnormal,\bs{j}(e_A)\bs{\theta}^A-p\ellc(e_A)\bs{\theta}^A) -\fv p P(\bnormal,\bnormal)\\
&=\mathfrak{h}^{AB}[\bY](e_A,e_B)+2\fv P(\bnormal,\bs{j})+\fv p \big(2  \elltwo + P(\bnormal,\bnormal)\big)
\end{align*}
where we used that $P(\bs{\theta}^A,\bs{\theta}^B)=\mathfrak{h}^{AB}$ (by Lemma \ref{lem:Pdecom:dual}), $P(\ellc,\cdot)=-\elltwo n$ (cf.\ \eqref{prod3}) and \eqref{one-forms:j:ellc} in this order. 
%
%
%
Taking into account the definition of the energy density $\rho$ (see \eqref{def:rho:jA:p:abs}), one finds
\begin{equation}
\label{h-trace:widetilde:rho}\mathfrak{h}^{AB}[\bY](e_A,e_B) =-\fv\Big(\rho+2 P(\bnormal,\bs{j})+p\lp  2\elltwo + P(\bnormal,\bnormal) \rp\Big)\defi -\fv \ov{\rho}.
\end{equation}
Now, from \eqref{tau:prev:abs} it is clear that the only part of $[\bY]$ that does not contribute to the energy-momentum tensor is the $\mathfrak{h}$-traceless part of $[\bY](e_A,e_B)$. By Lemma \ref{lem:Pdecom:dual}, we know that $\mathfrak{h}^{AB}\gamma(e_A,e_B)=\n-1$.
Consequently, $[\bY](e_A,e_B)$ decomposes in a $\mathfrak{h}$-traceless and a $\mathfrak{h}$-trace part as
\begin{equation}
\nn [\bY](e_A,e_B)=\bY^{\mathfrak{G}}(e_A,e_B)+\frac{\mathfrak{h}^{IJ}[\bY](e_I,e_J)}{\n-1}\gamma(e_A,e_B),
\end{equation}
from where \eqref{purely:grav:abs} follows at once after inserting \eqref{h-trace:widetilde:rho}. 
\end{proof} 
\begin{remark}
We emphasize that we have not made any assumption on the topology of the boundaries $\nullhyp^{\pm}$ in Theorems \ref{thm:Best:THM:JC} and \ref{thm:Best:THM:matching} or in Lemma \ref{lem:z:and:V:and:rig}. The results above therefore describe the most general matching of two spacetimes across null hypersurfaces and generalize the results in \textup{\cite{manzano2021null}} and  \textup{\cite{manzano2022general}}, 
where the existence of a foliation on the boundaries played an important role. 

The gravitational/matter-energy content of the resulting null shell is given by Theorem \ref{thm:Best:THM:matching}, 
%
and the associated energy density $\rho$, energy flux $j$ and pressure $p$ are given by \eqref{def:rho:jA:p:abs}. 
The reason why we refer to $\bY^{\mathfrak{G}}(e_A,e_B)$ as the purely gravitational part of the shell is that only the components $[\bY](n,n)$, $[\bY](n,e_A)$ and the trace $P(\bs{\theta}^A,\bs{\theta}^B)[\bY](e_A,e_B)$ contribute to the energy-momentum tensor $\tau$ (cf.\ \eqref{tau:prev:abs}). This means that even if $\tau$ vanishes identically $\bY^{\mathfrak{G}}(e_A,e_B)$ does not need to be zero. Such a case
  corresponds to an impulsive gravitational wave propagating in the spacetime resulting from the matching. 
\end{remark}
\begin{remark}
By Lemma \ref{lem:Pdecom:dual} we know that $P(\bnormal,\cdot)=0$ if and only if $\ellc(e_A)=0$ and $\elltwo=0$. In such case, the scalar $\ov{\rho}$ coincides with the energy density $\rho$ of the shell. In the embedded picture, these restrictions amount to impose that the matching riggings $\rig^{\pm}$ are null and orthogonal to the vector fields $\phi^{\pm}_{\star}e_A$. \tcr{In particular, in the setup of   
%
%
Section \ref{sec:Matching:Paper:Big:Sec} this holds when 
the rigging $\rig^-$ (chosen according to \eqref{eiyl}) is null and orthogonal to the leaves of the foliation on the minus side (hence $\mu_A^-=0$, cf. \eqref{eqA30}).} 
\end{remark}
\begin{remark}
  In Theorems \ref{thm:Best:THM:JC} and \ref{thm:Best:THM:matching} and Lemma \ref{lem:z:and:V:and:rig}, all expressions are fully explicit in terms of the diffeomorphism $\nfi$. The two data sets $\D$, $\hatD$ are completely known (because the embeddings $\iota^{\pm}$ and the spacetimes $(\Mpm,g^{\pm})$ are given) and the rigging $\rig^+$ is determined by the pair $\{z,V\}$ given by \eqref{z:and:V:ABS} in terms of $\nfi$. This is related to the results in
    \textup{\cite{manzano2021null}}, \textup{\cite{manzano2022general}} summarized in Section \ref{sec:Matching:Paper:Big:Sec}, 
%
%
%
%
%
%
%
%
where the whole matching 
depended upon the step function $H$ and the coefficients $b_I^J$, which in turn determined the matching embedding $\phi^+$ (recall \eqref{bIJ:coef} and \eqref{embedPhi+}) and the matching rigging $\rig^+$ (according to \eqref{uniqxi}). 
\end{remark}
Expressions \eqref{[Y](n,eA):abs} 
%
%
involve the pull-back $\nfi^{\star}\hatbY^+$, whose calculation can be cumbersome in general. It is more convenient to rewrite 
%
%
\eqref{[Y](n,eA):abs}  
in terms of pull-backs of scalar functions referred to the data $\hatD$ and objects defined with respect to $\D$. We provide the corresponding expressions in the next lemma. 
%
%
%
%
%
\begin{lemma}\label{lem:tau:point}
  Assume that conditions $(i)$-$(ii)$ in Theorem \ref{thm:Best:THM:JC} hold for a pair of embedded null hypersurface data $\D$, $\hatD$ and let $\fv=\fv^-$.
  Define the tensors $\{\bY^-,\hatbY{}^+,\bY^+\}$ as in Theorem \ref{thm:Best:THM:matching}, the covectors $\{\bs{W}_A\}$ and the functions $\{\chi_{(A)}, \psi_A\}$ along $\N$ according to Corollary \ref{cor:nfistar(n,eA)} and the vector field $W_A\defi \hatP(\bs{W}_A,\cdot)$. Let $z$ be given by \eqref{z:and:V:ABS} and $\{n,e_A\}$ be a basis of $\Gamma(T\N)$ with dual basis $\{\bnormal,\bs{\theta}^A\}$.
Then, equations 
\eqref{[Y](n,eA):abs} 
%
%
can be rewritten as
%
%
%
%
%
%
\begin{align}
\label{[Y](n,n):abs:point}\hspace{-0.5cm}[\bY](n,n)=&\spc \frac{1}{z}\nfi^{\star}\lp \hatbY^+(\hatn,\hatn)\rp-\bY^-(n,n)+\frac{n(z)}{z},\\
\nn \hspace{-0.5cm}[\bY](n,e_A)= &\spc \nfi^{\star}\lp \hatbY^+(\hatn,W_A)+\chi_{(A)}\hatbY^+(\hatn,\hatn)\rp-\bY^-(n,e_A) \\
\label{[Y](n,eA):abs:point}&-\frac{z}{2}\lp \pounds_{n}\nfi^{\star}\hatellc\rp(e_A)+ \frac{e_A(z)}{2z}+ \bsone(e_A) +zP(\nfi^{\star}\hatellc,\bU(e_A,\cdot)),\\
\nn \hspace{-0.5cm}[\bY](e_A,e_B)=&\spc z\nfi^{\star}\bigg( \hatbY^+(W_A,W_B)+\chi_{(A)}\hatbY^+(\hatn,W_B)+ \chi_{(B)}\hatbY^+(\hatn,W_A)+\chi_{(A)}\chi_{(B)}\hatbY^+(\hatn,\hatn)\bigg)\\
\label{[Y](eA,eB):abs:point} &-\bY^-(e_A,e_B)-ze_A^ae_B^b\nablao_{(a}(\nfi^{\star}\hatellc)_{b)} +\frac{z^2}{2}\lp P(\nfi^{\star}\hatellc,\nfi^{\star}\hatellc)-\nfi^{\star}\hatelltwo\rp \bU(e_A,e_B).
\end{align}
The energy-momentum tensor 
$\tau$ is given by \eqref{tau:prev:abs} in terms of the dual basis $\{\bnormal,\bs{\theta}^A\}$ of $\{n,e_A\}$.
\end{lemma}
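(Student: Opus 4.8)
The plan is to prove Lemma \ref{lem:tau:point} by rewriting the three components in \eqref{[Y](n,eA):abs} of Theorem \ref{thm:Best:THM:matching} so that every appearance of the pull‑back $\nfi^{\star}\hatbY^+$ is replaced by pull‑backs of scalar functions of the form $\nfi^{\star}(\hatbY^+(\cdot,\cdot))$, where the slots are filled with the basis vectors $\{\hatn,W_A\}$ of $\Gamma(T\N)$ identified in Remark \ref{rem:W_A}. The key observation is that $\nfi^{\star}\hatbY^+$ is itself a $2$‑covariant tensor on $\N$, so its value on a pair of vectors $X,Z\in\Gamma(T\N)$ is $(\nfi^{\star}\hatbY^+)(X,Z) = \nfi^{\star}\big(\hatbY^+(\nfi_{\star}X,\nfi_{\star}Z)\big)$. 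Hence everything reduces to inserting the explicit push‑forward formulae for $\nfi_{\star}n$ and $\nfi_{\star}e_A$ supplied by Corollary \ref{cor:nfistar(n,eA)}.

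First I would handle $[\bY](n,n)$. From \eqref{nfistar(n):general:abs} we have $\nfi_{\star}n = \big((\nfi^{-1})^{\star}z\big)^{-1}\hatn$, so
\[
(\nfi^{\star}\hatbY^+)(n,n) = \nfi^{\star}\Big(\hatbY^+\big((\nfi^{-1})^{\star}z^{-1}\hatn,(\nfi^{-1})^{\star}z^{-1}\hatn\big)\Big) = \nfi^{\star}\big(z^{-2}\,\hatbY^+(\hatn,\hatn)\big) = z^{-2}\,\nfi^{\star}\big(\hatbY^+(\hatn,\hatn)\big),
\]
using that $\nfi^{\star}\big((\nfi^{-1})^{\star}z\big)=z$ and that pull‑back commutes with multiplication of scalars. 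Substituting $(\nfi^{\star}\hatbY^+)(n,n)=z^{-2}\nfi^{\star}(\hatbY^+(\hatn,\hatn))$ together with the factor $z$ from the third line of \eqref{[Y](n,eA):abs} gives the $\tfrac1z\nfi^{\star}(\hatbY^+(\hatn,\hatn))$ term, and the remaining $\tfrac{n(z)}{z}-\bY^-(n,n)$ pieces carry over verbatim, proving \eqref{[Y](n,n):abs:point}.

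Next I would do $[\bY](n,e_A)$ and $[\bY](e_A,e_B)$ in the same manner. Using \eqref{nfistar(e_A):general:abs}, namely $\nfi_{\star}e_A = \hatP(\bs{W}_A,\cdot) + \chi_{(A)}\hatn = W_A + \chi_{(A)}\hatn$, bilinearity of $\hatbY^+$ gives
\[
(\nfi^{\star}\hatbY^+)(n,e_A) = \nfi^{\star}\Big( z^{-1}\hatbY^+\big(\hatn, W_A + \chi_{(A)}\hatn\big)\Big) = z^{-1}\nfi^{\star}\big(\hatbY^+(\hatn,W_A) + \chi_{(A)}\hatbY^+(\hatn,\hatn)\big),
\]
and similarly
\[
(\nfi^{\star}\hatbY^+)(e_A,e_B) = \nfi^{\star}\Big(\hatbY^+(W_A,W_B) + \chi_{(A)}\hatbY^+(\hatn,W_B) + \chi_{(B)}\hatbY^+(\hatn,W_A) + \chi_{(A)}\chi_{(B)}\hatbY^+(\hatn,\hatn)\Big),
\]
where one should be mildly careful that $\chi_{(A)}$ and $z$ here are functions on $\N$, so that $\nfi^{\star}(\chi_{(A)}\,\hatbY^+(\hatn,\hatn))$ really means $\nfi^{\star}\chi_{(A)}\cdot\nfi^{\star}(\hatbY^+(\hatn,\hatn))$ — but this is exactly the form written in \eqref{[Y](n,eA):abs:point}–\eqref{[Y](eA,eB):abs:point}, since those $\chi_{(A)}$ appear inside the pull‑back. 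Plugging these three identities back into \eqref{[Y](n,eA):abs}, and leaving untouched the terms that already only involve $\D$‑objects and $\nfi^{\star}\hatellc$, $\nfi^{\star}\hatelltwo$ (the $\pounds_n\nfi^{\star}\hatellc$, $\tfrac{e_A(z)}{2z}$, $\bsone(e_A)$, $P(\nfi^{\star}\hatellc,\bU(e_A,\cdot))$, $\nablao_{(a}(\nfi^{\star}\hatellc)_{b)}$ and $P(\nfi^{\star}\hatellc,\nfi^{\star}\hatellc)-\nfi^{\star}\hatelltwo$ pieces), yields \eqref{[Y](n,eA):abs:point} and \eqref{[Y](eA,eB):abs:point} directly. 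Finally, the statement about the energy–momentum tensor is just a restatement of Corollary \ref{cor:eg-mom-tensor:abstract}, so nothing further is needed. The only genuinely delicate point is bookkeeping of where a scalar lives (on $\N$ versus on $\nullhyp^+$) when pulling objects back through $\nfi$; once the identity $(\nfi^{\star}T)(X,Z)=\nfi^{\star}(T(\nfi_{\star}X,\nfi_{\star}Z))$ is used consistently and the push‑forwards of Corollary \ref{cor:nfistar(n,eA)} are inserted, the computation is entirely mechanical.
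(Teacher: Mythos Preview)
Your proposal is correct and follows exactly the paper's own approach: the paper's proof simply states that one inserts $(\nfi^{\star}\hatbY^+)(X,Y)\vert_p=\hatbY^+(\nfi_{\star}X,\nfi_{\star}Y)\vert_{\nfi(p)}$ into \eqref{[Y](n,eA):abs}, uses the push-forward formulae \eqref{nfistar(n):general:abs}--\eqref{nfistar(e_A):general:abs}, and invokes Corollary \ref{cor:eg-mom-tensor:abstract} for $\tau$. Your write-up is in fact more explicit than the paper's, and your cautionary remark about where the scalars $z$ and $\chi_{(A)}$ live is exactly the bookkeeping one must do to make the argument rigorous.
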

\begin{proof}
Inserting $(\nfi^{\star}\hatbY^+)(X,Y)\vert_p=\hatbY^+(\nfi_{\star}X,\nfi_{\star}Y)\vert_{\nfi(p)}$ into 
%
%
\eqref{[Y](n,eA):abs} 
and using \eqref{nfistar(n):general:abs}-\eqref{nfistar(e_A):general:abs}, equations
\eqref{[Y](n,n):abs:point}-\eqref{[Y](eA,eB):abs:point}
%
%
follow at once. We already know from Corollary \ref{cor:eg-mom-tensor:abstract} that $\tau$ is given by \eqref{tau:prev:abs}.
\end{proof}
In Section \ref{sec:matching:fol:abs}, we shall recover the results of Proposition \ref{prop6} by particularizing Lemma \ref{lem:tau:point} to the case when the boundaries $\nullhyp^{\pm}$ have product topology. Lemma \ref{lem:tau:point} therefore generalizes Proposition \ref{prop6} to (null) boundaries of any topology, and 
determines the matter-energy content of any null thin shell arising from the matching of two spacetimes.  

\subsection{Pressure of the shell}\label{sec:pressure:abstract}
In 
%
%
\cite{manzano2021null}, \cite{manzano2022general}
we 
discussed the effect and the importance of a non-zero pressure in a null thin shell. This, however, 
was
done in very specific contexts (namely in the matching of two regions of Minkowski across a null hyperplane or for matchings across embedded AKH$_0$s) and by following a non-fully geometric approach (i.e.\ by analyzing the effect of the pressure in some specific coordinates). Our aim in this section is to study the pressure of a completely general null shell at a fully abstract level, providing its explicit expression in terms of well-defined geometric quantities and reinforcing the geometric interpretation of 
\cite{manzano2021null} and \cite{manzano2022general}.

In the following lemma we find explicit expressions for the pressure $p$ in terms of the surface gravities of various null generators of $\N$. 
\begin{lemma}\label{lem:pressure:abs}
  Assume that conditions $(i)$-$(ii)$ in Theorem \ref{thm:Best:THM:JC} hold for a pair of embedded null hypersurface data $\D=\{\N,\gamma,\ellc,\elltwo,\bY^-\}$, $\hatD=\{\N,\hatg,\hatellc,\hatelltwo,\hatbY^+\}$ and a diffeomorphism $\nfi$. Define $z$ by \eqref{z:and:V:ABS} and introduce 
\begin{equation}
\label{defs:kappas:lemma:p}
\kappa_n\defi -\bY^-(n,n),\qquad\widehat{\kappa}_n\defi -\hatbY^+(\hatn,\hatn),\qquad \nfi^{\star}\kappa_{\nfi_{\star}n} \defi  \frac{1}{z}\lp \nfi^{\star}\widehat{\kappa}_n-n(z)\rp.
\end{equation}
Then, the pressure $p$ of the corresponding null shell is given by
\begin{equation}
\label{pressure1:abs}p=-\fv \lp \Q-\G_{(z,\ov{V})}(\nfi^{\star}\widehat{\kappa}_n)\rp\qquad\text{or, equivalently}\qquad p=-\fv \lp \Q-\nfi^{\star}\kappa_{\nfi_{\star}n}\rp 
\end{equation}
where $\fv=\fv^-$ and $\ov{V}\in\Gamma(T\N)$ is a vector field that can be chosen at will. In particular, the pressure vanishes if and only if
\begin{equation}
\label{pressure:abs:vanishing}\G_{(z,\ov{V})}(\nfi^{\star}\widehat{\kappa}_n)=\Q\qquad\text{or, equivalently}\qquad \nfi^{\star}\kappa_{\nfi_{\star}n} =\Q.
\end{equation}
\end{lemma}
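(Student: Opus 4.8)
The plan is to compute the pressure $p = -\fv [\bY](n,n)$ directly from the formula for $[\bY](n,n)$ already obtained in Theorem \ref{thm:Best:THM:matching} (equivalently, in Lemma \ref{lem:tau:point}, equation \eqref{[Y](n,n):abs:point}), and then recognize the resulting combination as a difference of surface gravities. The starting point is $[\bY](n,n) = \frac{1}{z}\nfi^{\star}\big(\hatbY^+(\hatn,\hatn)\big) - \bY^-(n,n) + \frac{n(z)}{z}$. Using the definitions \eqref{defs:kappas:lemma:p}, namely $\Q = -\bY^-(n,n)$ (this is just \eqref{defY(n,.)andQ} for the data $\D$), $\widehat{\kappa}_n = -\hatbY^+(\hatn,\hatn)$ and $\nfi^{\star}\kappa_{\nfi_{\star}n} = \frac{1}{z}\big(\nfi^{\star}\widehat{\kappa}_n - n(z)\big)$, the formula for $[\bY](n,n)$ becomes $[\bY](n,n) = -\frac{1}{z}\nfi^{\star}\widehat{\kappa}_n - (- \Q) + \frac{n(z)}{z} = \Q - \nfi^{\star}\kappa_{\nfi_{\star}n}$. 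Multiplying by $-\fv$ gives $p = -\fv\big(\Q - \nfi^{\star}\kappa_{\nfi_{\star}n}\big)$, which is the second equality in \eqref{pressure1:abs}.

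Next I would establish the first (equivalent) form involving $\G_{(z,\ov V)}$. The point is that for a vector $\ov V$ the gauge transformation $\G_{(z,\ov V)}$ acts on a scalar like $\widehat\kappa_n$ (which behaves as a surface-gravity-type object, i.e.\ like $\Q$ under gauge) by the rule \eqref{Qprime}: $\G_{(z,\ov V)}(\Q') = \frac1z\big(\Q' - \frac{n(z)}{z}\big)$ — and crucially this formula does not involve $\ov V$ at all, only $z$ and $n$. Hence $\G_{(z,\ov V)}(\nfi^{\star}\widehat\kappa_n) = \frac1z\big(\nfi^{\star}\widehat\kappa_n - \frac{n(z)}{z}\big)$ for any $\ov V$; but one must be careful here because $z$ in \eqref{Qprime} is the gauge parameter whereas in $\nfi^{\star}\kappa_{\nfi_{\star}n}$ the factor is also $z$ — indeed comparing with \eqref{defs:kappas:lemma:p} we see $\nfi^{\star}\kappa_{\nfi_{\star}n} = \frac1z\big(\nfi^{\star}\widehat\kappa_n - n(z)\big)$, which is not literally $\G_{(z,\ov V)}(\nfi^{\star}\widehat\kappa_n)$ unless one uses the appropriate relation $n(z)/z$ vs $n(z)$. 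So I would double check the exact normalization: the claim in the lemma is that $\G_{(z,\ov V)}(\nfi^{\star}\widehat\kappa_n)$ equals $\nfi^{\star}\kappa_{\nfi_{\star}n}$, and I expect this to hold precisely because of how the push-forward $\nfi_{\star}n$ relates to $\hatn$ via \eqref{nfistar(n):general:abs}, i.e.\ $\nfi_{\star}n = \big((\nfi^{-1})^{\star}z\big)^{-1}\hatn$, so that $\kappa_{\nfi_{\star}n}$ (surface gravity of the rescaled generator) picks up exactly the $z$ and $n(z)$ terms dictated by the rescaling law $\nabla_{fk}(fk) = f\big(\ke_k + k(\log f)\cdot\big)(fk)$-type identity. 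Reconciling these two normalizations is the one genuinely fiddly point.

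Finally, the vanishing statement \eqref{pressure:abs:vanishing} is an immediate consequence: since $\fv = \fv^-$ is a nonzero sign, $p = 0$ if and only if $\Q - \nfi^{\star}\kappa_{\nfi_{\star}n} = 0$, i.e.\ $\nfi^{\star}\kappa_{\nfi_{\star}n} = \Q$, equivalently $\G_{(z,\ov V)}(\nfi^{\star}\widehat\kappa_n) = \Q$. I would also remark that the freedom in $\ov V$ is harmless precisely because, as noted, the gauge action on these surface-gravity scalars is $\ov V$-independent; this is why the statement can assert "$\ov V$ can be chosen at will". The main obstacle is not conceptual but bookkeeping: getting the $n(z)$ versus $n(z)/z$ normalization exactly right and confirming that the abstractly-defined $\nfi^{\star}\kappa_{\nfi_{\star}n}$ in \eqref{defs:kappas:lemma:p} really is the pull-back of the surface gravity of the null generator $\nfi_{\star}n$ — which I would verify using \eqref{nfistar(n):general:abs}, the gauge law \eqref{Qprime}, and the interpretation of $\Q$ as surface gravity established around \eqref{Qmeaning}.
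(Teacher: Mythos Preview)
Your derivation of the second equality in \eqref{pressure1:abs} is correct and is exactly what the paper does: insert \eqref{[Y](n,n):abs:point} into $p=-\fv[\bY](n,n)$ and regroup using the definitions \eqref{defs:kappas:lemma:p}. The vanishing statement \eqref{pressure:abs:vanishing} then follows trivially, as you say.

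Where your sketch diverges from the paper is the first equality. You try to obtain $\G_{(z,\ov V)}(\nfi^{\star}\widehat\kappa_n)$ by applying \eqref{Qprime} directly to the scalar $\nfi^{\star}\widehat\kappa_n$, and you correctly notice that this produces $\tfrac{1}{z}\big(\nfi^{\star}\widehat\kappa_n-\tfrac{n(z)}{z}\big)$, which does \emph{not} match $\nfi^{\star}\kappa_{\nfi_{\star}n}=\tfrac{1}{z}\big(\nfi^{\star}\widehat\kappa_n-n(z)\big)$. You flag this as ``the one genuinely fiddly point'' but leave it unresolved. The paper avoids this trap altogether by taking a different route: it invokes the gauge behaviours \eqref{gauge:sign} and \eqref{gauge:pressure} of $\fv$ and $p$ to recognise that $-\tfrac{\fv}{z}\big(\G^{-1}_{(z,\ov V)}(\kappa_n)-\nfi^{\star}\widehat\kappa_n\big)$ equals $-\tfrac{1}{z}\G^{-1}_{(z,\ov V)}(\fv p)$ divided by nothing, i.e.\ one has the identity $-\G^{-1}_{(z,\ov V)}(\fv p)=\G^{-1}_{(z,\ov V)}(\kappa_n)-\nfi^{\star}\widehat\kappa_n$ between scalars. ``Applying $\G_{(z,\ov V)}$'' is then the formal inversion of the explicit $\G^{-1}$'s on the left and first right term, with the notation $\G_{(z,\ov V)}(\nfi^{\star}\widehat\kappa_n)$ \emph{defined} by this inversion; the equivalence with the second formula (hence with $\nfi^{\star}\kappa_{\nfi_{\star}n}$) is what makes the notation consistent. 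The ingredient you are missing is precisely the use of \eqref{gauge:sign}--\eqref{gauge:pressure}, which is what allows the paper to bypass the $n(z)$ versus $n(z)/z$ bookkeeping you got stuck on.
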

\begin{proof}
Recall that $\G^{-1}_{(z,\ov{V})}=\G_{(z^{-1},-z\ov{V})}$. We start by noticing that \eqref{Qprime} implies that $\G^{-1}_{(z,\ov{V})}(\Q)=\mathcal{G}_{\lp z^{-1},-z\ov{V}\rp}\lp \Q\rp =z\lp\Q+\frac{n(z)}{z}\rp$. 
On the other hand, combining \eqref{def:rho:jA:p:abs} and \eqref{[Y](n,n):abs:point}, it follows 
\begin{align}
\label{pressure:midEQ} p=& \spc -\fv\lp -\frac{1}{z}\nfi^{\star}\widehat{\kappa}_n+\kappa_n+\frac{n(z)}{z}\rp= -\frac{\fv}{z}\lp \G_{(z^{-1},-z\ov{V})}(\kappa_n)-\nfi^{\star}\widehat{\kappa}_n\rp=-\frac{\fv}{z}\lp \G^{-1}_{(z,\ov{V})}(\kappa_n)-\nfi^{\star}\widehat{\kappa}_n\rp.
\end{align}
Recalling the transformation law for $\fv$ and $p$ in \eqref{gauge:sign} and \eqref{gauge:pressure} this expression can be written as  $-\G^{-1}_{(z,\ov{V})}(\fv p)= \G^{-1}_{(z,\ov{V})}(\kappa_n)-\nfi^{\star}\widehat{\kappa}_n$. Applying $\G_{(z,\ov{V})}$ on both sides  one obtains the left part of \eqref{pressure1:abs}. The right part of \eqref{pressure1:abs} is an immediate consequence of inserting the definition 
\eqref{defs:kappas:lemma:p}
of $\nfi^{\star}\kappa_{\nfi_{\star}n} $ into the first line of \eqref{pressure:midEQ},
while \eqref{pressure:abs:vanishing} is proven by setting $p=0$ in \eqref{pressure1:abs}.
\end{proof}
\begin{remark}
The last expression in \eqref{defs:kappas:lemma:p} defines a function $\kappa_{\nfi_{\star}n}$
on $\N$. However, we still need to justify this terminology.   It turns out that
$\kappa_{\nfi_{\star}n}$ coincides with the surface gravity of the vector field $\nfi_{\star}n$ w.r.t.\ the hypersurface connection $\widehat{\ovnabla}$ constructed from the data $\hatD$.  
To prove this, we let $\mathfrak{z}\defi (\nfi^{-1})^{\star}z$, so that (cf.\ \eqref{defs:kappas:lemma:p})
\begin{align*}
 \kappa_{\nfi_{\star}n} &=  \frac{1}{\mathfrak{z}}\lp \widehat{\kappa}_n-(\nfi^{-1})^{\star}(n(z))\rp\quad\text{and}\quad  (\nfi_{\star}n)( \mathfrak{z})=(\nfi^{-1})^{\star}(n(z)),
\end{align*}
where the right part follows from  
$(\nfi_{\star}n)( \mathfrak{z})\vert_{\nfi(p)}=(\nfi^{\star}d\mathfrak{z})(n)\vert_{p}=(d\nfi^{\star}\mathfrak{z})(n)\vert_{p}=n(z)\vert_{p}
=(\nfi^{-1})^{\star}(n(z))\vert_{\nfi(p)}$. 
Then, the combination of 
\eqref{nablao_nn} and \eqref{nfistar(n):general:abs} gives\footnote{Recall that the connections $\nablao$, $\ovnabla$ of a data set $\hypdata$  verify $\ovnabla_{X}Z=\nablao_XZ-\bY(X,Z)n$, $\forall X,Z\in\Gamma(T\N)$.}
\begin{align*} 
\widehat{\ovnabla}_{\nfi_{\star}n}\nfi_{\star}n&=\frac{1}{\mathfrak{z}}\widehat{\ovnabla}_{\hatn}\lp \frac{\hatn}{\mathfrak{z}}\rp=\frac{1}{\mathfrak{z}}\lp \frac{1}{\mathfrak{z}}\widehat{\ovnabla}_{\hatn}\hatn-\frac{\hatn(\mathfrak{z})}{\mathfrak{z}^2} \hatn\rp=-\frac{1}{\mathfrak{z}}\lp \hatbY^+(\hatn,\hatn)+\frac{\hatn( \mathfrak{z})}{ \mathfrak{z}}\rp \nfi_{\star}n\\
&=\frac{1}{\mathfrak{z}}\Big(\widehat{\kappa}_n-(\nfi_{\star}n)( \mathfrak{z})\Big) \nfi_{\star}n=\frac{1}{\mathfrak{z}}\Big(\widehat{\kappa}_n-(\nfi^{-1})^{\star}(n(z))\Big) \nfi_{\star}n=\kappa_{\nfi_{\star}n}\nfi_{\star}n.
\end{align*}
\end{remark}
\begin{remark}
The gauge parameter $\ov{V}$ is completely superfluous and plays no role in determining the pressure, which is only influenced by the function $z$ given by \eqref{z:and:V:ABS}. We keep $\ov{V}$ in the expression to emphasize this fact.
  \end{remark}
\begin{remark}
In \textup{\cite{manzano2021null}}, \textup{\cite{manzano2022general}}, 
we have introduced the notion of self-compression and self-stretching on the boundaries of the spacetimes to be matched. We have seen that this effect is completely ruled by the pressure, and that it has to do with the differences in the acceleration along the null generators of both sides. With \eqref{pressure1:abs}, we recover the same result but for the case of boundaries with any topology. Indeed, the surface gravities $\Q$ and $\kappa_{\nfi_{\star}n}$ verify
$\ovnabla_nn=\kappa_nn$ and 
$\widehat{\ovnabla}_{\nfi_{\star}n}\nfi_{\star}n=\kappa_{\nfi_{\star}n}\nfi_{\star}n$,
so 
that the quantity $-\fv p$ is positive when $\kappa_n>\nfi^{\star}\kappa_{\nfi_{\star}n}$ (namely when the ``acceleration" of $n$ is greater than that of $\nfi_{\star}n$) and negative otherwise. 
The only scenario where there exists no pressure occurs when both surface gravities coincide, i.e.\ when the accelerations of $n$ and $\nfi_{\star}n$ are the same. 
\end{remark}

\section{Multiple matchings across null boundaries}\label{sec:mult:mathcings}
We have already seen that generically 
  there exists at most one way of matching two given spacetimes $(\Mpm,g^{\pm})$ 
  (i.e.\ only one matching map $\Phi$ 
or one single diffeomorphism $\nfi$). However, we have also mentioned  
that sometimes multiple (even infinite) matchings can be performed (e.g.\ when both second fundamental forms $\bU$, $\hatbU$ vanish). In the language of \eqref{junctcondAbs2}, this means that given a choice of embedding $\phi^-$ and matching rigging $\rig^-$ on the minus side, there exist several embeddings $\phi^+$ for which the matching conditions hold, and each embedding gives rise to a unique solution for the rigging $\rig^+$ with suitable orientation. 
%

In this section, our aim is to study the scenario of multiple matchings.  
The idea is to assume that \textit{all} information about one of the matchings is known, in particular its corresponding diffeomorphism $\nfi$ 
and hence the gravitational/matter-energy content.  
As we shall see, in these circumstances 
one only needs to consider a single hypersurface data set $\D$ (instead of two) and it is possible to provide explicit expressions for the jump $[\bY]$ and the energy-momentum tensor $\tau$ of any other shell in terms of their counterparts of the known matching. These results can be particularized to the case when the known matching gives rise to no-shell (i.e.\ when it is such that $[\bY]=0$). This precisely happens in all cut-and-paste constructions, where $(\Mpm,g^{\pm})$ are two regions of the same spacetime. 


Our setup will be the following. We make a choice $\{\phi^-,\rig^-\}$ of embedding and rigging on the minus side and consider two matching embeddings $\phi^+$, $\widetilde{\phi}^+$, each of them satisfying \eqref{junctcondAbs2} for two riggings $\rig^+$, $\widetilde{\rig}^+$ respectively. We also assume that the information about one of the matchings is completely known, namely we let $\{\widetilde{\phi}^+,\widetilde{\rig}^+\}$  
be given. 

From the spacetimes $(\Mpm,g^{\pm})$, we can construct two hypersurface data sets $\D=\{\N,\gamma,\ellc,\elltwo,\bY^-\}$, $\hatD=\{\N,\hatg,\hatellc,\hatelltwo,\hatbY^+\}$ and Theorem \ref{thm:Best:THM:JC} ensures that we can find two  diffeomorphisms $\nfi$, $\widetilde{\nfi}$ and two pairs $\{z,V\}$, $\{\widetilde{z},\widetilde{V}\}$ for which $(i)$-$(ii)$ hold. Even more, since the pair $\{\widetilde{\phi}^+,\widetilde{\rig}^+\}$ is known, we can always make the choice $\{\iota^+=\widetilde{\phi}^+,L^+=\widetilde{\rig}^+\}$ so that $\{\hatg,\hatellc,\hatelltwo\}=\{\gamma,\ellc,\elltwo\}$ 
and $\widetilde{\nfi}$ is the identity map, i.e.\ $\widetilde{\nfi}=\mathbb{I}_{\N}$. 
In these circumstances, using \eqref{prod2}-\eqref{prod3} in \eqref{z:and:V:ABS} yields $\widetilde{z}=1$ and $\widetilde{V}=0$. Making the same choice of $\{\iota^+,L^+\}$ for the matching of $\nfi$ transforms \eqref{JC:HD} into
\begin{equation}
\label{JC:HD:multiple}\G_{(z ,V )}(\nfi^{\star}\gamma)=\gamma,\qquad\G_{(z ,V )}(\nfi^{\star}\ellc)=\ellc,\qquad\G_{(z ,V )}(\nfi^{\star}\elltwo)=\elltwo,
\end{equation} 
and forces the embedding $\phi^+$ to be given by
$\widetilde{\phi}^+\circ\nfi\equiv \phi^+$. 
Equations \eqref{JC:HD:inv:elem1}-\eqref{JC:HD:inv:elem3} now read
\begin{align}
\label{JC:HD:inv:elem:mult}\nfi^{\star}\gamma&=\gamma,\qquad\nfi^{\star}\ellc=\frac{\ellc}{z}- \gamma \lp V,\cdot\rp,\qquad\nfi^{\star}\elltwo=\frac{\ell^{(2)}}{z^2}-\frac{2\ellc\lp V\rp}{z}+ \gamma \lp V,V\rp,
\end{align} 
while the expressions \eqref{z:and:V:ABS} for the gauge parameters $\{z,V\}$ become 
\begin{equation}
\label{z:and:V:ABS:mult}z=\frac{1}{(\nfi^{\star}\ellc)(n)},\qquad V =-P(\nfi^{\star}\ellc,\cdot)+ \frac{ P(\nfi^{\star}\ellc,\nfi^{\star}\ellc)-\nfi^{\star}\elltwo}{2(\nfi^{\star}\ellc)(n)} n.
\end{equation}
It is important to emphasize that whereas $\widetilde{\nfi}=\mathbb{I}_{\N}$ forces the metric parts of $\D$, $\hatD$ to be the same, 
the tensors $\bY^-$, $\hatbY^+$ do not coincide in general. We let $[\widetilde{\bY}]\defi \hatbY^+-\bY^-$, $[\bY]\defi \bY^+-\bY^-$ be the jumps codifying the  gravitational/matter-energy content of the null shells associated to $\widetilde{\nfi}$ and $\nfi$ respectively. 
%
%
Then, 
by \eqref{[Y]:general:matching} we know that $[\bY]$ must be given by 
%
\begin{align}
\label{[Y]:general:matching:mult1}[\Y_{ab}]&=z\lp (\nfi^{\star}\hatbY^+)_{ab}+\frac{z}{2}\lp P(\nfi^{\star}\ellc,\nfi^{\star}\ellc)-\nfi^{\star}\elltwo\rp \U_{ab}-\nablao_{(a}(\nfi^{\star}\ellc)_{b)}\rp-\Y^-_{ab}.
\end{align}
The jumps $[\bY]$, $[\widetilde{\bY}]$ can actually be related, as we shall see next. Indeed, by defining the tensor 
\begin{equation}
\label{def:tensor:calig:Y}\bs{\mathcal{Y}}\defi z \nfi^{\star}\hatbY^+-\hatbY^+,
\end{equation}
expression \eqref{[Y]:general:matching:mult1} can be rewritten as 
\begin{align}
\label{[Y]:general:matching:mult2}[\Y_{ab}]&=\mathcal{Y}_{ab}+\frac{z^2}{2}\lp P(\nfi^{\star}\ellc,\nfi^{\star}\ellc)-\nfi^{\star}\elltwo\rp \U_{ab}-z\nablao_{(a}(\nfi^{\star}\ellc)_{b)}+[\widetilde{\Y}_{ab}].
\end{align}
Moreover, a direct calculation shows that the components \eqref{[Y](n,eA):abs} of $[\bY]$ in a basis $\{n,e_A\}$ of $\Gamma(T\N)$ can be expressed in terms of $\bs{\mathcal{Y}}$ as 
\begin{align}
\label{[Y](n,n):abs:mult}\hspace{-0.32cm}[\bY](n,n)=&\spc \bs{\mathcal{Y}}(n,n)+[\widetilde{\bY}](n,n)+\frac{n(z)}{z},\\
\label{[Y](n,eA):abs:mult} \hspace{-0.32cm}[\bY](n,e_A)= &\spc \bs{\mathcal{Y}}(n,e_A)+[\widetilde{\bY}](n,e_A)-\frac{z}{2}( \pounds_{n}\nfi^{\star}\ellc)(e_A) + \frac{e_A(z)}{2z}+ \bsone(e_A) +zP(\nfi^{\star}\ellc,\bU(e_A,\cdot)) ,\\
\label{[Y](eA,eB):abs:mult} \hspace{-0.32cm}[\bY](e_A,e_B)=&\spc \bs{\mathcal{Y}}(e_A,e_B)+[\widetilde{\bY}](e_A,e_B)+\frac{z^2}{2}\lp P(\nfi^{\star}\ellc,\nfi^{\star}\ellc)-\nfi^{\star}\elltwo\rp \bU(e_A,e_B)-ze_A^ae_B^b\nablao_{(a}(\nfi^{\star}\ellc)_{b)}.
\end{align}
Inserting \eqref{[Y](n,n):abs:mult}-\eqref{[Y](eA,eB):abs:mult} 
%
%
into \eqref{tau:prev:abs}
%
%
%
%
%
%
gives us the relation between the energy-momentum tensors $\tau$, $\hattau$ of the two shells. Specifically, for the dual basis $\{\bnormal,\bs{\theta}^A\}$ of $\{n,e_A\}$ one finds (recall that 
$\mathfrak{h}_{AB}\defi \gamma(e_A,e_B)$)
\begin{align}
\nn \hspace{-0.25cm}\tau(\bnormal ,\bnormal )=&\spc  \hattau(\bnormal ,\bnormal )-\fv\mathfrak{h}^{AB}\bigg(  \bs{\mathcal{Y}}(e_A,e_B)+\frac{z^2}{2}\lp P(\nfi^{\star}\ellc,\nfi^{\star}\ellc)-\nfi^{\star}\elltwo\rp \bU(e_A,e_B)-ze_A^ae_B^b\nablao_{(a}(\nfi^{\star}\ellc)_{b)} \bigg),\\
\label{tauellthetaA:abs:mult} \hspace{-0.25cm}\tau(\bnormal ,\bs{\theta}^A)=&\spc \hattau(\bnormal ,\bs{\theta}^A) +\fv\mathfrak{h}^{AB}\bigg(  \bs{\mathcal{Y}}(n,e_B)-\frac{z}{2}( \pounds_{n}\nfi^{\star}\ellc)(e_B)+ \frac{e_B(z)}{2z}+ \bsone(e_B) +zP(\nfi^{\star}\ellc,\bU(e_B,\cdot)) \bigg) ,\\
\nn \hspace{-0.25cm}\tau(\bs{\theta}^A,\bs{\theta}^B)=&\spc \hattau(\bs{\theta}^A,\bs{\theta}^B)-\fv\mathfrak{h}^{AB}\lp \bs{\mathcal{Y}}(n,n)+\frac{n(z)}{z} \rp .
\end{align}
The results \eqref{[Y](n,n):abs:mult}-\eqref{tauellthetaA:abs:mult} 
turn out to be of particular interest when one of the matchings of $(\Mpm,g^{\pm})$ gives rise to no shell. 
In order to see this, let us assume that this is the case and take $\widetilde{\nfi}$ to be the diffeomorphism corresponding to the no-shell matching. Then, $[\widetilde{\bY}]=0$ (i.e.\ $\hatbY^+=\bY^-$) holds, which means that the tensor $\bs{\mathcal{Y}}$ is given by (cf.\ \eqref{def:tensor:calig:Y})
\begin{equation}
\label{calig:Y:when:no:shell}\bs{\mathcal{Y}}= z \nfi^{\star}\bY^--\bY^-.
\end{equation}
Setting $[\widetilde{\bY}]=0$ in equations \eqref{[Y](n,n):abs:mult}-\eqref{[Y](eA,eB):abs:mult} yields
\begin{align}
\label{[Y](n,n):abs:mult:triv}[\bY](n,n)=&\spc \bs{\mathcal{Y}}(n,n)+\frac{n(z)}{z},\\
\label{[Y](n,eA):abs:mult:triv} [\bY](n,e_A)= &\spc \bs{\mathcal{Y}}(n,e_A)-\frac{z}{2}( \pounds_{n}\nfi^{\star}\ellc)(e_A) + \frac{e_A(z)}{2z}+ \bsone(e_A) +zP(\nfi^{\star}\ellc,\bU(e_A,\cdot)) ,\\
\label{[Y](eA,eB):abs:mult:triv} [\bY](e_A,e_B)=&\spc \bs{\mathcal{Y}}(e_A,e_B)+\frac{z^2}{2}\lp P(\nfi^{\star}\ellc,\nfi^{\star}\ellc)-\nfi^{\star}\elltwo\rp \bU(e_A,e_B)-ze_A^ae_B^b\nablao_{(a}(\nfi^{\star}\ellc)_{b)}.
\end{align}
%
%
Consequently, when a no-shell matching is possible, the jump $[\bY]$ corresponding to any other possible matching is given by \eqref{[Y](n,n):abs:mult:triv}-\eqref{[Y](eA,eB):abs:mult:triv} in terms of the data fields $\{\gamma,\ellc,\elltwo,\bY^-\}$ and the diffeomorphism $\nfi$. In other words, knowing the information about the no-shell matching automatically allows one to obtain the gravitational/matter-energy content of the remaining matchings by simply determining $\nfi$. 
In particular, there is no need to compute the new matching rigging $\rig^+$ or the tensor $\bY^+$ to determine the shell properties. 
One simple needs to compute the right-hand sides of \eqref{[Y](n,n):abs:mult:triv}-\eqref{[Y](eA,eB):abs:mult:triv} using \eqref{calig:Y:when:no:shell}.

We emphasize that 
\eqref{[Y](n,n):abs:mult:triv}-\eqref{[Y](eA,eB):abs:mult:triv} apply, in particular, when $(\Mpm,g^{\pm})$ are two regions of the same spacetime $(\M,g)$ and more than one matching can be performed. Then, the existence of a no-shell matching is always guaranteed, as one can always recover the full spacetime $(\M,g)$ from the matching of $(\Mpm,g^{\pm})$. This in fact occurs in all cut-and-paste constructions,  
which means that \textit{\eqref{[Y](n,n):abs:mult:triv}-\eqref{[Y](eA,eB):abs:mult:triv} 
provide the matter content of a null shell generated by \textit{any} cut-and-paste matching procedure, as long as the 
two regions $(\Mpm,g^{\pm})$ of $(\M,g)$ 
can be pasted in more than one way}. 

We conclude this section by discussing a particular situation of interest, namely the case when a null hypersurface data $\D=\{\N,\gamma,\ellc,\elltwo,\bY^-\}$ 
can be embedded in two spacetimes $(\Mpm,g^{\pm})$ with embeddings $\iota^{\pm}$ (such that $\iota^{\pm}(\N)$ are boundaries of $\Mpm$) and riggings $L^{\pm}$
\tcr{with the appropriate orientation}. This means that $(\Mpm,g^{\pm})$ can be matched so that the resulting spacetime contains no shell (because $\bY^-$ is the same for both spacetimes). We assume, in addition, that $\D$ admits a vector field $\ov{\xi}\in\Gamma(T\N)$ with the property $\pounds_{\ov{\xi}}\gamma=0$. 
%
%
%
%
The vector $\ov{\xi}$ defines a \tcr{(local)} one-parameter group of transformations \tcr{$\{\nfi_{t}\}$} of $\N$ satisfying 
\begin{equation}
  \label{nfi(Y^-):equals:Y^-}\nfi^{\star}_t\gamma=\gamma.
  \end{equation}
We now prove that, for each value of $t$, the diffeomorphism $\nfi_t$ gives rise to a matching. 
First, we define gauge parameters $\{z,V\}$ according to \eqref{z:and:V:ABS:mult} for $\nfi=\nfi_t$. Then, it is immediate to check that \eqref{JC:HD:multiple} holds for $\nfi=\nfi_t$ and that $z>0$ (because $\nfi_t$ \tcr{depends continuously on $t$}  and $(\nfi_{t=0}^{\star}\ellc)(n)=\ellc(n)=1$). 
Therefore, conditions $(i)$ and $(ii)$ in Theorem \ref{thm:Best:THM:JC} are both fulfilled (notice that, since $L^{\pm}$ are matching riggings, one points inwards and the other outwards, so $(ii)$ \tcr{is just  $z>0$}) and indeed 
each $\nfi_t$ corresponds to a different matching. The jump $[\bY]\defi \bY^+-\bY^-$ where 
$\bY^+\defi \frac{1}{2}\nfi_t^{\star}\big((\iota^+)^{\star}(\pounds_{\rig^+}g^+)\big)$ (and $\rig^+$ is given by \eqref{rig:general:matching}) 
rules the gravitational/matter-energy content of the resulting shell. The vector field $\ov{\xi}$ generates a multitude of 
new shells. 
The construction is further simplified
  when, in addtion to \eqref{nfi(Y^-):equals:Y^-}, it holds
\begin{equation}
  \label{nfi(Y^-):equals:Y^-two}
    \nfi_t^{\star}\bY^-=\bY^-.
  \end{equation}  
  Then \eqref{calig:Y:when:no:shell} implies
\begin{equation}
\label{calig:Y:z-1:Y}\bs{\mathcal{Y}}=\lp z -1\rp \bY^-,
\end{equation}
which simplifies the expressions \eqref{[Y](n,n):abs:mult:triv}-\eqref{[Y](eA,eB):abs:mult:triv} considerably. One may wonder what is the final result when, in addition, $\ov{\xi}$ is the restriction to $\N$ of a Killing vector field $\xi$ on $\Ml$ (i.e.\ $\iota^-_{\star}\ov{\xi}=\xi$) and $\pounds_{\xi}L^-=0$ is fulfilled
\tcr{(so that
  \eqref{nfi(Y^-):equals:Y^-}  and \eqref{nfi(Y^-):equals:Y^-two} hold)}.
 It is straightforward to see that
\begin{equation}
\nfi_t^{\star}\ellc=\ellc,\qquad\nfi_{t}^{\star}\elltwo=\elltwo,
\end{equation} 
\tcr{which combined with} \eqref{z:and:V:ABS:mult} means that $z=1$, and $V=0$, so $\bs{\mathcal{Y}}=0$ (cf.\ \eqref{calig:Y:z-1:Y}). Moreover, one can easily check that \tcr{the terms in the} right-hand side of \eqref{[Y](n,n):abs:mult:triv}-\eqref{[Y](eA,eB):abs:mult:triv} cancel out.  \tcr{Thus, the procedure gives rise to another no-shell matching, as one would expect because the transformation induced by $\xi$  does not affect in any geometric way the spacetime $(\Ml,g^-)$.}
This constitutes a non-trivial consistency check of equations \eqref{[Y](n,n):abs:mult:triv}-\eqref{[Y](eA,eB):abs:mult:triv}.

\section{Null boundaries with product topology $S\times\mathbb{R}$}\label{sec:matching:fol:abs}
\label{the whole section has been rewritten}

In order to connect the results in this paper with those from \cite{manzano2021null}, \cite{manzano2022general} (see Section \ref{sec:Matching:Paper:Big:Sec}), 
%
%
we now consider the case when the boundaries of the spacetimes to be matched can be foliated by cross-sections. In particular, we shall construct a step function $H$ and provide explicit expressions for the gauge parameters $\{z,V\}$ (cf.\ \eqref{z:and:V:ABS}). 
The results for the jump $[\bY]$ will be then compared with their counterparts from Proposition \ref{prop6}.
%
%


Our setup for the present section is the following. We consider two spacetimes $(\Mpm,g^{\pm})$ with null boundaries $\nullhyp^{\pm}$ and assume that $\nullhyp^{\pm}$ have product topology $S^{\pm}\times\mathbb{R}$, where $S^{\pm}$ are spacelike cross-sections and the null generators are along $\mathbb{R}$. 
We select two future null generators $k^{\pm}\in\Gamma(T\Mpm)\vert_{\nullhyp^{\pm}}$
  of $\nullhyp^{\pm}$ and two cross-sections $S^{\pm}_0 \subset
\nullhyp^{\pm}$. We then construct 
foliation functions $v_{\pm}\in\Fcal(\nullhyp^{\pm})$ by solving
$k^{\pm}(v_{\pm})=1$ with initial values $v_{\pm}\vert_{S_0^{\pm}}=0$. Finally, the riggings $L^{\pm}$  are fixed by the conditions of being  orthogonal to
the respective leaves $\{ v_{\pm} = \mbox{const}\}$, null  and scaled to satisfy
$\mu_1^{\pm}\defi g^{\pm}(L^{\pm},k^{\pm})=1$.

We assume that $(\Mpm,g^{\pm})$ can be matched, so that conditions $(i)$-$(ii)$ in Theorem \ref{thm:Best:THM:JC} are fulfilled for a diffeomorphism $\nfi:\N\longrightarrow\N$ verifying \eqref{isom:cond:abs}. This allows us to take two embeddings $\iota^{\pm}:\N\longhookrightarrow\Mpm$ and construct the hypersurface data sets $\D=\{\N,\gamma,\ellc,\elltwo,\bY^-\}$, $\hatD=\{\N,\hatg,\hatellc,\hatelltwo,\hatbY^+\}$ according to \eqref{def:D,hatD:1}-\eqref{def:D,hatD:2}. 
%
%
%
%
We also introduce the functions
\begin{equation}
\label{def:lambda:v:H:abs}\lambda\defi (\iota^-)^{\star}(v_-),\qquad v\defi (\iota^+)^{\star}(v_+), \qquad\text{and}\qquad H\defi \nfi^{\star}v
\end{equation}
on $\N$. Since by construction $\iota^-_{\star}(n)=k^-$ and $\iota^+_{\star}(\hatn)=k^+$ (recall \eqref{normal}), it is immediate to check that $\{\lambda,v\}$ are foliation functions of $\N$. Note that, \tcb{also by construction}, the data satisfies
\begin{equation}
  \label{gauge:D:hatD:fol}
  \ellc=d\lambda,\qquad\elltwo=0,\qquad \qquad \quad \hatellc=dv,\qquad \hatelltwo=0,
\end{equation}
which has the following immediate consequences
\begin{equation}
  n(\lambda)=1,\quad \bF=0, \quad \bsone=0, \qquad \qquad
\hatn(v) = 1, \quad \widehat{\bF}=0, \quad \widehat{\bsone}=0. 
\label{consequences}
\end{equation}
We now select vector fields $\{e_A\}$ tangent to the leaves
  $\{\lambda=\text{const}.\}$ so that 
  $\{n,e_A\}$ is a basis  of $\Gamma(T\N)$ satisfying $[n,e_A]=0$. As before, we let $h$
  be induced metric on $\{\lambda=\text{const}.\}$
  and $\nabla^h$ for its Levi-Civita derivative. In particular  $h_{AB}\defi \gamma(e_A,e_B)$ and we note that, for any
  $f\in\Fcal(\N)$, we can write $e_A(f)$ also as $\nabla^h_Af$. The pull-back
of $\ellc$ to the leaves of constant $\lambda$ is zero, so \ $\ell_A  = \psi_A=0$. This, together with $\elltwo=0$ and \eqref{Pdecom:abstract}, means that $P=h^{AB}e_A\otimes e_B$. 
Observe also that 
\begin{equation}
\label{nfi:star:hatellc=dH}\nfi^{\star}\hatellc=\nfi^{\star}dv=d(\nfi^{\star}v)=dH,\qquad  \nfi^{\star}\hatelltwo=0,
\end{equation}
which in particular means that 
\begin{equation}
\label{P(dH,-):whatever}P(\nfi^{\star}\hatellc,\cdot)=P(dH,\cdot)=h^{AB}(\nabla^h_AH)e_B.
\end{equation}
Inserting these properties in \eqref{z:and:V:ABS} fixes the matching gauge parameters to be
\begin{equation}
\label{z:and:V:ABS:fol}z=\frac{1}{n(H)},\qquad V =h^{AB}\nabla^h_AH\lp  \frac{ \nabla^h_BH}{2n(H)} n-e_B\rp.
\end{equation}
The push-forward vector fields $\{\nfi_{\star}n,\nfi_{\star}e_A\}$ can also be computed in terms of the function $H$ and the vector fields $W_A\defi \widehat{P}(\bs{W}_A,\cdot)$. The result is an easy consequence of
  Corollary \ref{cor:nfistar(n,eA)} and reads
  
\vspace{-0.2cm}

\begin{multicols}{2}
\noindent
\begin{align}
\label{nfistar(n):general:abs:fol}\nfi_{\star}n&=(\nfi^{-1})^{\star}(n(H))\hatn,\\
  \label{nfistar(e_A):general:abs:fol}
  \nfi_{\star}e_A&=W_A+(\nfi^{-1})^{\star}(e_A(H))\hatn.
\end{align}
\end{multicols}

\vspace{-0.4cm}

Observe that  $\{W_A\}$ 
are tangent to the leaves $\{v=\text{const.}\}$ (because by Corollary \ref{cor:nfistar(n,eA)} we know that $ 0=\hatP(\bs{W}_A,\hatellc)=\hatellc(W_A)=W_A(v)$).
Let us now  prove that $\hatn$ and $W_A$ commute.
  \begin{lemma}
    \label{nW_A}
  The vector fields $\hatn$ and $W_A$ satisfy
  $[\hatn,W_A] =0$.
\end{lemma}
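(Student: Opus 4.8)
The plan is to exploit the fact that $\nfi$ is a diffeomorphism, so $\nfi_{\star}$ preserves Lie brackets of vector fields. Recall that the basis $\{n,e_A\}$ of $\Gamma(T\N)$ was chosen precisely so that $[n,e_A]=0$. Applying $\nfi_{\star}$ and using $\nfi_{\star}[n,e_A]=[\nfi_{\star}n,\nfi_{\star}e_A]$, we conclude $[\nfi_{\star}n,\nfi_{\star}e_A]=0$. This is the starting identity.

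Next I would substitute the explicit push-forwards \eqref{nfistar(n):general:abs:fol}--\eqref{nfistar(e_A):general:abs:fol}. Writing $a\defi (\nfi^{-1})^{\star}(n(H))$ and $b_A\defi (\nfi^{-1})^{\star}(e_A(H))$, these read $\nfi_{\star}n=a\hatn$ and $\nfi_{\star}e_A=W_A+b_A\hatn$. Expanding $0=[a\hatn,\,W_A+b_A\hatn]$ with the Leibniz rule for Lie brackets, the $[\hatn,\hatn]$ term vanishes and one is left with $a[\hatn,W_A]=\big(W_A(a)-a\hatn(b_A)+b_A\hatn(a)\big)\hatn$. Since $z=1/n(H)\in\Fcal^{\star}(\N)$ (equivalently $\cp_{\lambda}H>0$), the function $a=(\nfi^{-1})^{\star}(n(H))$ is nowhere zero, so $[\hatn,W_A]=c_A\hatn$ for some $c_A\in\Fcal(\N)$.

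Finally I would determine $c_A$ by evaluating both sides on the foliation function $v$. From \eqref{gauge:D:hatD:fol} we have $\hatellc=dv$ and from \eqref{consequences} $\hatn(v)=1$, while Corollary \ref{cor:nfistar(n,eA)} (as already noted after \eqref{nfistar(e_A):general:abs:fol}) gives $W_A(v)=\hatellc(W_A)=\hatP(\bs{W}_A,\hatellc)=0$. Hence $[\hatn,W_A](v)=\hatn(W_A(v))-W_A(\hatn(v))=\hatn(0)-W_A(1)=0$, whereas on the other hand $[\hatn,W_A](v)=c_A\hatn(v)=c_A$. Therefore $c_A=0$ and $[\hatn,W_A]=0$, as claimed. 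There is essentially no hard step here: the whole argument is a short bracket computation, and the only points requiring a little care are that the coefficient $a$ in front of $[\hatn,W_A]$ is nonvanishing (which is exactly the admissibility condition $\cp_{\lambda}H>0$) and the use of $v$ as a foliation function to eliminate the remaining unknown $c_A$.
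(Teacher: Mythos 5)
Your proof is correct, and its core is the same as the paper's: both arguments push forward the commuting frame $\{n,e_A\}$, use that $\nfi_{\star}$ preserves Lie brackets together with $[n,e_A]=0$, and thereby reduce $[\hatn,W_A]$ to a multiple $c_A\hatn$ of the null generator. The only place you diverge is in killing the residual coefficient $c_A$: the paper computes it explicitly, showing it is proportional to $e_A(n(H))-n(e_A(H))=[e_A,n](H)=0$ by pulling back the differentials $d\mathfrak{u}$ and $d\chi_{(A)}$ along $\nfi$, whereas you contract the bracket with the foliation function $v$ and use $W_A(v)=\hatellc(W_A)=0$ together with $\hatn(v)=1$ to conclude $c_A=[\hatn,W_A](v)=0$. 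Your finishing move is slightly slicker in that it avoids the pull-back bookkeeping, but it leans on the gauge choice $\hatellc=dv$ specific to this section; the paper's computation, by contrast, only uses $[e_A,n]=0$ applied to $H$. Both steps are valid, and the non-vanishing of $a=(\nfi^{-1})^{\star}(n(H))$ that you invoke is indeed guaranteed by $z\in\Fcal^{\star}(\N)$.
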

\begin{proof}
Define the  functions $\mathfrak{u}\defi (\nfi^{-1})^{\star}(n(H))$ and $\chi_{(A)}\defi (\nfi^{-1})^{\star}(e_A(H))$,
so that \eqref{nfistar(n):general:abs:fol}-\eqref{nfistar(e_A):general:abs:fol}
can be written as
$ \hatn = \mathfrak{u}^{-1} \nfi_{\star}n$ and
$ W_A = \nfi_{\star}e_A-\chi_{(A)}\hatn$. Thus,
\begin{align}
  \nn [\hatn,W_A]&=[\mathfrak{u}^{-1} \nfi_{\star}n,\nfi_{\star}e_A]-[\hatn,\chi_{(A)}\hatn]=\mathfrak{u}^{-1} \nfi_{\star}([n,e_A])+
\mathfrak{u}^{-2} \nfi_{\star}e_A(\mathfrak{u}) \nfi_{\star}n-\hatn(\chi_{(A)})\hatn 
\\
 & =  \mathfrak{u}^{-2} \big (
                   \nfi_{\star}e_A(\mathfrak{u}) 
- \nfi_{\star} n ( \chi_{(A)} ) \big ) \nfi_{\star} n, 
%
\end{align}
where in the last equality we used $[n,e_A]=0$ and $\hatn = \mathfrak{u}^{-1}
\nfi_{\star}n$. To prove the claim we just need to show that the last parenthesis is zero. Indeed, 
\begin{align*}
\nfi_{\star}e_A(\mathfrak{u})-\nfi_{\star}n(\chi_{(A)})&=(d\mathfrak{u})(\nfi_{\star}e_A)-(d\chi_{(A)})(\nfi_{\star}n)=\nfi^{\star}(d\mathfrak{u})(e_A)-\nfi^{\star}(d\chi_{(A)})(n)\\
&=(d \nfi^{\star}\mathfrak{u})(e_A)-(d \nfi^{\star}\chi_{(A)})(n)=e_A(n(H))-n(e_A(H))=[e_A,n](H)=0.
\end{align*} 
\end{proof}
By Remark \ref{rem:W_A} we also know that $\{\hatn,W_A\}$ constitute a basis of $\Gamma(T\N)$ and hence the vector fields 
\begin{equation}
\label{def:abstract:e_a^+}\{ e_1^-\defi \iota^-_{\star}n, \quad e_A^-\defi \iota^-_{\star}e_A\},\qquad\{ e_1^+\defi \iota^+_{\star}(\nfi_{\star}n), \quad e_A^+\defi \iota^+_{\star}(\nfi_{\star}e_A)\}
\end{equation}
form basis of $\Gamma(T\nullhyp^{\pm})$ respectively. Inserting \eqref{nfistar(n):general:abs:fol}-\eqref{nfistar(e_A):general:abs:fol} into \eqref{def:abstract:e_a^+} and using again that $\iota_{\star}^+(\hatn)=k^+$, one obtains
\begin{equation}
\label{e_a^+:explicit:abstract:762483}e_1^+= n(H) k^+, \qquad e_A^+=e_A(H)k^++\iota^+_{\star}(W_A),
\end{equation}
where for simplicity we have dropped pull-backs affecting functions. 
 Given that
  $\{\iota^+_{\star}W_A\}$ are  linearly independent and tangent 
  to the leaves $\{v_+=\text{const.}\}\subset\nullhyp^+$, they can be decomposed \tcr{in a basis $\{L^+,k^+,v_A^+\}$ of $\Gamma(T\Mp)\vert_{\nullhyp^+}$ satisfying \eqref{basis}}   
%
%
as $\iota_{\star}^+W_A=b_A^Bv_B^+$, with $\{b_A^B\}$ defining an invertible matrix. Moreover, $b_A^B$ are constant along the null generators as a consequence of Lemma \ref{nW_A}:
  \begin{equation*}
      0 =  [\iota_{\star}(\hatn), \iota_{\star}(W_A)] =
[ k^{+}, b^B_A v_A^{+} ] = k^{+} (b^B_A) v^{+}_B \qquad \Longleftrightarrow
 \qquad k^+(b^B_A)=0.
  \end{equation*}

%

\tcr{The matching rigging $\rig^+$, obtained by inserting \eqref{z:and:V:ABS:fol} into \eqref{rig:general:matching} and using \eqref{nfistar(n):general:abs:fol}-\eqref{nfistar(e_A):general:abs:fol},  \eqref{def:abstract:e_a^+}-\eqref{e_a^+:explicit:abstract:762483}, reads} 
\begin{align}
\rig^+
%
%
\label{rig:general:matching:fol} &=\frac{1}{n(H)}\lp L^+ -h^{AB}\nabh_AH\lp \iota^+_{\star}(W_A)+\frac{\nabh_BH}{2} k^+ \rp \rp,
\end{align}
which one easily checks to be the same as \eqref{uniqxi} simply by noting that (in the notation of Section \ref{sec:Matching:Paper:Big:Sec}) our choice of $L^{\pm}$ entails  $\mu_1^{\pm}=1$, $\mu_A^{\pm}=0$ and that
$\iota_{\star}^+W_A=b_A^Bv_B^+$ gives $h^{AB}=h_+^{IJ}(b^{-1})_I^A(b^{-1})_J^B$.

The expressions for $[\bY]$ are obtained as a particular case of Theorem \ref{thm:Best:THM:matching}.
\begin{theorem}\label{thm:matter:content:mult:6145375688}
In  the setup and conditions of  Theorem \ref{thm:Best:THM:matching} 
suppose  further that the boundaries $\nullhyp^{\pm}$ can be foliated by cross-sections and define $\lambda,v,H\in\Fcal(\N)$ as in \eqref{def:lambda:v:H:abs}.  Let $h$ be the induced metric and $\nabh$ the corresponding Levi-Civita covariant derivative  on the leaves $\{\lambda=\text{const.}\}\subset\N$.
Then,
\begin{equation}
\label{[Y]:general:matching:fol}[\Y_{ab}]=\frac{1}{n(H)}\lp (\nfi^{\star}\hatbY^+)_{ab}+\frac{ h^{AB}(\nabla^h_AH)(\nabla^h_BH) }{2 n(H)}\U_{ab}-\nablao_{a}\nablao_{b}H\rp-\Y^-_{ab}.
\end{equation}
Let $\{ e_A\}$ be vector fields in $\N$ such that $\{n , e_A\}$ is a basis
adapted to the foliation $\{ \lambda = \textup{const.}\}$ and define
$W_A$ by means of \eqref{nfistar(e_A):general:abs:fol}. Then 
the components the jump $[\bY]$ can be written as
\begin{align}
\label{[Y](n,n):abs:fol:point}[\bY](n,n)=&\spc n(H)\nfi^{\star}\big( \hatbY^+(\hatn,\hatn)\big)-\bY^-(n,n)-\frac{n(n(H))}{n(H)},\\
\nn [\bY](n,e_A)= &\spc \nfi^{\star}\big(\hatbY^+(\hatn,W_A)\big)+(\nabla^h_AH) \nfi^{\star}\big(\hatbY^+(\hatn,\hatn)\big)-\bY^-(n,e_A)\\
\label{[Y](n,eA):abs:fol:point} & -\frac{\nabla^h_A(n(H))}{n(H)}+\frac{h^{IJ}\nabla_I^hH}{n(H)} \bUp(e_A,e_J), \\
\nn [\bY](e_A,e_B)=&\spc \frac{1}{n(H)}\Bigg({\nfi^{\star}\big(\hatbY^+(W_A,W_B)\big)}+{2(\nabla^h_{(A}H) \nfi^{\star}\big(\hatbY^+(\hatn,W_{B)})\big)}+{(\nabla^h_AH) (\nabla^h_BH) \nfi^{\star}\big(\hatbY^+(\hatn,\hatn)\big)}\\
\label{[Y](eA,eB):abs:fol:point} &-n(H)\bY^-(e_A,e_B)+\frac{h^{IJ}(\nabla_I^hH)(\nabla_J^hH)}{2n(H)} \bUp(e_A,e_B)-{\nabla^h_A\nabla^h_BH}\Bigg).
\end{align}
\end{theorem}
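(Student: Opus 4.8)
The statement to be proved is Theorem \ref{thm:matter:content:mult:6145375688}, which is explicitly flagged as a particularization of Theorem \ref{thm:Best:THM:matching} to the case of product topology $S^{\pm}\times\mathbb{R}$. So the whole proof plan is: substitute the foliation-adapted simplifications collected in \eqref{gauge:D:hatD:fol}--\eqref{P(dH,-):whatever} and \eqref{z:and:V:ABS:fol} into the general formulas \eqref{[Y]:general:matching} and \eqref{[Y](n,n):abs:point}--\eqref{[Y](eA,eB):abs:point}, and check that everything collapses to \eqref{[Y]:general:matching:fol}--\eqref{[Y](eA,eB):abs:fol:point}. The key inputs are: in this setup $\ellc=d\lambda$, $\elltwo=0$, $\hatellc=dv$, $\hatelltwo=0$, hence $\bF=\bsone=0$, $\widehat{\bF}=\widehat{\bsone}=0$, $\psi_A=\ell_A=0$, $P=h^{AB}e_A\otimes e_B$; moreover $\nfi^{\star}\hatellc=dH$ and $\nfi^{\star}\hatelltwo=0$, so $z=1/n(H)$, $P(\nfi^{\star}\hatellc,\cdot)=h^{AB}(\nabla^h_AH)e_B$, and $P(\nfi^{\star}\hatellc,\nfi^{\star}\hatellc)=h^{AB}(\nabla^h_AH)(\nabla^h_BH)$.

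\textbf{Step 1 (the covariant formula).} Starting from \eqref{[Y]:general:matching}, with $\nfi^{\star}\hatelltwo=0$ the bracket becomes $(\nfi^{\star}\hatbY^+)_{ab}+\tfrac{z}{2}P(\nfi^{\star}\hatellc,\nfi^{\star}\hatellc)\U_{ab}-\nablao_{(a}(\nfi^{\star}\hatellc)_{b)}$. Since $\nfi^{\star}\hatellc=dH$ is exact and $\nablao$ is torsion-free, $\nablao_{(a}(dH)_{b)}=\nablao_a\nablao_b H$. Using $z=1/n(H)$ and the formula for $P(\nfi^{\star}\hatellc,\nfi^{\star}\hatellc)$ above gives exactly \eqref{[Y]:general:matching:fol}.

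\textbf{Step 2 (the components).} I would feed the same substitutions into \eqref{[Y](n,n):abs:point}--\eqref{[Y](eA,eB):abs:point} (equivalently the point-form versions \eqref{[Y](n,n):abs:point}--\eqref{[Y](eA,eB):abs:point} from Lemma \ref{lem:tau:point}, since here $\chi_{(A)}=(\nfi^{-1})^{\star}(e_A(H))$ and $W_A=\widehat{P}(\bs{W}_A,\cdot)$ are exactly the objects appearing in \eqref{nfistar(n):general:abs:fol}--\eqref{nfistar(e_A):general:abs:fol}). For $[\bY](n,n)$: $z(\nfi^{\star}\hatbY^+)(n,n)=(1/n(H))\nfi^{\star}(\hatbY^+(\hatn,\hatn))$ once one uses $\nfi_{\star}n=(\nfi^{-1})^{\star}(n(H))\hatn$, and the term $n(z)/z$ becomes $-n(n(H))/n(H)$ since $n(1/n(H))=-n(n(H))/n(H)^2$. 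That yields \eqref{[Y](n,n):abs:fol:point}. For $[\bY](n,e_A)$: $\bsone=0$ kills one term; $e_A(z)/(2z)=-\nabla^h_A(n(H))/(2n(H))$; the term $zP(\nfi^{\star}\hatellc,\bU(e_A,\cdot))$ becomes $(h^{IJ}\nabla^h_I H/n(H))\bUp(e_A,e_J)$; and one must show that $-\tfrac{z}{2}(\pounds_n\nfi^{\star}\hatellc)(e_A)$ together with the other half of $e_A(z)/(2z)$ reconstructs the claimed $-\nabla^h_A(n(H))/n(H)$. Concretely $(\pounds_n dH)(e_A)=e_A(n(H))=\nabla^h_A(n(H))$ (using $[n,e_A]=0$), so $-\tfrac{z}{2}(\pounds_n\nfi^{\star}\hatellc)(e_A)=-\nabla^h_A(n(H))/(2n(H))$, and the two halves combine. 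The $\nfi^{\star}(\hatbY^+(\hatn,W_A))$ and $\chi_{(A)}\nfi^{\star}(\hatbY^+(\hatn,\hatn))$ pieces come straight from expanding $(\nfi^{\star}\hatbY^+)(n,e_A)$ via \eqref{nfistar(e_A):general:abs:fol}. This gives \eqref{[Y](n,eA):abs:fol:point}. For $[\bY](e_A,e_B)$: expand $(\nfi^{\star}\hatbY^+)(e_A,e_B)$ using $\nfi_{\star}e_A=W_A+(\nfi^{-1})^{\star}(e_A(H))\hatn$ to get the four $\hatbY^+$ terms symmetrized; $\nablao_{(a}(\nfi^{\star}\hatellc)_{b)}$ contracted with $e_A^ae_B^b$ equals $\nabla^h_A\nabla^h_BH$ by \eqref{covderpcovtensoronS} (one covariant slot, with $\ell_A=0$ and $\elltwo-\elltwo_{\parallel}=0$); and $P(\nfi^{\star}\hatellc,\nfi^{\star}\hatellc)=h^{IJ}(\nabla^h_IH)(\nabla^h_JH)$ multiplies $\tfrac{z^2}{2}\bU(e_A,e_B)$, with $z^2=1/n(H)^2$, producing the $\bUp$ term after factoring out $1/n(H)$. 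That is \eqref{[Y](eA,eB):abs:fol:point}.

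\textbf{Main obstacle.} The only genuinely delicate point is making the $\nablao$-derivative reductions precise: converting $\nablao_a\nablao_b H$ and its pull-back to the leaves into $\nabla^h_A\nabla^h_B H$ via \eqref{covderpcovtensoronS}, and handling the Lie-derivative term $\pounds_n(\nfi^{\star}\hatellc)$ cleanly — one must be careful that $\nfi^{\star}\hatellc=dH$ exactly (not merely up to gauge), which is guaranteed here because $\hatellc=dv$ is itself exact and pull-back commutes with $d$. Everything else is bookkeeping: systematically tracking the vanishing of $\ell_A$, $\psi_A$, $\elltwo$, $\bsone$, $\bF$ and their hatted analogues, and the identifications $\nfi_{\star}n\propto\hatn$, $\nfi_{\star}e_A=W_A+(\text{fn})\hatn$. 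I would present the proof tersely, asserting that the stated substitutions carry Theorem \ref{thm:Best:THM:matching} and Lemma \ref{lem:tau:point} directly into the claimed expressions, and spelling out only the two or three non-obvious cancellations above.
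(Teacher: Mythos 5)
Your proposal is correct and follows essentially the same route as the paper: equation \eqref{[Y]:general:matching:fol} is obtained by inserting \eqref{nfi:star:hatellc=dH}--\eqref{z:and:V:ABS:fol} into \eqref{[Y]:general:matching}, and the component formulas by particularizing \eqref{[Y](n,n):abs:point}--\eqref{[Y](eA,eB):abs:point} with $z^{-1}=n(H)$, $\nfi^{\star}\hatellc=dH$, $\bsone=0$, $\pounds_n dH=d(n(H))$ and $e_A^ae_B^b\nablao_a\nablao_b H=\nabh_A\nabh_B H$ via \eqref{covderpcovtensoronS}. The cancellations you single out (the two halves of $-\nabh_A(n(H))/n(H)$, the reduction of the second $\nablao$-derivative to $\nabh_A\nabh_B H$) are exactly the points the paper's proof flags as well.
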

\begin{proof}
  Equation \eqref{[Y]:general:matching:fol} follows at once after inserting \eqref{nfi:star:hatellc=dH}-\eqref{z:and:V:ABS:fol} into \eqref{[Y]:general:matching}.  To obtain \eqref{[Y](n,n):abs:fol:point}-\eqref{[Y](eA,eB):abs:fol:point}, it suffices to particularize \eqref{[Y](n,n):abs:point}-\eqref{[Y](eA,eB):abs:point} for $z^{-1}=n(H)$, $\nfi^{\star}\hatellc=dH$, $\chi_{(A)}=(\nfi^{-1})^{\star}(e_A(H))$, $\nfi^{\star}\hatelltwo=0$, $\bsone=0$ and $P(\nfi^{\star}\hatellc,\cdot)=h^{AB}(\nabla_A^hH)e_B$ and notice that $\pounds_{n}(\nfi^{\star}\hatellc)=\pounds_{n}dH=d(n(H))$, as well as $e_A^ae_B^b\nablao_{a}\nablao_{b}H=\nabh_A\nabh_BH$ (see   \eqref{covderpcovtensoronS}).
\end{proof}

Before establishing the connection between  \eqref{[Y](n,n):abs:fol:point}-\eqref{[Y](eA,eB):abs:fol:point} and the corresponding expressions in Proposition \ref{prop6} we  need to relate hypersurface data quantities with the tensors defined in \eqref{somedefs}.
\begin{lemma}
\label{geometric_quantities_sub}
  Let $\hypdata$ be $\{\phi,\rig\}$-embedded in $(\M,g)$,  $k := \phi_{\star}n$  the corresponding null generator and $\ke_{k}$  its surface gravity. 
  Consider a transverse submanifold $S \subset \N$ and assume that 
the gauge is such that the rigging
  $\rig$ is null and orthogonal to $\phi(S)$. 
  Then, for any basis $\{ e_A\}$ of $\Gamma(TS)$ it holds (we identify scalars and vectors with their images on $\phi(\N)$)
\begin{multicols}{2}
\noindent
\begin{itemize}
\item[$(a)$] $\ke_{k}=-\bY(n,n)$,
\item[$(b)$] $\bs{\sigma}_{\rig}(e_A)=\bY(e_A,n) + \bF(e_A,n)$, 
\item[$(c)$] $\btsff^{k}(e_A,e_B)=\bU(e_A,e_B)$.
\item[$(d)$] $\bs{\Theta}^{\rig}(e_{(A},e_{B)})=\bY(e_A,e_B)$,

\end{itemize}
\end{multicols}
\end{lemma}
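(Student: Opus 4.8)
The strategy is to relate each quantity on the left-hand side, which is defined ambiently in \eqref{somedefs} in terms of the Levi-Civita connection $\nabla$ of $(\M,g)$, to the corresponding hypersurface-data quantity on the right, using the embedding relations \eqref{nablaXYnablao}--\eqref{nablaXrig} together with the null-case simplifications $\ntwo=0$, $\bsone(n)=0$, $\bU(n,\cdot)=0$, $\nablao_nn=0$ and $\nu=\phi_\star n = k$. The overall pattern is: write $\nabla_X Y$ via \eqref{nablaXYnablao}, contract with whatever vector appears in the definition \eqref{somedefs}, and exploit the assumed gauge ($\rig$ null and orthogonal to $\phi(S)$), which forces $\elltwo=0$ and $\psi_A=\ellc(e_A)=0$ on $S$, so that the $\bU(X,Y)\rig$ and $\bY(X,Y)\nu$ terms either drop out or contribute only the wanted piece. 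I also record that $g(L,k)=\mu_1 = \elltwo \cdot(\ldots)+\ell(n)=1$ in this gauge, and that $g(\nu,\rig)=1$ with $\nu=k$ means $\ellc(n)=1$.

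\emph{Item (a)} is essentially \eqref{Qmeaning}: since $\nu=k=\phi_\star n$, \eqref{nablaXYnablao} gives $\nabla_k k = \phi_\star(\nablao_n n - \bY(n,n)n) = -\bY(n,n)k$ using $\nablao_nn=0$, so comparing with $\nabla_k k = \ke_k k$ yields $\ke_k=-\bY(n,n)=\Q$. \emph{Item (c)}: by definition $\btsff^k(e_A,e_B)=g(\nabla_{e_A}k,e_B)$; since $k=\nu$, this is $\phi^\star(\nabla\bs\nu)(e_A,e_B)=\bk(e_A,e_B)$ by \eqref{2FF}, and $\bk=\ntwo\bY+\bU=\bU$ at null points — note this item does not even need the gauge assumption. \emph{Item (d)}: from \eqref{somedefs}, $\bs\Theta^\rig(e_A,e_B)=g(\nabla_{e_A}\rig,e_B)=\la\nabla_{e_A}\rig,e_B\rag$, which by \eqref{nablaXrig} equals $\bY(e_A,e_B)+\bF(e_A,e_B)$; symmetrizing in $A,B$ kills the antisymmetric $\bF$ and leaves $\bY(e_{(A},e_{B)})$. \emph{Item (b)}: from \eqref{somedefs}, $\bs\sigma_\rig(e_A)=-\tfrac{1}{g(\rig,k)}\la\nabla_{e_A}k,\rig\rag$; in the chosen gauge $g(\rig,k)=g(\rig,\nu)=1$, and $\la\nabla_{e_A}k,\rig\rag=\la\nabla_{e_A}\nu,\rig\rag = -\la\nu,\nabla_{e_A}\rig\rag$ (since $g(\nu,\rig)$ is constant $=1$ along $\phi(\N)$), and then $\la\nabla_{e_A}\rig,\nu\rag$ can be read off by decomposing $\nu=\ntwo\rig+\phi_\star n$ and using \eqref{nablaXrig} together with the fact that $g(\nabla_{e_A}\rig,\rig)=\tfrac12 e_A(\elltwo)$ vanishes pulled back to $S$ in this gauge; what survives is $\la\nabla_{e_A}\rig,\phi_\star n\rag = \bY(e_A,n)+\bF(e_A,n)$.

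The one genuinely delicate point is item (b): one must be careful that the needed identities $g(\rig,k)=1$ and the vanishing of the $\rig$-component contributions really do follow from "$\rig$ null and orthogonal to $\phi(S)$" rather than the stronger "$\rig$ orthogonal to all of $\phi(\N)$." Orthogonality to $S$ only gives $\ellc(e_A)=0$ for $A$ ranging over $TS$, and nullity gives $\elltwo=0$; it does \emph{not} give $\ellc(n)=0$ (indeed $\ellc(n)=1$ is forced). So I will be explicit that the relevant contractions only ever involve $e_A\in TS$ and $n$, never a term that would require $\ellc(n)=0$, and I expect the bookkeeping of which components of $\ellc$ and $\elltwo$ are being used to be the main place where care is required; everything else is a direct substitution into \eqref{nablaXYnablao}--\eqref{nablaXrig}.
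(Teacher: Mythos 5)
Your proposal is correct and follows essentially the same route as the paper: (a) via \eqref{Qmeaning}, (c) via \eqref{2FF} with $\bk=\bU$ at null points, (d) by symmetrizing \eqref{nablaXrig}, and (b) by metric compatibility applied to $g(\rig,k)=1$ followed by \eqref{nablaXrig}. The only difference is that your item (b) is slightly more laboured than necessary: since $\ntwo=0$ forces $\nu=\phi_{\star}n=k$ outright, the decomposition of $\nu$ and the $e_A(\elltwo)$ bookkeeping you flag as delicate never actually enter.
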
  
\begin{remark}
  This result is a particular case of a much more general analysis on the geometry of embedded submanifold in a hypersurface data set \tcr{carried out in
  \textup{\cite{mars2023covariant}}}. We include the proof for completeness.
\end{remark}
\begin{proof}
Claim $(a)$ follows at once from \eqref{defY(n,.)andQ} and \eqref{Qmeaning} (note that here $\nu=k$). To prove $(b)$ we compute
\begin{align*}
\bs{\sigma}_{\rig}(e_A)  \stackbin{\eqref{somedefs}}=
-g(\nabla_{e_A}k,\rig)
= g(\nabla_{e_A}\rig,k)  \stackbin{\eqref{nablaXrig}}=
\bY(e_A,n) + \bF(e_A,n).
\end{align*}
Item $(c)$ has already been stated after definition \eqref{2FF} and  $(d)$ 
follows from
\begin{align*}
\bY(e_A,e_B)=&\spc\frac{1}{2} (\pounds_{\rig} g) (e_A,e_B)=
g\big(\nabla_{e_{(A}}\rig,e_{B)}\big)\stackbin{\eqref{somedefs}}=
=\bs{\Theta}^{\rig}\big(e_{(A},e_{B)}\big).
\end{align*}
\end{proof}
We are now in a position where the comparison can be made.
We identify the vector fields $\{v_A^-\}$ \tcr{introduced in Section \ref{sec:Matching:Paper:Big:Sec}} with the push-forwards of $\{e_A\}$, \tcr{hence $\mu_1^-=1$ and $\mu^-_A=0$. On the other hand, $\mu_1^+=1$ and $\mu_A^+=g^{+}(L^+,v_A^+)
=(b^{-1})_A^B\hatellc(W_B)
=(b^{-1})_A^BW_B(v)=0$}, so the covector $q$ 
defined in Proposition \ref{prop6} 
is simply $q_A=-\nabh_AH$. The vector $X^a$ in \eqref{X1} 
is \tcr{in turn} given by
\begin{equation}
X^1=\frac{h^{AB}\nabh_AH\nabh_BH}{2n(H)},\qquad X^A=-h^{AB}\nabh_BH.
\end{equation}
\tcr{Thus,} expressions \eqref{Y11}-\eqref{YIJ} become 
\begin{align}
\label{Y11:abs}[\bY](n,n)=& -n(H) \ke_{k^+}^+ +\ke^-_{k^-} - \dfrac{n(n(H))}{n(H)},\\
\label{Y1J:abs}   [\bY](n,e_J)=&\spc \bs{\sigma}_{L}^+(  W_J)-\bs{\sigma}^-_{L}( v^-_J)-(\nabla_{J}^hH)\ke_{k^+}^+ -\dfrac{\nabh_J(n(H))}{n(H)}+\dfrac{h^{LB}\nabh_BH}{n(H)}{\btsff}_-^{k}( v_J^-,v_L^-),\\
\nonumber [\bY](e_I,e_J)=&\spc\frac{1}{n(H)}\bigg(
2(\nabla_{{\lp I\rd}}^hH)\bs{\sigma}_{L}^+( W_{\ld J\rp})-\ke_{k^+}^+(\nabla_{I}^hH )(\nabla_{J}^hH)
+\bs{\Theta}^{L}_+( W_{\lp I\rd},W_{\ld J\rp})-n(H)\bs{\Theta}^{L}_-( v_{\lp I \rd}^-,v_{\ld J \rp}^-)\\
\label{YIJ:abs}&+ \frac{\gamma^{AB}\nabh_AH\nabh_BH}{2n(H)}{\btsff}_-^{k}( v_I^-,v_J^-)-\nabla_{I}^h\nabla_{J}^hH\bigg).
\end{align}
Particularizing Lemma \ref{geometric_quantities_sub}
to the sections $\{ \lambda= \mbox{const}\}$ of $\D$  (with basis $e_A$)
and the sections $\{v= \mbox{const}\}$ of $\hatD$  (with basis $W_A$), and recalling that $\bF= \widehat{\bF} =0$ (see \eqref{consequences}),  
it is  straightforward to check that \eqref{Y11:abs}-\eqref{YIJ:abs} coincide with \eqref{[Y](n,n):abs:fol:point}-\eqref{[Y](eA,eB):abs:fol:point}.

\section{Cut-and-paste matching: (anti-)de Sitter spacetime}\label{sec:cut-and-paste:abs}
We have already mentioned that \eqref{[Y](n,n):abs:mult:triv}-\eqref{[Y](eA,eB):abs:mult:triv} hold for the specific case when the two spacetimes to be matched are actually two regions of the same spacetime (and more than one matching is allowed). In this section, our aim is to provide an example of a cut-and-paste construction, namely the matching of two regions of a constant-curvature spacetime across a totally geodesic null hypersurface.  
For previous works on the cut-and-paste construction describing non-expanding impulsive gravitational waves in constant curvature backgrounds we refer e.g.\ to \cite{podolsky1999nonexpanding}, \cite{podolsky2002exact}, \cite{griffiths2009exact}
\cite{podolsky2017penrose}, \cite{podolsky2019cut} and references therein. 

%
%
%

In any constant curvature spacetime $(\M,g)$ 
there exists only one totally geodesic null hypersurface up to isometries (see e.g.\  \cite{ferrandez2001geometry}, \cite{navarro2016null}). We denote one such hypersurface by $\nullhyp$. Then, one can always construct coordinates $\{\cu,\cv,x^A\}$ adapted to $\nullhyp$ so that the metric is conformally flat and $\nullhyp\defi\{\cu=0\}$, namely
%
%
\begin{align}
\label{metric:AdSMk}
g=\dfrac{g_{Mk}}{\mu^2},\quad
\text{where}\quad g_{Mk}= -2d\cu d\cv+\delta_{AB}dx^Adx^B,\quad \mu\defi 1+\dfrac{\Lambda}{12}\lp \delta_{AB}x^Ax^B-2\cu\cv\rp.
%
%
\end{align}
Here $\Lambda$ stands for the cosmological constant, so $\Lambda=0$, $\Lambda>0$, $\Lambda<0$ correspond to Minkowski, de Sitter and anti-de Sitter spacetimes respectively. 
When $\Lambda\leq0$, the coordinates $\{\cu,\cv,x^A\}$ cover a whole neighbourhood of $\nullhyp$. However, for the de Sitter case one needs to remove one generator of $\nullhyp$ 
because the topology of $\nullhyp$ is $\mathbb{S}^{\n}\times \mathbb{R}$ while stereographic coordinates only cover the sphere minus one point. 
%
%
%
%
In this section, we will analyze the three cases $\Lambda=0$, $\Lambda<0$ and $\Lambda>0$ at once with the matching formalism introduced before. 

%
%
%

The induced metric on $\nullhyp$ reads 
$ds^2\stackbin{\nullhyp}=\lp 1+\frac{\Lambda}{12} \delta_{AB}x^Ax^B\rp^{-2}\delta_{AB}dx^Adx^B$, 
and obviously the topology of $\nullhyp$ is $S\times\mathbb{R}$, $S$ being a spacelike section and the null generators being along $\mathbb{R}$. Therefore, all results from Section \ref{sec:matching:fol:abs} can be applied.

Let us construct hypersurface data associated to $\nullhyp$.
Since $\nullhyp$ is embedded on $(\M,g)$, there exists an abstract manifold $\N$ and an embedding $\iota$ such that $\iota(\N)=\nullhyp$. We can select $\iota$ to be as trivial as possible by constructing coordinates $\{\lambda,y^A\}$ on $\N$ so that 
\begin{equation}
\begin{array}{clcl}
\iota: & \N & \longhookrightarrow & \nullhyp\\
	   & (\lambda,y^A) & \longmapsto & \iota(\lambda,y^A)\equiv(\cu=0,\cv=\lambda,x^A=y^A). 
\end{array}
\end{equation}
We also need a choice of rigging vector field $\rig$ along $\nullhyp$. For convenience, we set $\rig=-\mu^2\cp_{\cu}$ (observe that $\mu^2\vert_{\nullhyp}\neq0$). The corresponding null metric hypersurface data
   \eqref{emhd} defined by  $\{\N,\gamma,\ellc,\elltwo\}$
  is 
\begin{equation}
\label{data:AdS}\gamma=\frac{\delta_{AB}}{\mu_{\N}^2}dy^A\otimes dy^B,\qquad \ellc=d\lambda,\qquad\elltwo=0,
\end{equation}
where  $\mu_{\N}\defi \iota^{\star}\mu=1+\frac{\Lambda}{12} \delta_{AB}y^Ay^B$. 
Observe that $\cp_{\lambda}\in\text{Rad}\gamma$ and $\ellc(\cp_{\lambda})=1$  
imply that $n=\cp_{\lambda}$. Moreover,  
$\bF=0$ and $\bsone=0$ (cf.\ \eqref{defF}-\eqref{sone})
and $\bU=0$ as a consequence of \eqref{defUtensor}.  
The tensor $\bY$ is obtained from \eqref{YtensorEmbDef}. A simple calculation gives 
\begin{align}
  \label{bY:AdS}  \bY&=-\frac{\Lambda \delta_{AB}}{6\mu_{\N}}\lp \lambda dy^A\otimes dy^B- 2  y^Bdy^A \otimes_s d\lambda \rp.
\end{align}
Cutting the spacetime across the hypersurface $\{ \cu=0\}$ leaves two spacetimes
  %
  $(\Mpm,g^{\pm})$ defined to be the regions $\cu\gtreqless0$ endowed with the metrics
\begin{align}
\hspace{-0.15cm}g^{\pm}=\frac{g^{\pm}_{Mk}}{\mu_{\pm}^2},\quad
\text{where}\quad g^{\pm}_{Mk}\defi -2d\cu_{\pm}d\cv_{\pm}+\delta_{AB}dx_{\pm}^Adx_{\pm}^B,\quad\mu_{\pm}\defi 1+\frac{\Lambda}{12}\lp \delta_{AB}x_{\pm}^Ax_{\pm}^B-2\cu_{\pm}\cv_{\pm}\rp.
\end{align}
Obviously, the boundaries are  $\nullhyp^{\pm}\equiv\{\cu_{\pm}=0\}$. These two regions can clearly be matched so that the original spacetime (containing no shell) is obtained. Moreover, since $\nullhyp^{\pm}$ are totally geodesic
  we know that multiple matchings can be performed. 
We therefore proceed as in Section \ref{sec:mult:mathcings}, i.e.\ we let the two embeddings $\iota^{\pm}$ be given by $\iota^{\pm}=\iota$ and take $\rig^-=-\mu_-^2\cp_{\cu_-}$, $\widetilde{\rig}^+=-\mu_+^2\cp_{\cu_+}$ as the riggings defining the no-shell matching, namely the matching for which $[\wt{\bY}]=0$. Any other possible matching will be ruled by a diffeomorphism $\nfi$ of $\N$ onto itself and it will correspond to a different rigging $\rig^+$ along $\nullhyp^+$. Specifically, the hypersurface data corresponding to the no-shell matching is $\D=\{\N,\gamma,\ellc,\elltwo,\bY\}$, where $\{\gamma,\ellc,\elltwo\}$ and $\bY$ are respectively given by \eqref{data:AdS} and \eqref{bY:AdS}, while the matter/gravitational content of the shell of any other possible matching (ruled by $\nfi$) is given by the the jump $[\bY]\defi \bY^+-\bY$ with
\begin{equation}
\bY^+\defi \frac{1}{2}\nfi^{\star}\Big( \iota^{\star}(\pounds_{\rig^+}g^+)\Big).
\end{equation} 
%
%
%
From Section \ref{sec:mult:mathcings}, we know that there is no need to compute the new rigging $\rig^+$ or its corresponding $\bY^+$ \tcb{to determine the jump $[\bY]$, which} is explicitly given by
  \eqref{[Y](n,n):abs:mult:triv}-\eqref{[Y](eA,eB):abs:mult:triv}. Consequently, we only need to worry about the diffeomorphism $\nfi$. The only restriction that $\nfi$ must satisfy is 
$\nfi^{\star}\gamma=\gamma$, which in coordinates reads
\begin{equation}
\frac{(\cp_{y^a}\nfi^A)(\cp_{y^b}\nfi^B)\delta_{AB}}{(1+\frac{\Lambda}{12}\delta_{IJ}\nfi^I \nfi^J)^{2}}=\frac{\delta_a^A\delta_b^B\delta_{AB}}{(1+\frac{\Lambda}{12}\delta_{IJ}y^Iy^J)^{2}}.
\end{equation}  
It follows that the components $\{\nfi^A\}$ cannot depend on the coordinate $\lambda$. 
In particular, if we let $\{h^A(y^B)\}$ be a set of functions such that $(a)$ the Jacobian matrix $\frac{\cp(h^2,...,h^{\n+1})}{\cp(y^2,...,y^{\n+1})}$ has non-zero determinant and $(b)$ $\{h^A(y^B)\}$ verify $(1+\frac{\Lambda}{12}\delta_{IJ}y^Iy^J)^{-2}\delta_{CD}=(1+\frac{\Lambda}{12}\delta_{IJ}h^I h^J)^{-2}(\cp_{y^C}h^A)(\cp_{y^D}h^B)\delta_{AB}$, any diffeomorphism $\nfi:\N\longrightarrow\N$ of the form
\begin{equation}
\label{nfi:paper:specific}
\begin{array}{clcl}
\nfi: & \N & \longrightarrow & \N\\
	   & (\lambda,y^B) & \longmapsto & \nfi(\lambda,y^B)\equiv(H(\lambda,y^B),h^A(y^B))
\end{array}
\end{equation}
with $\cp_{\lambda}H\neq0$ fulfils $\nfi^{\star}\gamma=\gamma$.
%
%
%
%
%
%
\tcb{A particular simple example is  $\{h^A=y^A\}$, but many more exist. In fact since the metric on any section of $\widetilde{\N}$ is
  of constant curvature, it is also maximally symmetric (and of dimension $\mathfrak{n}-1$) so $h^A(y^B)$ can depend
  on $\mathfrak{n}(\mathfrak{n}-1)/2$ arbitrary parameters.}
\tcb{For any possible choice of $\{h^A(y^B)\}$ and 
  an arbitrary step function $H(\lambda,y^A)$,}
the gauge parameters $z$ and $V$ are given by \eqref{z:and:V:ABS:fol} for $\{n=\cp_{\lambda},e_A=\cp_{y^A}\}$.
In the present case the tensor $\bs{\mathcal{Y}}$ is given by $\bs{\mathcal{Y}}=\frac{1}{n(H)}\nfi^{\star}\bY-\bY$ (cf. \eqref{calig:Y:when:no:shell}), so we \tcb{need to}  compute the pull-back $\nfi^{\star}\bY$. 
Defining $\ov{\mu}_{\N}\defi 1+\frac{\Lambda}{12}\delta_{AB}h^Ah^B$, from \eqref{bY:AdS} and \eqref{nfi:paper:specific} it is straightforward to get 
\begin{align}
\label{nfi:star:Y:AdS1}(\nfi^{\star}\bY)_{\lambda\lambda}&=0,\qquad (\nfi^{\star}\bY)_{\lambda y^B}=\frac{\Lambda\delta_{IJ}h^J}{6\ov{\mu}_{\N}} \frac{\cp h^I}{\cp y^B}\frac{\cp H}{\cp \lambda},\\
\label{nfi:star:Y:AdS2}(\nfi^{\star}\bY)_{y^A y^B}&=\frac{\Lambda\delta_{IJ}}{6\ov{\mu}_{\N}} \lp h^J\lp \frac{\cp H}{\cp y^A}\frac{\cp h^I}{\cp y^B}+\frac{\cp h^{I}}{\cp y^A}\frac{\cp H}{\cp y^B}\rp -H\frac{\cp h^{I}}{\cp y^A}\frac{\cp h^J}{\cp y^B}\rp
\end{align}
so that, multiplying \eqref{nfi:star:Y:AdS1}-\eqref{nfi:star:Y:AdS2} by $\frac{1}{n(H)}$ and subtracting $\bY$ (cf.\ \eqref{bY:AdS}) yields
\begin{align}
\label{Yraro:equation:yoquese}\mathcal{Y}_{\lambda\lambda}&=0,\qquad \mathcal{Y}_{\lambda y^B}=\frac{\Lambda\delta_{IJ}}{6}\lp \frac{h^J}{\ov{\mu}_{\N}}\frac{\cp h^I}{\cp y^B}-\frac{\delta_B^Iy^J}{\mu_{\N}}\rp,\\
\nn \mathcal{Y}_{y^Ay^B}&=\frac{\Lambda\delta_{IJ}}{6n(H)} \lp \frac{h^J}{\ov{\mu}_{\N}}\lp \frac{\cp H}{\cp y^A}\frac{\cp h^I}{\cp y^B}+\frac{\cp h^{I}}{\cp y^A}\frac{\cp H}{\cp y^B}\rp-\frac{H}{\ov{\mu}_{\N}}\frac{\cp h^{I}}{\cp y^A}\frac{\cp h^J}{\cp y^B} +\frac{\delta^{I}_{A}\delta^{J}_{B} \lambda}{\mu_{\N}} n(H) \rp.
\end{align}
Inserting these expressions into \eqref{[Y](n,n):abs:mult:triv}-\eqref{[Y](eA,eB):abs:mult:triv} and using $n=\cp_{\lambda}$, $\bsone=0$, $\bU=0$ together with the identity $\lp\pounds_n\nfi^{\star}\ellc\rp(e_A)=\lp\pounds_n dH\rp(e_A)= d(n(H))(e_A)=e_A(n(H))$ (here $\nfi^{\star}\ellc=dH$ by \eqref{nfi:star:hatellc=dH} and $\hatellc=\ellc$), one finds 
\begin{align}
\label{jumpY:adS:Mk:dS} [\Y_{\lambda\lambda}]= -\frac{ n(n(H))}{n(H)},\qquad
[\Y_{\lambda y^A}]=  \mathcal{Y}_{\lambda y^A} -\frac{\nabla^h_A(n(H))}{ n(H)},\qquad
 [\Y_{y^Ay^B}]= \mathcal{Y}_{y^Ay^B}-\frac{\nabla^h_A\nabla^h_BH}{n(H)},
\end{align}
which can be interpreted as the sum of the jump  corresponding to the matching of two regions of Minkowski across a null hyperplane  
%
%
(see \cite[Eq. (6.6)]{manzano2021null}) 
plus the contribution of the tensor $\bs{\mathcal{Y}}$. Observe that $\Lambda=0$ entails $\bs{\mathcal{Y}}=0$, so in this way we recover expressions (6.6) in \cite{manzano2021null} 
%
%
for the most general \tcb{planar} shell in the spacetime of Minkowski 

A direct computation that combines the definitions \eqref{def:rho:jA:p:abs}, \eqref{data:AdS} and  \eqref{jumpY:adS:Mk:dS} yields energy-density, energy flux and pressure (note that here we need to take $\fv=-1$)
\begin{align}
\hspace{-0.15cm}\rho=\mu^2_{\N}\delta^{AB}\lp \mathcal{Y}_{y^Ay^B}-\frac{\nabla^h_A\nabla^h_BH}{n(H)}\rp,\quad j=\mu^2_{\N}\delta^{AB}\lp \frac{\nabla^h_B(n(H))}{ n(H)} -\mathcal{Y}_{\lambda y^B} \rp \cp_{y^A},\quad
\label{p:dS:Mk:AdS} p=-\frac{ n(n(H))}{n(H)}.
\end{align}
Observe that only the pressure is 
independent of the value of the cosmological constant $\Lambda$ ($\rho$ and $j$ depend on the conformal factor $\mu_{\N}$ and on $\bs{\mathcal{Y}}$). 
The pressure $p$ takes the same value for the matchings of two regions of (anti-)de Sitter or Minkowski (in fact, $p$ coincides with the pressure obtained in \cite[Sect. 6]{manzano2021null}). In particular, in the case $h^A=y^A$ (i.e.\ when the mapping between null generators of both sides is trivial), then $\mathcal{Y}_{\lambda y^B}=0$ (cf.\ \eqref{Yraro:equation:yoquese}) and 
%
%
\eqref{p:dS:Mk:AdS} 
simplifies to
\begin{align}
\label{p:dS:Mk:AdS:2} \rho=\mu^2_{\N}\delta^{AB}\lp \mathcal{Y}_{y^Ay^B}-\frac{\nabla^h_A\nabla^h_BH}{n(H)}\rp,\qquad j= \mu^2_{\N}\delta^{AB}\frac{\nabla^h_B(n(H))}{ n(H)}  \cp_{y^A},\qquad p=-\frac{ n(n(H))}{n(H)}.
\end{align}
%
%
%
%



In the cut-and-paste constructions corresponding to constant-curvature spacetimes, the so-called \textit{Penrose's junction conditions} (see e.g. \cite{podolsky2017penrose}, \cite{podolsky2019cut}) impose a jump in the coordinates
across  the shell. This jump is of the form $\cv_+\vert_{\cu_+=0}=\cv_-+\mathcal{H}\lp x^A_-\rp\vert_{\cu_-=0}$. 
In the present case the matching embeddings $\phi^-=\iota$ and $\phi^+=\iota\circ\nfi$ are given by
\begin{align}
\nn 
\phi^-(\lambda,y^B)=\Big(\cu_-=0,\cv_-=\lambda,x_-^A=y^A\Big),\qquad \phi^+(\lambda,y^B) = \Big(\cu_+=0,\cv_+=H(\lambda,y^B),x_+^A=h^A(y^B)\Big),
\end{align}
so the step function corresponding to Penrose's jump is $H(\lambda,y^A)=\lambda+\mathcal{H}(y^A)$, $\mathcal{H}\in\Fcal(\N)$.
%
%
In order to recover 
such an $H$, one needs that there is no energy flux and no pressure on the shell. Indeed, imposing this in \eqref{p:dS:Mk:AdS:2} and integrating for $H$ yields $H(\lambda,y^A)=a\lambda+\mathcal{H}(y^A)$, \tcb{where $\mathcal{H}\in\Fcal(\N)$ and $a$ is a positive\footnote{In the present case, $\rig^-$ point inwards w.r.t.\ $(\Ml,g^-)$, so $\rig^+$ points outwards. 
  In these circumstances, condition $(ii)$ in Theorem \ref{thm:Best:THM:JC} imposes $\text{sign}(z)=-\text{sign}(\fv^+)\text{sign}(\fv)=-(+1)(-1)=+1$. This, together with \eqref{z:and:V:ABS:fol}, means that $n(H)>0$ necessarily, \tcb{i.e.\
    $a>0$}.} constant}. Thus, in this more general context with arbitrary cosmological constant, the Penrose's jump 
still describes either purely gravitational waves (when $\rho$, $j$ and $p$ are all zero)
or shells of null dust (when $j$ and $p$ vanish but $\rho\neq0$), analogously to what happened in 
%
%
\cite[Sect. 6]{manzano2021null} 
for the Minkowski spacetime.


\section*{Acknowledgements}

The authors acknowledge financial support under \tcr{Grant PID2021-122938NB-I00 funded by MCIN/AEI /10.13039/501100011033 and
by “ERDF A way of making Europe”} and SA096P20 (JCyL). M. Manzano also acknowledges the Ph.D. grant FPU17/03791 (Spanish Ministerio de Ciencia, Innovaci{\'o}n y Universidades).

\begingroup
\let\itshape\upshape
\bibliographystyle{acm}
\bibliographystyle{chronological}

\bibliography{ref}

\end{document}